\author{} % Enter your name
\title{} %Enter the title of your report 
\date{\today} % insert a specific date	
\newtheorem*{theorem*}{Theorem}
\newtheorem{definition}{Definition}
\newtheorem{remark}{Remark}
\newtheorem*{remark*}{Remark}
\newtheorem{assumption}{Assumption}
\newtheorem*{assumption*}{Assumption}
  \newtheorem{theorem}{Theorem}
\newtheorem{lemma}{Lemma}
\newtheorem*{notation*}{Notation}
\newcommand{\indep}{\perp \!\!\! \perp}
\providecommand{\keywords}[1]
{
  \small	
  \textbf{\textit{Keywords: }} #1
}
\title{Extreme Treatment Effect: Extrapolating Dose-Response Function Into Extreme Treatment Domain
} 
\author{Juraj Bodik$^{1 \footnote{ Email of the corresponding author: Juraj.Bodik@unil.ch}}$}
\date{%
    $^1$ {\small HEC, University of Lausanne, Switzerland} \\%
    }
\begin{document}

\pagenumbering{gobble}% Remove page numbers (and reset to 1)

\maketitle
\begin{abstract}
The potential outcomes framework serves as a fundamental tool for quantifying causal effects. The average dose–response function (also called the effect curve), denoted as $\mu(t)$, is typically of interest when dealing with a continuous treatment variable (exposure). The focus of this work is to determine the impact of an extreme level of treatment, potentially beyond the range of observed values—that is, estimating $\mu(t)$ for very large $t$. Our approach is grounded in the field of statistics known as extreme value theory. We outline key assumptions for the identifiability of the extreme treatment effect. Additionally, we present a novel and consistent estimation procedure that can potentially reduce the dimension of the confounders to at most $3$. This is a significant result since typically, the estimation of $\mu(t)$ is very challenging due to high-dimensional confounders. In practical applications, our framework proves valuable when assessing the effects of scenarios such as drug overdoses, extreme river discharges, or extremely high temperatures on a variable of interest.
\end{abstract}
%TC:ignore
\keywords{Causal inference, potential outcomes, extreme value theory, extreme causal effect, dimension reduction}
%TC:endignore
  
\newpage
%\listoffigures
%\newpage
%\listoftables
%\newpage
%\listofalgorithms % List of algorithms in pseudocode format
%\newpage
%\listoflistings % List of algorithms in code format
%\newpage

\pagenumbering{arabic}% Arabic page numbers (and reset to 1)

% This is how you can organize your document
\section{Introduction}

Quantifying causal effects is a fundamental problem in many diverse fields of research \citep{rosenbaum1983central, holland1986statistics, robins2000marginal, Imai_misunderstandings}. Some prevalent examples include the impact of smoking on developing cancer \citep{imai2004causal}, the influence of education on increased wages \citep{heckman2018returns}, the effects of various meteorological factors on precipitation \citep{hannart2018probabilities} or the effect of policy design on various economy factors \citep{low2017use}.  

The potential outcomes framework \citep{rubin2005} has been the fundamental language to express the notion of the causal effect. The crux of this framework lies in acknowledging that, in any given scenario, multiple potential outcomes exist based on different interventions or exposures \citep{Rubin}. This perspective challenges researchers to consider not only the observed outcome but also the unobserved outcomes that could  materialize under alternative conditions. The typical focus in causal inference is on the binary treatment variable (exposure). However, binary treatment is unable to differentiate between different levels of the treatment variable. This issue can be partially solved by assuming a continuous treatment. For example,  \cite{Kennedy_2016} and \cite{westling2020causal} proposed an estimator based on local linear smoothing. \cite{galagate2016causal} discussed the combination of parametric and non-parametric models for the effect curve estimation. 
\cite{rubin2006extending} and \cite{neugebauer2007nonparametric} utilized marginal structural causal models framework. \cite{zhang2023exploring} and\cite{Bica2020_generative_adversarial_networks} applied neural networks for the effect estimation. However, typical methods that work with a continuous treatment variable are not well suited for the inference that goes beyond the observed range of the data.  

In this paper, we are interested in the \textit{extreme treatment effect}; that is, the quantity of interest is the effect of an extreme level of treatment--outside of the observed range. Consider the following example from medicine: assume that the data of a study (either randomized or observational) is available to us, with the health status ($Y$) of patients and the corresponding dose of a medicine administrated ($T$). The data available only depicts $Y$ when $T \leq 20\, mg$. What if then, we would like to know the change in $Y$ when the dose is increased to $T=25mg$? Answering this inquiry is hard, since we have zero data to answer it (this might be considered unethical to give such a dose to a patient), and we must rely on strong unverifiable assumptions and extrapolation. Additionally, in the case of observational studies, high-dimensional confounders pose yet another significant challenge.

The connection between causal inference and extreme value theory has been receiving increasing interest. \cite{zhang2018extremal} and \cite{Deuber} 
analyze the Extreme Quantile Treatment Effect (EQTE) of a binary treatment on a continuous, heavy-tailed outcome. The paper authored by \cite{huang2022extreme} develops a method to estimate the EQTE and the Extreme Average Treatment Effect (EATE) for continuous treatment.  \cite{bodik} developed a framework for Granger-type causality in extremes. Some other approaches for causal discovery using the extreme values include
 \cite{Gnecco, Pasche, krali2023heavytailed, bodik2023structural}. \cite{EngelkeGraphicalModels} propose graphical models in the context of extremes.  \cite{Naveau} analyzed the the effect of climate change on weather extremes. \cite{courgeau2021extreme} proposed a framework for extreme event propagation.  
\cite{10.1214/20-AOAS1355} study probabilities of necessary and sufficient causation as defined in the counterfactual theory using the multivariate generalized Pareto distributions. We contribute to this growing literature and provide a theoretically well-founded approach for estimation and inference of the extremal treatment effect. 

Recent advancements in machine learning research have spotlighted the extrapolation capabilities of different models \citep{dong2023first, christiansen2021causal, saengkyongam2024identifying,chen2021domain}. For instance, 'engression,' as proposed by \cite{XinweiShen2023engression}, presents a framework that serves as an extrapolating alternative to regression-based neural networks. Similarly, \cite{pfister2024extrapolationaware} explored extrapolation of conditional expectations by assuming that the maximum derivative occurs within the observed range of support. While these approaches aren't inherently causal, they can be construed as such under certain assumptions. Despite achieving cutting-edge performance, these methods encounter two primary limitations: difficulty handling multiple confounders and reliance on often uninterpretable extrapolation assumptions. In contrast, our framework focuses on the causal aspect of the extrapolation, and can handle a large number of confounders. This is achieved under weak assumptions commonly embraced in the extreme value theory. Moreover, our framework relies on strong yet more interpretable extrapolation assumptions. While our primary focus is on linear regression, our approach has the flexibility to integrate with various machine learning methodologies, potentially improving overall performance. However, this integration may come at the expense of losing interpretability for certain assumptions.

As for the application in this work, we consider dataset describing extreme precipitation and river discharge levels in Switzerland. A historical record indicates a maximum precipitation level near Zurich's meteo-station on 6.6.2002, reaching an extreme of $111 \frac{mm}{m^2}$. This event coincided with the river Reuss (near Zurich) nearly breaching its banks, causing damage to adjacent settlements. We focus on the following question: how would the river discharge alter if the precipitation on that day were to reach $120 \frac{mm}{m^2}$? Would the river breach its banks under such circumstances? We anticipate that the effect of precipitation on river discharge may vary between the body of the distribution and its tail.  This anticipation stems from several factors: During periods of light to moderate precipitation, the ground absorbs a significant portion of the rainfall, reducing its contribution to the river flow. In contrast, during severe rainstorms, a larger proportion of the precipitation directly contributes to the river flow, potentially resulting in a more pronounced impact on discharge levels. Therefore, we expect to observe differing, potentially more severe, impacts of extreme precipitation on discharge levels compared to moderate events, highlighting the importance of understanding such dynamics across varying levels of precipitation intensity.

The structure of this paper is as follows: we introduce the notation and preliminaries on causal inference and extreme value theory in Section~\ref{section_problem_statement}. In Section~\ref{Section_tail_framework}, we present the main assumptions along with some simple theoretical implications. In Section~\ref{section_algorithm}, we introduce a practical statistical methodology for estimating an extreme treatment effect from data. Section~\ref{section_simulations} explains our methodology using a simple simulated example and discusses simulation results. In Section~\ref{section_application2}, we explore the application of inferring the effect of extreme precipitation on river discharge levels.

This manuscript includes five appendices: Appendix~\ref{section_application} introduces a second real-world application regarding the compressive strength of concrete, which has been relocated to the appendix for the sake of brevity. Appendix~\ref{Appendix_Simulations} contains a detailed simulation study, exploring the methodology under various conditions, including 1) a varying dimension $dim(\textbf{X})$, 2) comparison with classical methods from the literature, 3) a hidden confounder, 4) different dependence structures, and 5) varying dose-response functions. Appendix~\ref{Appendix_application2} contains a detailed inference process for the river-application described in Section~\ref{section_application2}. Appendix~\ref{appendix_consistency} provides a more detailed explanation of the bootstrap algorithm used in the inference process and presents the theory behind the consistency result. Finally, proofs can be found in Appendix~\ref{Section_proofs}.

\section{Problem statement, notation and preliminaries}
\label{section_problem_statement}
\begin{wrapfigure}{r}{3cm}
\includegraphics[scale=0.4]{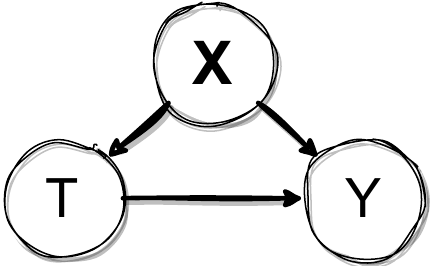}
\label{Simple_graph_confoudner}
\end{wrapfigure} 
Following \cite{Hirano}, we define dose–response functions in the potential outcomes framework. We consider the triplet of $(\textbf{X}, T, Y)$, where $\textbf{X} \in \mathcal{X} \subseteq \mathbb{R}^d$, $T \in \mathcal{T} \subseteq \mathbb{R}$, and $Y \in \mathcal{Y}\subseteq\mathbb{R}$ denote the confounders, treatment, and response variable, respectively, in an observational causal study. We assume a continuous treatment setting, where $\mathcal{T} = (\tau_L, \tau_R)$ for some $\tau_L, \tau_R\in\overline{\mathbb{R}}:=\mathbb{R}\cup \{-\infty\}\cup \{\infty\}$.  Here, $\tau_R\in \mathbb{R}\cup \{\infty\}$ is the right endpoint of the support of $T$. For simplicity, assume that the right endpoint of the support of $T\mid \textbf{X}=\textbf{x}$  is equal to $\tau_R$ for all $\textbf{x}\in\mathcal{X}$. 
Let $Y(t)$ be a set of potential outcomes corresponding to the hypothetical world in which $T = t$ is set deterministically. The fundamental problem of causal inference arise, since in the real world, each individual can only receive one treatment level $T$ and we only observe the corresponding outcome $Y = Y (T)$. 

We observe a random sample $\{\textbf{X}_i, T_i, Y_i\}_{i=1}^n$ of size $n\in\mathbb{N}$. It follows from our setting that, given the observed covariates, the distribution of potential outcome for one unit is assumed to be unaffected by the particular treatment assignment of another unit (Stable Unit Treatment Value Assumption). We utilize the letter $H$ for a (possible) hidden confounder. 
We denote vectors by bold letters. For any pair of continuous random variables $Z$ and $Z'$, we denote its probability distribution function $P_Z(\cdot)$, density function $p_Z(\cdot)$ and a conditional density $p_{Z\mid Z'}(\cdot\mid \cdot)$.

The average dose–response function and patient-specific dose–response function are defined as $$\mu(t)=\mathbb{E}[Y(t)], \,\,\,\,\,\,\,\,\,\mu_\textbf{x}(t)=\mathbb{E}[Y(t)\mid \textbf{X}=\textbf{x}], $$respectively. Although the term ``dose'' is typically  associated with the medical domain, we adopt here the term dose-response learning in its more general setup: estimating the causal effect of a treatment on an outcome across different (continuous) levels of treatment. Our objective is to learn the behaviour of $\mu(t)$ or $\mu_\textbf{x}(t)$ for $t\approx\tau_R$. 

For a pair of real functions $f_1, f_2$, we employ the following notation: $f_1(t)\sim f_2(t)$ for $t\to\tau_R$, if $\lim_{t\to\tau_R}\frac{f_1(t)}{f_2(t)} = 1$. Sequence of random variables approach same distribution as sample size grows. In the remaining of the paper, we assume that $\mu(t), \mu_{\textbf{x}}(t)$ are continuous on some neighbourhood of $\tau_R$ for all $\textbf{x}$.

\subsection{Classical assumptions}\label{section2.1}

Two classical assumptions in the literature for identifying the average dose–response function are:
\begin{itemize}
    \item \textbf{Unconfoundedness}: Given the observed covariates, the distribution of treatment is independent of potential outcome. Formally, we have $T \indep Y(t) \,|\, \textbf{X}, \, \forall t \in \mathcal{T}$, where $\indep$ denotes the independence of random variables. 
    \item \textbf{Positivity}: $p_{T\mid \textbf{X}} (t\mid \textbf{x})>0$ for all $t,\textbf{x}$, where $p_{T\mid \textbf{X}}$ represents the conditional density function of the treatment given the covariates.
\end{itemize}

Under these assumptions, \cite{Hirano} showed the identifiability of the dose-response function via  \begin{equation}\label{eq49}
    \mu_\textbf{x}(t) = \mathbb{E}[Y | T = t, \textbf{X} = \textbf{x}], \,\,\,\, \text{and} \,\,\,\,\, \mu(t) = \mathbb{E}[\mu_{\textbf{X}}(t)] = \mathbb{E}[\mathbb{E}[Y | T = t, \textbf{X} = \textbf{x}]],
\end{equation} 
where the inner expectation is taken over $Y$ and the outer expectation is taken over $\textbf{X}$. 

Even if we are not willing to rely on the unconfoundedness assumption, it may often still be of interest to estimate the function $t\to \mathbb{E}[\mathbb{E}[Y | T = t, \textbf{X} = \textbf{x}]]$ as an adjusted measure of association, defined purely in terms of observed data. It may be interpreted as the average value of $Y$ in a population with exposure fixed at $T=t$ but otherwise characteristic of the study population with respect to $\textbf{X}$ \citep{gill2001causal, Kennedy_2016, westling2020causal}.   

When the positivity assumption is violated, a different type of extrapolation arises \citep{king_2006}, which is different from the one considered in this work. This scenario occurs when the distributional support of variable $T$ varies across different levels of confounding variables $\textbf{X}$. Various approaches have been devised to confront this challenge, including propensity thresholding \citep{crump2009dealing}.

Several algorithms were proposed to estimate the function $\mu(t)$ in the body of the distribution of $T$. State-of-the-art-methods estimate $\mu(t)$ via $\sum_{i: T_i\approx t}w_iY_i$ for appropriate weights $w_i$ what serve to ``erase'' the confounding effect of $\textbf{X}$ \citep{ai2021estimation, Continuous_Treatment_Effect_Estimation_Using_Balancing, Continuous_Treatment_Effect_adversarial_deconfounding, Kennedy_2016, kreif2015evaluation}.
Typically, the estimation of $\mu(t)$ involves a two-step procedure \citep{imai2004causal, Hirano, zhao2020propensity}. In the first step, we model the distribution $T\mid \textbf{X}$, also known as the propensity. In the second step, we model the distribution of $Y\mid T$, suitably adjusted by the propensity, with the aim of mitigating the confounding effect of $\textbf{X}$.

\cite{Hirano} introduced a generalized propensity score (GPS) defined as $e(t,\textbf{x}):=p_{T\mid \textbf{X}}(t\mid \textbf{x})$. One common approach is to model $p_{T\mid \textbf{X}}$ using a Gaussian model. In binary treatment cases (when $\mathcal{T} = \{0,1\}$), the propensity score is a probability denoted as $e(1,\textbf{X}) = P(T=1\mid \textbf{X})$ and is typically modeled using logistic regression. Subsequently, we define weights $w_i$ as $w_i:=\frac{1}{\hat{e}(T_i, \textbf{X}_i)}$ or stabilized weights $w_i:=\frac{\hat{p}_T(T_i)}{e(T_i, \textbf{X}_i)}$, where we additionally model and estimate the marginal distribution of $T$, denoted as $\hat{p}_T$.

In a similar vein, \cite{imai2004causal} introduced the concept of a ``uniquely parameterized propensity function assumption,'' which states that for every value of $\textbf{X}$, there exists a unique finite-dimensional parameter $\theta \in \Theta$ such that $e(\cdot\mid \textbf{X})$ depends on $\textbf{X}$ only through $\theta(\textbf{X})$. Since $\theta(\textbf{X})$ contains all information about the confounding, we only model $\mathbb{E}[Y\mid T=t, \theta(\textbf{X})=\textbf{s}]$ instead of $\mathbb{E}[Y\mid T=t, \textbf{X}=\textbf{x}]$ in equation (\ref{eq49}). In a vast majority of applications, $\theta(\textbf{X})$ corresponds to the parameters of a normal distribution. 

\subsection{Extreme value theory}
\label{Section_preliminaries_EVT}
When dealing with extreme values, it is easy to introduce a strong selection bias. A naive approach for estimating $\mu(t)$ for large values of $t$ might involve only considering observations where $t$ exceeds a certain threshold, denoted as $\tau$, and computing $\mu(t)$ using conventional techniques, while disregarding all values below this threshold.  This is a typical approach of many classical algorithms, which estimate $\mu(t)$ by focusing solely on a local neighborhood of observations around $t$. However, this approach can introduce a significant selection bias. In its extreme manifestation, all observations where $t$ exceeds $\tau$ might exclusively pertain to men, for instance. The selection bias arises if the effect of $T$ on $Y$ differs between men and women (see Figure~\ref{Plot1} in Section~\ref{Simulations_simple_example} with $\tau = 3$). We employ the Extreme Value Theory technique known as peaks-over-threshold to tackle this issue.

Extreme value theory is a sub-field of statistics that explores techniques for extrapolating the behavior (distribution) of $T$ beyond the observed values. A limiting theory posits that the tail of $T$ can be well approximated by the Generalized Pareto Distribution (GPD), as detailed in the following explanation.

Consider a sequence $(T_i)_{i\geq 1}$ of independent and identically distributed (iid) random variables with a common distribution $F$, and $M_n=\max_{i=1, \dots, n} T_i$ represents the running maximum. It is well known \citep{resnick2008extreme} that if (\textbf{Condition 1:}) there exists a non-degenerate distribution $G$ such that $\frac{M_n-b_n}{a_n}\overset{D}{\to}G$ as $n\to\infty$ for some sequences of constants $\{a_n, b_n\}_{n=1}^\infty\in\mathbb{R}_+^\mathbb{N}\times\mathbb{R}^\mathbb{N}$ , then $G$ falls within the Generalized Extreme Value (GEV) distribution family. Condition~1 can equivalently be expressed using the following definition:
\begin{definition}(\cite{iii1975statistical})
    The distribution $F$ is in the max-domain of attraction of a generalized extreme value distribution (notation $F\in MDA(\gamma)$) if there exist $\gamma\in\mathbb{R}$ and sequences of constants $a_n>0, b_n\in\mathbb{R}, n=1,2,\dots$ such that 
    $
    \lim_{n\to\infty} F^n(a_nx + b_n) = exp(- (1+\gamma x)^{-1/\gamma})
    $
for all $x$ satisfying $1+\gamma x>0$. In case $\gamma=0$, the right side is interpreted as $ exp(- e^{-x})$. The parameter $\gamma$ is called the extreme value index (EVI). 
\end{definition}

This condition is mild as it is satisfied for most standard distributions, for example, the normal, Student-t and beta distributions. The following crucial theorem states that the tail of $T$ can be well approximated by GPD if the distribution of $T$ belongs to $MDA(\gamma)$.

\begin{theorem*}[Theorem 4.1 in \cite{Coles}]\label{ThmColes}
Let $T\overset{}{\sim} F\in MDA(\gamma)$. Then, for large $\tau\approx \tau_R$, there exist $\sigma>0, \gamma\in\mathbb{R}$ such that the distribution of  $T-\tau\mid T>\tau$ is approximately $GPD(0, \sigma, \gamma)$. 
\end{theorem*}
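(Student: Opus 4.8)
The plan is to reduce the statement to the classical Pickands--Balkema--de Haan characterization of the max-domain of attraction in terms of the tail function $\bar F := 1 - F$, and then to read off the conditional excess distribution by an elementary change of variables. Throughout, write $\tau_R$ for the right endpoint of the support of $F$, and let $\bar F_\tau(y) := \pr(T - \tau > y \mid T > \tau) = \bar F(\tau + y)/\bar F(\tau)$ denote the excess survival function above the level $\tau$, defined for $0 \le y < \tau_R - \tau$.

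First I would translate the defining limit $F^n(a_n x + b_n) \to \exp\big(-(1+\gamma x)^{-1/\gamma}\big)$ into an equivalent statement about $\bar F$ alone. Taking logarithms and using $-\log F(u) \sim \bar F(u)$ as $u \to \tau_R$ (valid since $F(u) \to 1$, so $-\log(1-\bar F(u)) = \bar F(u) + o(\bar F(u))$), the $MDA(\gamma)$ condition becomes $n\,\bar F(a_n x + b_n) \to (1+\gamma x)^{-1/\gamma}$. Passing through the generalized inverse $U(t) := F^{\leftarrow}(1 - 1/t)$, this is in turn equivalent to the first-order condition of extreme value theory: there is a positive auxiliary function $a(\cdot)$ with
$$\lim_{t \to \infty} \frac{U(tx) - U(t)}{a(t)} = \frac{x^\gamma - 1}{\gamma} \qquad (x > 0),$$
the right-hand side read as $\log x$ when $\gamma = 0$. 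Inverting back to the tail yields the form I actually need,
$$\lim_{\tau \to \tau_R} \frac{\bar F(\tau + x\, a(\tau))}{\bar F(\tau)} = (1+\gamma x)^{-1/\gamma} \qquad \text{for all } x \text{ with } 1 + \gamma x > 0,$$
with the right side interpreted as $e^{-x}$ when $\gamma = 0$; for $\gamma > 0$ one may simply take $a(\tau) = \gamma \tau$, in which case this is just Karamata regular variation of $\bar F$ with index $-1/\gamma$, while the cases $\gamma = 0$ and $\gamma < 0$ invoke de Haan's $\Pi$-variation and regular variation at the finite endpoint $\tau_R$, respectively.

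Next I would combine the last display with the trivial identity $\pr\big((T - \tau)/a(\tau) > x \,\big|\, T > \tau \big) = \bar F(\tau + x\, a(\tau))/\bar F(\tau)$, which gives
$$\pr\big( (T - \tau)/a(\tau) > x \,\big|\, T > \tau \big) \;\longrightarrow\; (1 + \gamma x)^{-1/\gamma}, \qquad \tau \to \tau_R,$$
and the right-hand side is precisely the survival function of $GPD(0, 1, \gamma)$. Equivalently, writing $\sigma = \sigma(\tau) := a(\tau)$ and $y = x\sigma$, the conditional excess distribution function of $T - \tau$ given $T > \tau$ is, for $\tau$ close to $\tau_R$, well approximated by that of $GPD(0, \sigma, \gamma)$; since the limiting distribution function is continuous and monotone, this approximation holds uniformly in $y$ (P\'olya's theorem), which is the sense in which the word ``approximately'' is to be read in the finite-$\tau$ statement of the theorem.

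The hard part is the middle step: establishing the equivalence between the block-maxima $MDA(\gamma)$ condition and the threshold-exceedance tail condition together with its scale function $a(\cdot)$, uniformly across the three sign regimes of $\gamma$. This is exactly where the non-trivial analytic machinery enters --- Karamata's theorem, the Karamata/de Haan representation theorems for slowly, regularly, and $\Pi$-varying functions, and the uniform convergence of regularly varying functions on compact sets --- whereas everything before and after this equivalence is bookkeeping. In a self-contained account one would either cite this equivalence (e.g.\ from de Haan and Ferreira, or \citealp{resnick2008extreme}) or prove it directly through the quantile-function formulation displayed above, inverting the first-order condition on $U$ back to a statement on $\bar F$.
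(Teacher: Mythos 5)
The paper itself offers no proof of this statement: it is quoted verbatim as background, with the justification delegated to Theorem 4.1 of \citep{Coles} (and, in rigorous form, to Pickands and Balkema--de~Haan). So the only meaningful check is whether your sketch is a correct route to the result, and it is: passing from $F^n(a_nx+b_n)\to\exp(-(1+\gamma x)^{-1/\gamma})$ to $n\bar F(a_nx+b_n)\to(1+\gamma x)^{-1/\gamma}$ via $-\log F(u)\sim\bar F(u)$, translating this into the first-order condition on $U(t)=F^{\leftarrow}(1-1/t)$, inverting to the tail limit $\bar F(\tau+x\,a(\tau))/\bar F(\tau)\to(1+\gamma x)^{-1/\gamma}$, and then reading off the conditional excess survival function is exactly the standard Pickands--Balkema--de~Haan argument, of which Coles's own ``outline proof'' is the heuristic version. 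Your closing uniformity remark (continuity of the limit plus monotonicity, i.e.\ P\'olya-type uniform convergence) is the correct reading of ``approximately.'' Two small caveats. First, the heavy lifting is concentrated in the equivalence MDA$(\gamma)$ $\Leftrightarrow$ first-order condition on $U$ $\Leftrightarrow$ tail condition with auxiliary function, which you cite rather than prove; that is a legitimate choice here, but it means your write-up is a reduction to de~Haan--Ferreira Theorem~1.1.6 rather than a self-contained proof. Second, a notational slip: the scale function appearing in the tail formulation is $a(1/\bar F(\tau))$ (the auxiliary function of $U$ evaluated at $1/\bar F(\tau)$), not the same $a(\cdot)$ evaluated at $\tau$; your statement ``for $\gamma>0$ take $a(\tau)=\gamma\tau$'' is correct for the tail-scale function, but you should distinguish the two arguments to avoid conflating the quantile-side and threshold-side auxiliary functions.
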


GPD distribution has three parameters, namely a location $\tau\in\mathbb{R}$, scale $\sigma>0$ and a shape $\gamma\in\mathbb{R}$. Its distribution function takes a form: 
\begin{equation*}
  H(x) = \begin{cases}
  & 1-\left(1+\gamma\frac{x-\tau}{\sigma}\right)^{-1/\gamma}   \text{,    }\gamma\neq 0,\\ 
 &  1 - \exp\left(-\frac{x-\tau}{\sigma}\right)  \text{,    }\gamma= 0,
\end{cases}
\end{equation*}
defined on the support $[\tau - \sigma/\gamma, \infty), (-\infty, \infty), (-\infty, \tau - \sigma/\gamma]$ for cases $\gamma<0, \gamma=0, \gamma>0$ respectively. Cases when $\gamma>0, \gamma=0$, and $\gamma<0$ correspond to the Fréchet, Gumbel, and Weibull distributions, respectively \citep{fisher_tippett_1928, resnick2008extreme}.  

Note that when the distribution of $T - \tau$ given $T > \tau$ follows a GPD with parameters $0$, $\sigma$, and $\gamma$, an equivalent assertion can be made that $T$ given $T > \tau$ follows a GPD with parameters $\tau$, $\sigma$, and $\gamma$. We denote $\theta = (\tau, \sigma, \gamma)^\top$. 

\begin{assumption}\label{assumption_max_domain}
We assume that the distributions $T$ and $T\mid \textbf{X}$  are in the max-domain of attraction of a generalized extreme value distribution. 
\end{assumption}

\section{Our tail framework}
\label{Section_tail_framework}
We aim to model the effect of a treatment variable $T$ in the context of extreme values of $T$. However, it's essential to approach the term 'extreme' with caution, considering the discrepancy between real-world implications and the interpretations within our model. Take, for instance, if $T$ represents a drug dose in milligrams; our model operates under the assumption that $T$ tends toward $\tau_R$, which can be potentially larger than several kilograms. While this mathematical abstraction lacks practical significance—given that administering several kilograms of a drug is physically implausible—the model does include values of $T$ arbitrarily large. Of course, we do not claim that our model performs well when $T$ equals several kilograms, but only that it performs well for $T$ in the 'reasonable neighborhood' of the observed values, where the effect of $T$ is expected to be 'extrapolatable' from the observed values. 

\subsection{Assumptions}
We are not aiming to estimate the complete $\mu(t)$ but rather only its values for large $t$. Therefore, we can relax the classical assumptions for the identification of $\mu(t)$; what we specifically need is their tail counterparts. 
\begin{assumption}[Unconfoundedness in tail]\label{assumption_unconfoundness_tail} 
For all $\textbf{x}\in\mathcal{X}$ holds
\begin{equation}\label{Unconfoundedness in tail1}\tag{Unconfoundedness in tail}
\mathbb{E}[Y(t)\mid \textbf{X}=\textbf{x}] \sim \mathbb{E}[Y\mid \textbf{X}=\textbf{x}, T=t]\,\,\,\,\,\,\,\,\,\,\,\,\text{ as }t\to\tau_R.
\end{equation}
We always assume the existence of the expected values. 
\end{assumption}
Rather than simply writing $t\to\tau_R$, we frequently opt for the notation $t(\textbf{x})\to\tau_R$ to emphasize its dependence on the random variable $\textbf{X}$. 
Note that Assumption~\ref{assumption_unconfoundness_tail} is strictly less restrictive than the Unconfoundedness assumption introduced in Section~\ref{section2.1}. 
\begin{remark} \label{remark1}
To provide some intuition regarding the permissiveness of Assumption~\ref{assumption_unconfoundness_tail}, we rephrase our framework in the language of structural causal models (SCM, \cite{Pearl_book}). Assume that the data-generating process of the output $Y$ is as follows: $$Y = f_Y(T,\textbf{X}, H, \varepsilon),\,\,\,\,\,\, \varepsilon\indep (T,\textbf{X},H).$$
Here, $H$ represents a (possible) latent confounder of $T$ and $Y$. Then, the dose-response function has a form $\mu(t) = \mathbb{E}[f_Y(t, \textbf{X}, H, \varepsilon)]$ where the expectation is taken with respect to joint $(\textbf{X},H, \varepsilon)$. 

Assumption~\ref{assumption_unconfoundness_tail} can be rephrased as follows: 
There exist a function $\tilde{f}_Y$ such that 
\begin{equation}
    \label{SCM23}
    \tag{Unconfoundedness in tail in SCM}
    f_Y(t, \textbf{x}, h, e) \sim \tilde{f}_Y(t,\textbf{x},e) \text{  as }t\to\tau_R,
\end{equation}
for all admissible values of $\textbf{x},h,e$.  
This assumption is valid for example in additive models; that is, when  $f_Y(t,\textbf{x},h,e) = \tilde{f}_Y(t,\textbf{x},e) + g(h)$ for some functions $\tilde{f},g$. 
\end{remark}

Additionally, we restate the positivity assumption in the context of its tail counterpart.
\begin{assumption}[Positivity in tail]\label{assumption_positivity_tail}  $p_{T\mid \textbf{X}} (t\mid \textbf{x})>0$ for all $\textbf{x}$ and all $t>t_0$ for some $t_0\in\mathcal{T}$, where $p_{T\mid \textbf{X}}$ represents the conditional density function of the treatment given the covariates.
\end{assumption}
Note that this assumption is weaker than Assumption~\ref{assumption_max_domain}.
\subsection{Adjusting only for $\theta(\textbf{X})$ }

The following lemma serves as a tail counterpart of an identifiability for the classical framework. It states that, under Assumptions  \ref{assumption_unconfoundness_tail} and \ref{assumption_positivity_tail},  the tail of the dose-response function is identifiable from the observational distribution via the propensity function $\pi_0(t,\textbf{x})$. 

\begin{lemma}[Identifiability]\label{lemma1}
    Under Assumptions \ref{assumption_unconfoundness_tail} and \ref{assumption_positivity_tail} holds 
    \begin{equation}
         \mu(t)  \sim \mathbb{E}\{\pi_0(T, \textbf{X})Y\mid T=t\}, \,\,\,\,\,\,\,\text{ as } t\to\tau_R,
    \end{equation}
 where $\pi_0(t,x) := \frac{p_T(t)}{p_{T\mid \textbf{X}}(t\mid \textbf{x})}$ is the (stabilized) propensity function.
\end{lemma}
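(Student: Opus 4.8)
The strategy is to chain together the two tail assumptions with the classical propensity-reweighting identity, being careful that all equalities are only asymptotic ($\sim$ as $t\to\tau_R$) and that the positivity we use holds only in the tail. First, I would start from the right-hand side and condition on $\textbf{X}$: write
\begin{equation*}
\mathbb{E}\{\pi_0(T,\textbf{X})Y\mid T=t\} = \mathbb{E}\big[\,\mathbb{E}\{\pi_0(t,\textbf{X})Y\mid T=t,\textbf{X}\}\,\big|\,T=t\,\big],
\end{equation*}
where the outer expectation is over $\textbf{X}\mid T=t$. Since $\pi_0(t,\textbf{x})=p_T(t)/p_{T\mid\textbf{X}}(t\mid\textbf{x})$ is a deterministic function of $(t,\textbf{x})$, it factors out of the inner conditional expectation, giving $\mathbb{E}\{\pi_0(T,\textbf{X})Y\mid T=t\}=\mathbb{E}\big[\pi_0(t,\textbf{X})\,\mathbb{E}(Y\mid T=t,\textbf{X})\,\big|\,T=t\big]$.

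Next I would rewrite the outer expectation over $\textbf{X}\mid T=t$ as an integral against the conditional density $p_{\textbf{X}\mid T}(\textbf{x}\mid t)$ and apply Bayes' rule, $p_{\textbf{X}\mid T}(\textbf{x}\mid t) = p_{T\mid\textbf{X}}(t\mid\textbf{x})\,p_{\textbf{X}}(\textbf{x})/p_T(t)$. The factor $\pi_0(t,\textbf{x})=p_T(t)/p_{T\mid\textbf{X}}(t\mid\textbf{x})$ exactly cancels the reweighting, so that
\begin{equation*}
\mathbb{E}\{\pi_0(T,\textbf{X})Y\mid T=t\} = \int \mathbb{E}(Y\mid T=t,\textbf{X}=\textbf{x})\,p_{\textbf{X}}(\textbf{x})\,d\textbf{x} = \mathbb{E}\big[\mathbb{E}(Y\mid T=t,\textbf{X})\big],
\end{equation*}
where the outer expectation is now over the marginal law of $\textbf{X}$. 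This manipulation is valid for every $t>t_0$ because Assumption~\ref{assumption_positivity_tail} guarantees $p_{T\mid\textbf{X}}(t\mid\textbf{x})>0$ there, so $\pi_0(t,\textbf{x})$ is finite and Bayes' rule applies without dividing by zero. Finally, I would invoke Assumption~\ref{assumption_unconfoundness_tail}: $\mathbb{E}(Y\mid T=t,\textbf{X}=\textbf{x})\sim\mathbb{E}(Y(t)\mid\textbf{X}=\textbf{x})$ as $t\to\tau_R$, and take expectations over $\textbf{X}$; since $\mu(t)=\mathbb{E}[\mathbb{E}(Y(t)\mid\textbf{X})]$ by definition of the average dose–response function, we conclude $\mathbb{E}\{\pi_0(T,\textbf{X})Y\mid T=t\}\sim\mu(t)$.

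The main obstacle is the last step: passing the asymptotic equivalence $\sim$ through the outer expectation over $\textbf{X}$ is not automatic, since $\sim$ is a pointwise (in $\textbf{x}$) statement and the ratio $\mathbb{E}(Y\mid T=t,\textbf{X}=\textbf{x})/\mathbb{E}(Y(t)\mid\textbf{X}=\textbf{x})$ could converge to $1$ non-uniformly in $\textbf{x}$, or the integrands could fail to be dominated. I would handle this either by assuming (implicitly, as the paper seems to do via ``we always assume the existence of the expected values'') a uniform-in-$\textbf{x}$ version of Assumption~\ref{assumption_unconfoundness_tail} together with an integrable envelope, so that dominated convergence applies to the ratio, or by strengthening the statement to require $\mathbb{E}(Y\mid T=t,\textbf{X}=\textbf{x}) - \mathbb{E}(Y(t)\mid\textbf{X}=\textbf{x})$ to be uniformly negligible relative to $\mu(t)$. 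A secondary minor point worth a sentence is measurability/integrability of $\pi_0(t,\textbf{X})Y$ conditional on $T=t$, which again follows from the positivity-in-tail assumption and the standing assumption that the relevant expectations exist.
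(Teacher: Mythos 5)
Your proposal is correct and is essentially the paper's own argument run in reverse: the paper starts from $\mu(t)=\mathbb{E}[\mathbb{E}(Y(t)\mid\textbf{X})]$, applies Assumption~\ref{assumption_unconfoundness_tail}, and then performs the same Bayes'-rule cancellation with the stabilized weight $\pi_0$ to arrive at $\mathbb{E}\{\pi_0(T,\textbf{X})Y\mid T=t\}$, whereas you traverse the identical chain starting from the weighted expectation. Your closing caveat --- that passing $\sim$ through the outer expectation over $\textbf{X}$ requires uniformity in $\textbf{x}$ or a dominated-convergence argument --- is a genuine subtlety that the paper's one-line step $\mathbb{E}(\mathbb{E}[Y(t)\mid\textbf{X}])\sim\mathbb{E}(\mathbb{E}[Y\mid\textbf{X},T=t])$ passes over silently (an analogous interchange is justified only later, in Lemma~\ref{lemma_o_integraloch}, under compactness and continuity), so flagging it strengthens rather than departs from the paper's proof.
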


Recall that the distribution of  $T \mid \textbf{X}=\textbf{x}$, conditioned on $T>\tau(\textbf{x})$ for large $\tau(\textbf{x})\approx \tau_R$, is approximately GPD with parameters $\theta(\textbf{x}) = (\tau(\textbf{x}), \sigma(\textbf{x}), \gamma(\textbf{x}))$. The following result suggests that instead of conditioning on (potentially high-dimensional) covariates $\textbf{X}$, we only need to condition on $\theta(\textbf{X})$. 

\begin{lemma}\label{lemma2}
Under Assumptions~\ref{assumption_max_domain} and \ref{assumption_unconfoundness_tail}, for all $\textbf{s}$ in the support of $\theta(\textbf{X})$ holds
\begin{equation}\label{eq1}
  \mathbb{E}[Y(t)\mid  \theta(\textbf{X})=\textbf{s}]\sim \mathbb{E}[Y\mid  T=t, \theta(\textbf{X})=\textbf{s}]\,\,\,\,\,\,\,\,\, \text{   for    }t\to\tau_R.  
\end{equation}
Hence, 
$$\mu(t) \sim \int \mathbb{E}[Y\mid  T=t, \theta(\textbf{x})]p_{\theta(\textbf{X})}(\textbf{x})d\textbf{x}\,\,\,\,\,\,\,\,\, \text{   for    }t\to\tau_R.$$
\end{lemma}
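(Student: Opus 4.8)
\textbf{Proof proposal for Lemma~\ref{lemma2}.}

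The plan is to reduce the statement to two facts: first, that $\theta(\textbf{X})$ captures all tail information about the conditional law $T\mid\textbf{X}$ relevant for the unconfoundedness adjustment, and second, that Assumption~\ref{assumption_unconfoundness_tail} can be ``pushed through'' a conditioning on a coarser $\sigma$-field. Concretely, I would start from the tower property: for any $\textbf{s}$ in the support of $\theta(\textbf{X})$,
\begin{equation*}
\mathbb{E}[Y(t)\mid \theta(\textbf{X})=\textbf{s}] = \mathbb{E}\bigl[\,\mathbb{E}[Y(t)\mid \textbf{X}]\,\big|\,\theta(\textbf{X})=\textbf{s}\,\bigr].
\end{equation*}
Now invoke Assumption~\ref{assumption_unconfoundness_tail}, which gives $\mathbb{E}[Y(t)\mid\textbf{X}=\textbf{x}]\sim\mathbb{E}[Y\mid \textbf{X}=\textbf{x}, T=t]$ as $t\to\tau_R$. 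Substituting the asymptotic equivalent inside the outer expectation, we get $\mathbb{E}[Y(t)\mid\theta(\textbf{X})=\textbf{s}]\sim\mathbb{E}\bigl[\mathbb{E}[Y\mid\textbf{X},T=t]\mid\theta(\textbf{X})=\textbf{s}\bigr]$, provided the ``$\sim$'' survives the integration — this is the technical point I flag below.

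The second half of the argument is to identify $\mathbb{E}\bigl[\mathbb{E}[Y\mid\textbf{X},T=t]\mid\theta(\textbf{X})=\textbf{s}\bigr]$ with $\mathbb{E}[Y\mid T=t,\theta(\textbf{X})=\textbf{s}]$ in the tail. Here is where Assumption~\ref{assumption_max_domain} enters: by the peaks-over-threshold approximation, for $t$ close to $\tau_R$ the conditional density $p_{T\mid\textbf{X}}(t\mid\textbf{x})$ is governed (up to the GPD approximation) by $\theta(\textbf{x})$ alone, so that for $t>\tau(\textbf{x})$ the Radon--Nikodym factor relating the law of $\textbf{X}\mid T=t$ to the law of $\textbf{X}\mid T=t,\theta(\textbf{X})=\textbf{s}$ depends on $\textbf{X}$ only through $\theta(\textbf{X})$. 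In other words, conditionally on $\theta(\textbf{X})=\textbf{s}$ and on $T=t$ large, $\textbf{X}$ is (asymptotically) distributed as $\textbf{X}\mid\theta(\textbf{X})=\textbf{s}$ reweighted by a function of $\theta(\textbf{X})=\textbf{s}$ only — hence a constant given $\textbf{s}$. This is essentially the ``uniquely parameterized propensity function'' argument of \cite{imai2004causal} transported to the tail: it shows $p_{\textbf{X}\mid T=t,\theta(\textbf{X})=\textbf{s}}(\textbf{x})\sim p_{\textbf{X}\mid\theta(\textbf{X})=\textbf{s}}(\textbf{x})$ as $t\to\tau_R$, and therefore
\begin{equation*}
\mathbb{E}[Y\mid T=t,\theta(\textbf{X})=\textbf{s}] = \int \mathbb{E}[Y\mid \textbf{X}=\textbf{x},T=t]\,p_{\textbf{X}\mid T=t,\theta(\textbf{X})=\textbf{s}}(\textbf{x})\,d\textbf{x} \sim \mathbb{E}\bigl[\mathbb{E}[Y\mid\textbf{X},T=t]\mid\theta(\textbf{X})=\textbf{s}\bigr].
\end{equation*}
Chaining the two equivalences gives \eqref{eq1}. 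The final displayed formula for $\mu(t)$ then follows by taking the expectation of \eqref{eq1} over $\theta(\textbf{X})$ and applying the tower property $\mu(t)=\mathbb{E}[Y(t)]=\mathbb{E}\bigl[\mathbb{E}[Y(t)\mid\theta(\textbf{X})]\bigr]$, together with Lemma~\ref{lemma1}-type reasoning to move the ``$\sim$'' outside the integral; note the paper writes the mixing density as $p_{\theta(\textbf{X})}$, integrated in $d\textbf{x}$ with a slight abuse of notation.

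The main obstacle is the interchange of the asymptotic equivalence ``$\sim$'' with the integration over $\textbf{X}$ (or over $\theta(\textbf{X})$): pointwise-in-$\textbf{x}$ ratio convergence to $1$ does not by itself guarantee convergence of the ratio of integrals. To make this rigorous one needs a domination/uniformity condition — e.g. that the convergence $\mathbb{E}[Y(t)\mid\textbf{X}=\textbf{x}]/\mathbb{E}[Y\mid\textbf{X}=\textbf{x},T=t]\to1$ is suitably uniform in $\textbf{x}$, or that the integrands are dominated by an integrable envelope so that a dominated-convergence argument applies to the ratio. I would either add this as a standing regularity assumption (consistent with the paper's phrase ``we always assume the existence of the expected values'') or state the lemma's asymptotics under such uniformity. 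The GPD step is comparatively harmless since it is the exact hypothesis of Assumption~\ref{assumption_max_domain}, but one should be careful that the GPD approximation for $T\mid\textbf{X}=\textbf{x}$ is itself uniform in $\textbf{x}$, which is already implicit in assuming a common right endpoint $\tau_R$ and a well-defined map $\textbf{x}\mapsto\theta(\textbf{x})$.
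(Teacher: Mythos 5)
Your proposal is correct and follows essentially the same route as the paper's proof: a tower-property decomposition of $\mathbb{E}[Y(t)\mid\theta(\textbf{X})=\textbf{s}]$ over the remaining covariate information, substitution via Assumption~\ref{assumption_unconfoundness_tail}, and the GPD-based observation that conditioning additionally on $T=t$ large leaves the conditional law of that remaining information given $\theta(\textbf{X})$ asymptotically unchanged (the paper writes this as $P_{\theta^C(\textbf{X})\mid T=t,\theta(\textbf{X})}$ approaching $P_{\theta^C(\textbf{X})\mid \theta(\textbf{X})}$). The interchange of ``$\sim$'' with integration that you flag is indeed left informal in the paper's own proof and is only addressed via a uniformity/compactness argument later (Lemma~\ref{lemma_o_integraloch}), so your caveat is apt but does not mark a divergence in approach.
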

Lemma~\ref{lemma2} suggests that it is sufficient to condition only on $\theta(\textbf{X})$ rather than on $\textbf{X}$. This finding is pivotal for dimension reduction, effectively reducing the dimension from $dim(\textbf{X})$ to at most $3$. Nonetheless,  this is merely a limiting result, and it introduces an approximation error of the GPD approximation for finite samples.

\subsection{Model for the conditional expectation of $Y$ given a $T$}

\begin{figure}[h]
\centering
\includegraphics[scale=0.5]{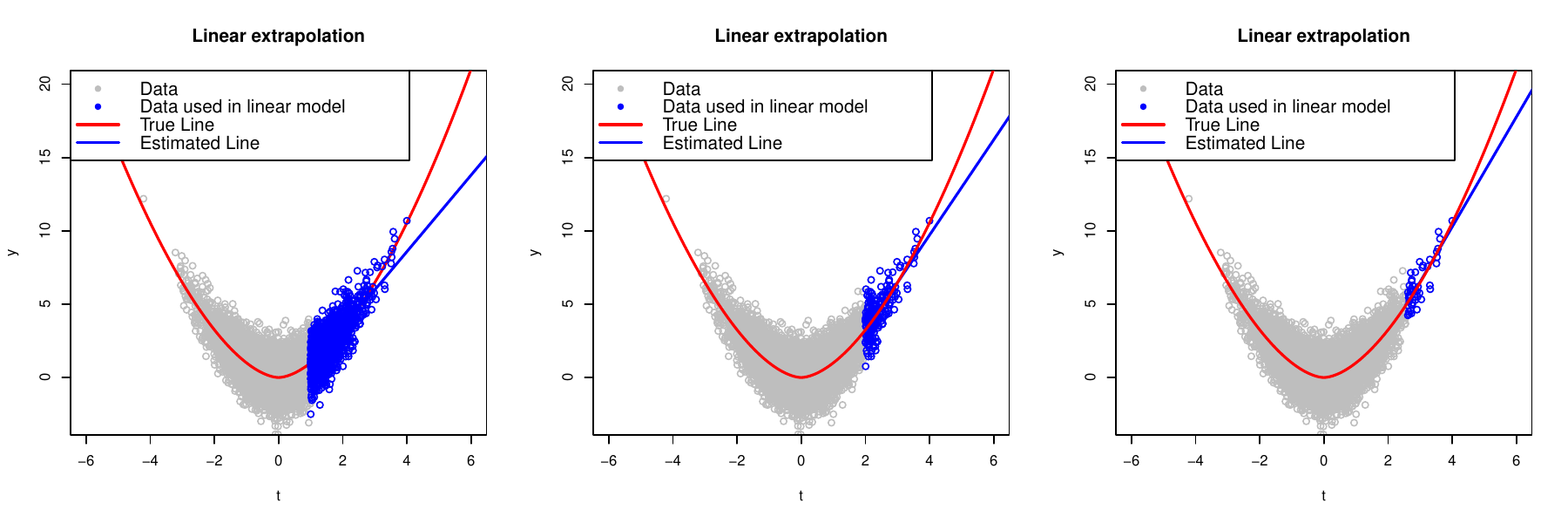}
\includegraphics[scale=0.395]{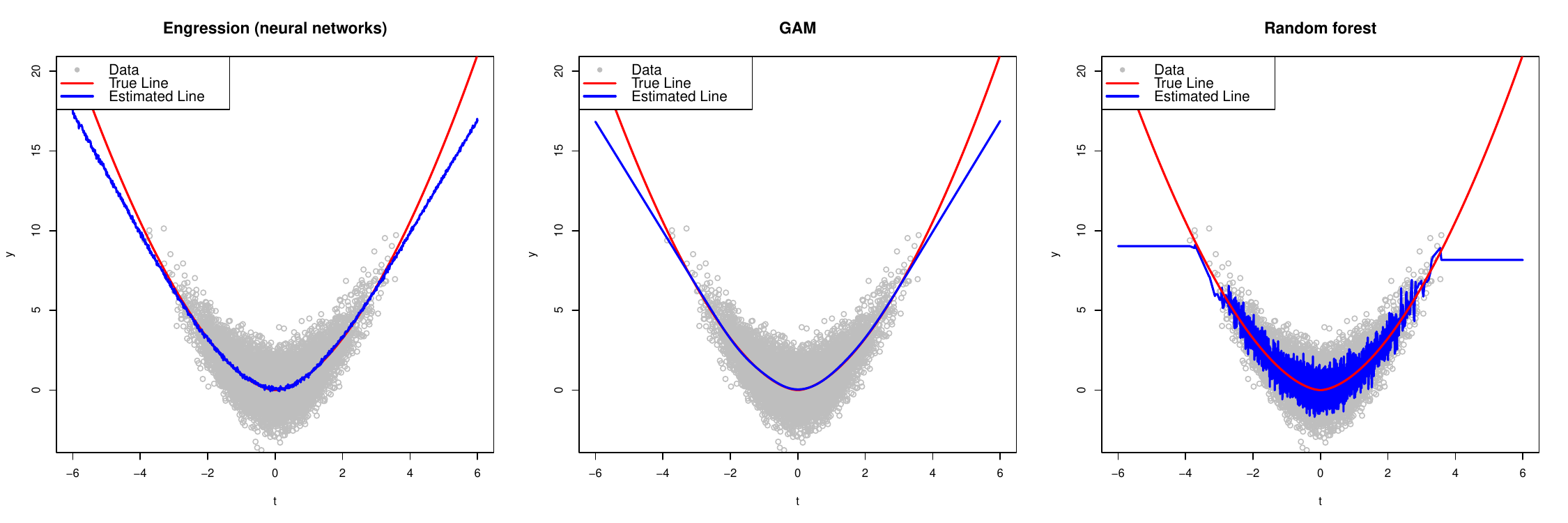}
\caption{Extrapolation of $\mathbb{E}[Y\mid T=t]$ under various models (without confounding). The upper three figures illustrate a first-order extrapolation approach, employing a linear fit at the boundary of the support of $T$. In contrast, the lower three figures depict estimations generated by distinct models: the first utilizes a pre-additive noise model parameterized by neural networks \citep{XinweiShen2023engression}, the second employs smoothing splines \citep{Wood}, and the third utilizes a random forest approach \citep{breiman2001random}.
}
\label{Plot_extrapolation_illustration}
\end{figure}

Under Assumption~\ref{assumption_unconfoundness_tail}, modeling $\mu_{\textbf{X}}$ reduces to a statistical modeling of $\mathbb{E}[Y\mid T, \textbf{X}]$. Furthermore, under Assumptions~\ref{assumption_max_domain} and \ref{assumption_unconfoundness_tail}, it reduces to modeling $\mathbb{E}[Y\mid T, \theta(\textbf{X})]$. In principle, a wide range of models can be considered, ranging from simple linear models to non-parametric neural networks. The principle of Occam's razor suggests that, especially when extrapolating beyond the range of observed values, simpler models often prove to be the most effective choices \citep{soklakov2002occam}. The extrapolation capabilities of various models have recently garnered attention in machine learning research. \citep{XinweiShen2023engression} introduced the 'engression' framework as an extrapolating counterpart to regression-based neural networks. While we build our framework under a linear model for simplicity, it is possible to utilize different models, such as the engression-based ones. Figure~\ref{Plot_extrapolation_illustration} illustrates the extrapolating properties of various commonly used models.

A straightforward approach to model $ \mathbb{E}[Y\mid T=t, \textbf{X}=\textbf{x}]$ under the assumption of linearity-in-the-tail would be assuming an existence of functions $\tilde{\alpha}, \tilde{\beta}$ such that 
\begin{equation}
    \label{eq4232}
    \mathbb{E}[Y\mid T=t, \textbf{X}=\textbf{x}] \sim \tilde{\alpha}(\textbf{x}) + \tilde{\beta}(\textbf{x})t, \,\,\,\,\,\,\text{  as }t\to\tau_R.
\end{equation}
Following the notation in Remark~\ref{remark1}, this corresponds to assuming \begin{equation*}
f_Y(t, \textbf{x}, h, e) \sim \tilde{\alpha}(\textbf{x}) + \tilde{\beta}(\textbf{x})t \text{  as }t\to\tau_R,
\end{equation*}
for all admissible values of $h,e$.  This assumption is valid for example in additive models where $f_Y(t, \textbf{x}, h, e) =\tilde{\alpha}(\textbf{x})+ \tilde{\beta}(\textbf{x}) t + g(\textbf{x},h,e)$ for some function $g$.  
However, using the result from Lemma~\ref{lemma2}, it is sufficient to condition only on $\theta(\textbf{X})$ instead of  potentially high-dimensional $\textbf{X}$. Therefore, we introduce the following model assumption: 
 
\begin{assumption}[Conditional linearity of tail]\label{assumption_linearity_of_tail_1}
There exist functions $\alpha$ and $\beta$ such that for all $\textbf{s}$ in the support of $\theta(\textbf{X})$, the following holds:

\begin{equation}
\label{linear_model}
    \mathbb{E}[Y\mid T=t,\theta(\textbf{X})=\textbf{s}] \sim \alpha(\textbf{s}) + \beta(\textbf{s})t, \,\,\,\,\,\,\text{  as }t\to\tau_R.
\end{equation}

\end{assumption}

Such an assumption was explored in various contexts (typically where $\theta(\textbf{X})$ represents parameters of a normal distribution, see \cite{imai2004causal} or Section 2.2.1 in  \cite{zhao2020propensity}. To the best of our knowledge the extreme case was not yet explored). We can construct our inference method (as discussed in Section~\ref{section_algorithm}) by estimating $\alpha$ and $\beta$ using various machine-learning methodologies.

\section{Inference and estimation}
\label{section_algorithm}

Let  $(\textbf{x}_i, t_i, y_i)_{i=1}^n$ be the observed data. In the following, we propose a methodology for the estimation of $\mu(t)$ for $t\approx \tau_R$ under Assumptions~\ref{assumption_max_domain}, \ref{assumption_unconfoundness_tail} and \ref{assumption_linearity_of_tail_1}.

 Consider the following two-step procedure. In the first step, we approximate the tail of $T\mid \textbf{X}$ using GPD (that is, we estimate the location, scale and shape parameters ${\theta}(\textbf{X}) = (\tau(\textbf{X}), \sigma(\textbf{X}), \xi(\textbf{X}))$). In the second step, we estimate the expectation of $Y$ given a large $T$ conditional on the estimated GPD parameters $\hat{\theta}(\textbf{X})$. 

\begin{enumerate}
    \item Estimate $\theta(\textbf{x})$: \begin{itemize}
    \item Choose $q\in(0,1)$.
    \item Estimate covariant-dependent threshold $\tau(\textbf{x})$ using a quantile regression: That is, estimate $q$-quantile of $T\mid \textbf{X}=\textbf{x}$.
    \item From now on, restrict our inference on the observations from $S:=\{i: t_i>\hat{\tau}(\textbf{x}_i)\}$. 
    \item Estimate $\theta(\textbf{x})$ in the tail-model: That is, estimate $(\sigma, \xi)$  from the data-points in $S$ in the model where  $$T\mid T>\hat{\tau}(\textbf{x}), \textbf{X}=\textbf{x}\sim GPD(\hat{\tau}(\textbf{x}), \sigma(\textbf{x}), \xi(\textbf{x})).$$
\end{itemize}    
\item Estimate $\mu(t)$ or $\mu_{\textbf{x}^\star}(t)$ using $\hat{\theta}(\textbf{x})$:
\begin{itemize}
    \item Estimate $\alpha, \beta$ in model~(\ref{linear_model}) from the data-points in $S$ (that is, we only consider $t>\hat{\tau}(\textbf{x})$). 
    \item Return  $\hat{\mu}(t):=\frac{1}{n}\sum_{i=1}^n  \hat{\alpha}[\hat{\theta}(\textbf{x}_i)]+ \hat{\beta}[\hat{\theta}(\textbf{x}_i)]t$ or $\hat{\mu}_{\textbf{x}^\star}(t):=\hat{\alpha}[\hat{\theta}(\textbf{x}^\star)]+ \hat{\beta}[\hat{\theta}(\textbf{x}^\star)]t$. 
\end{itemize}
\end{enumerate}

The first step is a very standard procedure in extreme value literature called 'peak-over-threshold' \citep{Coles}; it is standard to assume constant shape parameter $\gamma(\textbf{x}) \equiv \gamma\in\mathbb{R}$ \citep{Smith, Davison2} since in practice, it is untypical for the shape parameter to change with covariates. For the estimation of $\tau(\textbf{x}), \sigma(\textbf{x}), \alpha(\textbf{x}), \beta(\textbf{x})$, we use either linear parametrisation \footnote{That is, $\tau(\textbf{x}) = \tau^\top\textbf{x}, \sigma(\textbf{x})=\sigma^\top \textbf{x}, \alpha(\textbf{s}) = \alpha^\top\textbf{s}, \beta(\textbf{s}) = \beta^\top\textbf{s}$ for some real coefficients $\tau, \sigma, \alpha, \beta$ and their estimation is done via classical maximum likelihood. } or non-parametric smooth estimation using splines (GAM, \cite{Wood}), but any method can be used in practice. In case of a very small sample size, we can also assume a constant scale parameter $\sigma(\textbf{x})\equiv \sigma\in\mathbb{R}$ in order to reduce the dimension of the estimation.

The choice of $q$ in the first step is a standard issue in extreme value theory \citep{Schneider2021, Caeiro2015, Davison1990}. For theoretical results, $q$ should be growing with the sample size; that is, $q=q_n$ satisfying $\lim_{n\to\infty}q_n=1$ and $\lim_{n\to\infty}n(1-q_n)=\infty$.   In practical terms, $q$ should be set as high as possible while ensuring that a sufficient amount of data remains above the threshold to maintain good inferential properties. Classical choices include $q=0.9$, $q=0.95$ or $q=0.99$, depending on the size of our dataset.

We utilize the basic bootstrap technique (sometimes also called Efron's percentile method, see Chapter 23 in  \cite{Vaart_1998}) to establish confidence intervals. This involves randomly sampling, with replacement, from our dataset to generate multiple bootstrap samples, each matching the size of our original dataset. For each bootstrap sample, we calculate the estimate of the statistic $\hat{\mu}^\star(t)$. Subsequently, we determine the $\alpha$-percentiles of the re-sampled statistics to derive the confidence intervals. Details can be found in Appendix~\ref{appendix_bootstrap_definition}.

\begin{remark}\label{remark_error}
One must be cautious when interpreting confidence intervals during extrapolation. Generally, estimation of ${\mu}(t)$ is subject to two primary sources of bias: 1) bias stemming from model misspecification, and 2) bias arising from estimation variance. While the former bias can be mitigated within the body of the distribution by comparing different models and employing cross-validation, AIC or BIC criteria to select the most suitable model, this approach becomes less reliable in the extremal region. Eliminating this bias necessitates observation of data within the region of interest. The latter bias stemming from estimation uncertainty can be addressed by computing confidence intervals (in our case, via bootstrap). It's important to note that our bootstrap confidence intervals only account for the latter bias and consequently, the first type of bias presents a greater challenge during extrapolation since it is, in principle, unquantifiable without additional data.
\end{remark}

\begin{theorem}[Idea: Precise statements can be found in Appendix~\ref{appendix_consistency}]\label{Theorem_consistency}
Assuming the conditions outlined in either Theorem~\ref{theorem_consistency2} or Theorem~\ref{theorem_consistency1} are met, our procedure is consistent. Furthermore, under the assumptions detailed in Theorem~\ref{theorem_bootstrap_consistency}, the bootstrap confidence intervals are asymptotically consistent at a correct level.
\end{theorem}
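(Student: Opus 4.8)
The plan is to decompose the relative error $\hat\mu(t)/\mu(t)-1$ into the contributions of the two estimation stages and of the asymptotic-approximation (bias) terms coming from Assumptions~\ref{assumption_unconfoundness_tail}, \ref{assumption_max_domain} and \ref{assumption_linearity_of_tail_1}, and to control each term as $n\to\infty$ with the threshold level $q=q_n\to 1$, $n(1-q_n)\to\infty$, and the target $t=t_n\to\tau_R$ chosen jointly at compatible rates. Write $m(t):=\mathbb{E}\{\pi_0(T,\textbf{X})Y\mid T=t\}$ and $\tilde m(t):=\mathbb{E}[\alpha(\theta(\textbf{X}))+\beta(\theta(\textbf{X}))t]$. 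Lemmas~\ref{lemma1} and~\ref{lemma2} together with Assumption~\ref{assumption_linearity_of_tail_1} give $\mu(t)\sim m(t)\sim\tilde m(t)$ as $t\to\tau_R$, so it suffices to show (i) $\hat\mu(t_n)/\tilde m(t_n)\to 1$ in probability and (ii) $\tilde m(t_n)/\mu(t_n)\to 1$. Part (ii) is purely about the $\sim$'s; I would impose second-order (regular-variation / penultimate-approximation) refinements of the GPD limit and of the two tail assumptions so that these biases are $o(1)$ and — for the bootstrap — $o$ of the estimation standard error of $\hat\mu(t_n)$.

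For part (i), I would proceed stage by stage. Step 1 (threshold): establish a uniform rate $\sup_{\textbf{x}}|\hat\tau(\textbf{x})-\tau(\textbf{x})|=O_P(r_n)$ for the quantile-regression estimator of the $q_n$-quantile of $T\mid\textbf{X}=\textbf{x}$, using intermediate-order-quantile asymptotics under the chosen parametrisation. Step 2 (GPD parameters): treating $S=\{i:t_i>\hat\tau(\textbf{x}_i)\}$ as a random exceedance set, show that the constant-shape / covariate-dependent-scale maximum-likelihood estimator $(\hat\sigma(\cdot),\hat\gamma)$ is consistent — the peaks-over-threshold step — by invoking/adapting Smith-type POT-MLE asymptotics after arguing that replacing $\tau(\textbf{x})$ by $\hat\tau(\textbf{x})$ perturbs the exceedance log-likelihood negligibly; this requires $r_n$ to be small relative to $\sigma(\textbf{x})$ and to the effective sample size $n(1-q_n)$. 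Step 3 (outcome regression): with the generated covariate $\hat\theta(\textbf{X})$ in hand, show that the least-squares estimator $(\hat\alpha,\hat\beta)$ of model~(\ref{linear_model}) fitted on $S$ is consistent, controlling the generated-regressor error by the Step 1/2 rates and the linearity bias by letting the fitting region recede to $\tau_R$. Step 4 (assembly): a uniform law of large numbers plus continuous mapping gives $\hat\mu(t_n)=\tfrac1n\sum_i(\hat\alpha[\hat\theta(\textbf{x}_i)]+\hat\beta[\hat\theta(\textbf{x}_i)]t_n)$, suitably normalised, converging to $\tilde m(t_n)$, which with part (ii) yields consistency — the content of Theorems~\ref{theorem_consistency2}/\ref{theorem_consistency1}.

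For the bootstrap (Theorem~\ref{theorem_bootstrap_consistency}), I would first upgrade consistency to an asymptotically linear expansion: after centring and scaling by the appropriate power of $|S|$, write $\hat\mu(t_n)$ as a sample average of influence functions obtained by composing, via the delta method, the quantile-regression, GPD-MLE and OLS stages, plus an $o_P(1)$ remainder, and prove the resulting CLT. Then verify the same expansion for the bootstrap statistic $\hat\mu^\star(t_n)$ under the empirical measure — i.e.\ apply the functional delta method for the bootstrap for smooth functionals of $Z$-estimators (cf.\ \cite{Vaart_1998}) — so that the bootstrap law consistently estimates the sampling law and Efron's percentile interval attains asymptotically correct coverage; here the bias from part (ii) must additionally be $o(|S|^{-1/2})$, which further constrains $q_n$ and $t_n$.

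The main obstacle is the interaction of the growing threshold with the two nested estimation stages: the GPD likelihood in Step 2 is built on exceedances over the \emph{estimated} curve $\hat\tau(\textbf{x})$, and the outcome regression in Step 3 uses the \emph{estimated} three-dimensional summary $\hat\theta(\textbf{X})$ as a regressor, so the errors propagate multiplicatively and must all be shown negligible relative to the shrinking effective sample size $n(1-q_n)$. Making the bias--variance trade-off explicit — choosing $q_n$ and $t_n$ so that the GPD, linearity and unconfoundedness approximation errors vanish faster than the estimation error (and, for the bootstrap, faster than $|S|^{-1/2}$) — is the delicate part, and is precisely what forces the second-order regular-variation-type hypotheses that I would state explicitly in the appendix versions of the theorem.
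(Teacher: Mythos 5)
Your plan is a genuinely different—and substantially more ambitious—route than the one the paper takes. You aim at a uniform bias–variance analysis with $q_n\to1$, a moving target $t_n\to\tau_R$, second-order regular-variation conditions, Smith-type POT likelihood asymptotics over an estimated covariate-dependent threshold, a generated-regressor analysis, and an influence-function expansion feeding a functional delta method for the bootstrap. The paper deliberately avoids all of this by idealizing: Theorem~\ref{theorem_consistency2} fixes $q$ and assumes the exceedance law is \emph{exactly} $GPD(\theta(\textbf{x}))$, assumes $\theta,\alpha,\beta$ continuous with consistent (e.g.\ GAM) estimators, and then the proof is only Lemma~\ref{lemma2}, the elementary integral-comparison Lemma~\ref{lemma_o_integraloch}, plug-in consistency and the law of large numbers; moreover its conclusion is weaker than your target, namely $\hat\mu(t)\overset{P}{\to}\tilde\mu(t)$ for each fixed $t$ with $\tilde\mu\sim\mu$ only as $t\to\tau_R$, not $\hat\mu(t_n)/\mu(t_n)\to1$ along $t_n\to\tau_R$. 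Theorem~\ref{theorem_consistency1} does let $q_n\to1$, but only under full linearity (Assumptions A--D): the threshold step is handled by citing intermediate-order quantile-regression consistency (Chernozhukov), and the generated-regressor problem you flag is dissolved by the algebraic observation that $\hat\beta$ is \emph{linear} in $\tau_{lin}$ (so $\hat\omega_{\textbf{x}^\star}=\hat\tau_{lin}^\top\widehat{\tilde\beta}\,$-type factorization), followed by an $\varepsilon$-argument rather than explicit rates; the GPD scale/shape estimates never even enter $\hat\omega_{\textbf{x}^\star}$ there. The bootstrap theorem likewise relies on strong simplifications (sample splitting so the two stages are independent, $S$ defined with the \emph{known} $\tau$, fixed $q$ with exact GPD, Hahn's and Freedman's bootstrappability of quantile regression and OLS) and then applies the finite-dimensional delta method for the bootstrap to $\phi(a,b)=(a^\top b)(a^\top\textbf{x}^\star)$. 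What your route would buy, if completed, is a much stronger and more honest statement; what the paper's route buys is short, checkable proofs at the price of admittedly idealized hypotheses.

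The genuine gap in your proposal is that its hardest steps are invoked rather than secured, and at least one would fail as stated. Establishing uniform-in-$\textbf{x}$ rates for an intermediate-order quantile-regression threshold, POT maximum likelihood over the \emph{estimated} exceedance set, and negligibility of the propagated error relative to $n(1-q_n)$ is precisely the technical obstacle the paper says it cannot overcome without the exact-GPD or linearity assumptions, so these cannot be treated as citations. More pointedly, your bootstrap step applies Efron's percentile method in the regime $q_n\to1$, $t_n\to\tau_R$: the naive nonparametric bootstrap is in general inconsistent for tail/intermediate-order functionals without modification (e.g.\ $m$-out-of-$n$ resampling), and the van der Vaart delta-method-for-the-bootstrap argument you cite presupposes bootstrappability of each stage, which for the intermediate-order quantile-regression stage is exactly what is in question. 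The paper sidesteps this by keeping $q$ fixed (so Hahn's fixed-quantile bootstrap result applies) and by sample splitting to decouple the stages; if you keep $q_n\to1$, you must either prove a bootstrap CLT for the intermediate-order stage or modify the resampling scheme, and additionally verify your undersmoothing condition (approximation bias $o(|S|^{-1/2})$), none of which follows from the tools you list.
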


In our procedure, we adopt a practice common in extreme value theory, where we concentrate solely on the extreme observations (set $S$) while discarding all non-extreme values. This approach stems from the rationale that extreme observations provide the most valuable insights into out-of-support behavior. Utilizing data within the body of the distribution may introduce bias, as these values may exhibit different behavioral patterns. Mathematically, this rationale can be expressed through an examination of the precision of the GPD approximation. This approximation shows high precision exclusively in extreme values while displaying bias and low precision for non-extreme values.

\section{Illustration and experiments}
\label{section_simulations}

 In this section, we assess the performance of our methodology using both a simple illustrative example and experimental data. A comprehensive simulation study is provided in Appendix~\ref{Appendix_Simulations}. 

The quantity of interest in the application presented in Section~\ref{section_application2} is the difference $\mu(t_1) - \mu(t_2)$ for $t_1, t_2 < \tau_R$.  Hence, in the simulations, we focus on estimating: 
\[
\omega_\textbf{x} := \lim_{t \to \infty} [\mu_\textbf{x}(t+1) - \mu_\textbf{x}(t)] \quad \text{or} \quad \omega := \lim_{t \to \infty} [\mu(t+1) - \mu(t)],
\]
assuming $\tau_R = \infty$ and that the corresponding limits exist. Note that under the linear model (\ref{linear_model}), the limit exists and corresponds to the parameter $\omega_\textbf{x} = \beta[\theta(\textbf{x})]$.
Here, $\omega_\textbf{x}$ can be regarded as the tail counterpart of a coefficient $\beta_\textbf{x}$ in a linear model $Y = \alpha_\textbf{x} + \beta_\textbf{x}T + \varepsilon$, where $\alpha_\textbf{x}$ and $\beta_\textbf{x}$ are real coefficients, possibly dependent on $\textbf{X}$.

\subsection{Simple example}
\label{Simulations_simple_example}

\begin{figure}[]
\centering
\includegraphics[scale=0.46]{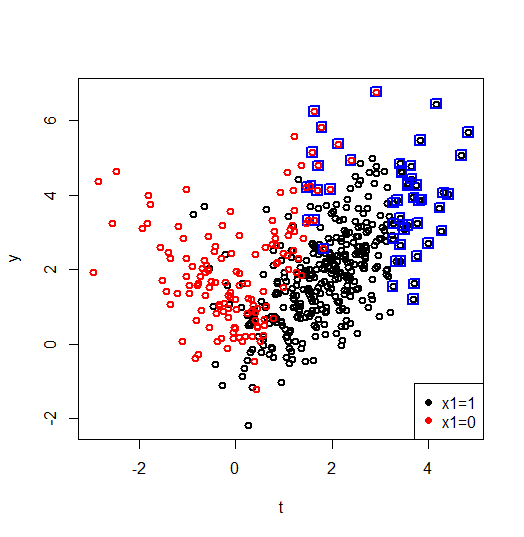}
\includegraphics[scale=0.6]{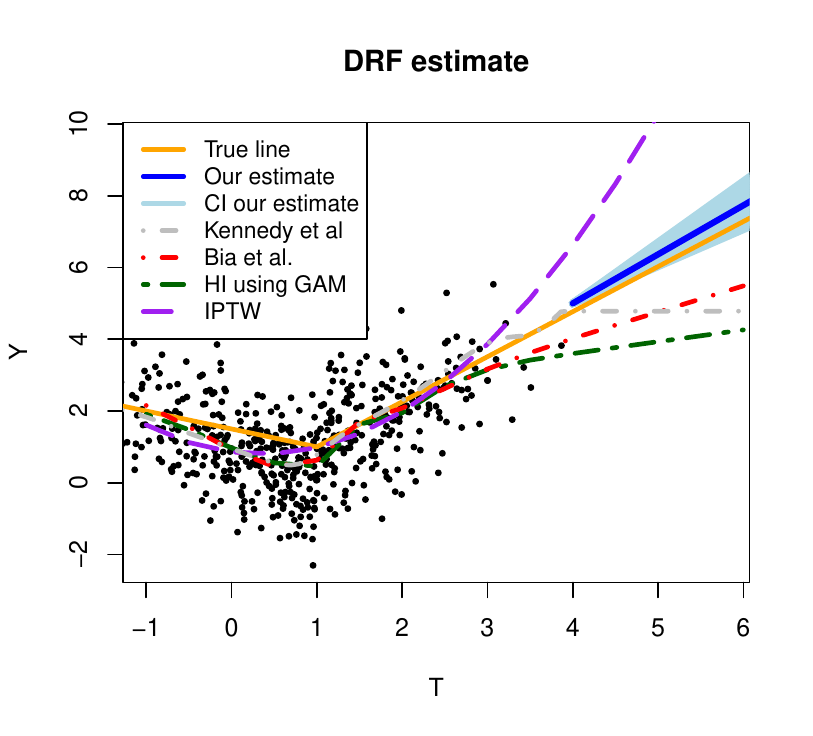}
\caption{\textbf{Left}: Data generated based on the simulations outlined in Section~\ref{Simulations_simple_example} with $n=500$. Points falling within the set $S$ are identified by a blue square.
\textbf{Right}: Estimation of $\mu(t)$ using various methods: Orange represents the true $\mu(t)$, blue depicts our estimate employing the method from Section~\ref{section_algorithm} with $95\%$ confidence intervals, grey illustrates the doubly robust estimation method introduced by \cite{Kennedy_2016, kennedy2019}, red showcases the additive spline estimator described in \cite{bia2014stata}, dark green demonstrates the approach proposed by \cite{Hirano} utilizing a GAM outcome model (further details in \cite{causaldrf}), and purple describes the inverse probability of treatment weighting estimator \citep{vanderwal2011ipw}.}
\label{Plot1}
\end{figure}

The subsequent illustrative example outlines our methodology and the main ideas. Consider a single confounder $X = X_1 \sim \text{Bernoulli}(0.75)$ (where $X_1=1$ denotes men and $X_1=0$ denotes women, for instance). Define $T = X_1 + \varepsilon_T$, where $\varepsilon_T \sim \mathcal{N}(0,1)$ (indicating that $T$ generally tends to be larger for men than for women). Let
  \begin{equation*}
     Y = \begin{cases}
      & T+\varepsilon,\,\,\,\,\,\,\,\,\,\,\,\,\,\,\,\,\,\,\,\,\,\,\,\,\,\,\, \text{when }X_1=1, \,\,T>1,\\ 
     & 2T+\varepsilon,\,\,\,\,\,\,\,\,\,\,\,\,\,\,\,\,\,\,\,\,\,\,\,\, \text{when }X_1=0,\,\, T>1,\\ 
     & 2-T  +\varepsilon,\,\,\,\,\,\,\,\,\,\,\,\,\,\, \text{when }T\leq 1,
    \end{cases}
    \end{equation*}
where $\varepsilon \sim \mathcal{N}(0,1)$. 

Simple computation gives us $\mu(t) = 0.75t + (1-0.75)2t = 1.25t$ for any $t>1$, while $\mu(t) = -t + 2$ for $t\leq 1$. Consequently, our primary interest lies in estimating the slope $$\omega=\mu(t+1) - \mu(t) = 1.25\,\,\,\,\, \text{ for }t>1.$$ 
We generate data as specified with a sample size of $n=500$. Setting the threshold at $q = 0.9$, we employ the methodology outlined in Section~\ref{section_algorithm} to estimate $\omega$. This process is repeated 100 times, yielding a mean and $0.95$ quantile of $$ \hat{\omega} = 1.26 \pm 0.39.$$ Additionally, we employ the bootstrap technique to calculate confidence intervals, and we obtain a confidence interval of the form $\omega\in(0.72, 1.87)$ on average. We see from other simulations that these confidence intervals are slightly conservative for $n\leq 1000$, but work well for larger sample sizes. 

In Figure~\ref{Plot1}, we present one generated dataset with a sample size of $n=500$, showcasing various methodologies from existing literature and their extrapolation efficacy. Notably, classical techniques often exhibit a tendency to underestimate $\mu(t)$ for $t$ large, primarily due to the fact that only the 'men' category ($X=1$) received $T>2$. 

We conclude with an important remark regarding the sample size: a substantial amount of valuable information is lost when we discard $90\%$ of the data by focusing solely on the data in the set $S$ (data above the threshold $\tau(\textbf{x})$). This is the primary reason behind the considerably large confidence intervals and the heightened variability in our estimates. We encounter the inevitable bias-variance trade-off; the inclusion of more data introduces a potential bias, given that the behavior of $\mu(t)$ differs in the body and in the tail.

\subsection{Simulations}

We provide a comprehensive discussion of all simulations in detail in Appendix~\ref{Appendix_Simulations}. In our study, we devised several simulation setups to model diverse scenarios and explore them thoroughly. Specifically, we focused on five key scenarios:

\begin{enumerate}
    \item  Investigating how our method scales with respect to the dimension of the confounders $d=dim(\textbf{X})$.
    \item Comparing our method with classical methods from the literature, 
    \item Expanding upon the simple example introduced in Section~\ref{Simulations_simple_example}, wherein we evaluated performance across various dependence structures (employing different copulas), sample sizes, and a spectrum of causal effects.
    \item  Examining the presence of a hidden confounder affecting both $T$ and $Y$.
    \item  Focusing on variations in the function $\mu(t)$.
\end{enumerate}

In this section, we present two key findings from our simulation study. Table~\ref{table_simulations_dimension_d} illustrates how our methodology scales with varying dimensions of the confounders  $d=dim(X)$. Additionally, Table~\ref{table_classical_methods_result} depicts the comparison of our method with other classical methods from the literature, where we estimated $\mu(\tilde{t})$ for $\tilde{t} = \max_{i=1, \dots, n}t_i + 10$.

\begin{table}[H]
\centering
\begin{tabular}{|c|c|c|c|}
\hline
True $\omega=-1$& Gaussian $\varepsilon_T$         & Exponential $\varepsilon_T$    & Pareto $\varepsilon_T$         \\ \hline
$d=5$       & $\hat{\omega} =-1.0 \pm 0.03$& $\hat{\omega} =-1.0 \pm 0.01$& $\hat{\omega} =-1.0 \pm 0.001$\\ \hline
$d=25$      & $\hat{\omega} =-0.96 \pm 0.08$   & $\hat{\omega} =-0.99 \pm 0.01$ & $\hat{\omega} =-1.0 \pm 0.001$\\ \hline
$d=50$      & $\hat{\omega} = -0.75 \pm 0.28 $ & $\hat{\omega} =-0.97 \pm 0.09$ & $\hat{\omega} =-0.99 \pm 0.01$ \\ \hline
$d=200$     & $\hat{\omega} = -0.44\pm 0.37 $  & $\hat{\omega} =-0.53 \pm 0.42$ & $\hat{\omega} =-0.91 \pm 0.68$\\ \hline
\end{tabular}
\caption{Estimates of $\omega=-1$ with varying dimension of the confounders $d=dim(X)$ and with different distributions of the noise of $T$. The sample size  is $n=5000$. The full simulations setup can be found in Appendix~\ref{Section_simulations_high_dimension_confounder}.}
\label{table_simulations_dimension_d}
\end{table}

\begin{table}[H]
\centering
\begin{tabular}{|l|c|c|c|c|c|}
\hline
 & \multicolumn{1}{l|}{Our method} & \multicolumn{1}{l|}{Bia et al.} & \multicolumn{1}{c|}{Kennedy et al.} & \multicolumn{1}{l|}{HI with GAM} & \multicolumn{1}{l|}{IPTW} \\ \hline
$d=2$  & \textbf{0.18}& 0.68&   0.64& 0.42& 3.89\\ \hline
$d=10$ & \textbf{0.48} & 0.81&   0.65& 0.67 & 5.69 \\ \hline
$d=30$ & \textbf{0.79}& 0.92& could not handle& 0.92& 4.70\\ \hline
\end{tabular}

\caption{Comparing the extrapolation performance of various models using the average Absolute Relative Error (ARE) across 100 simulations with varying dimension of the confounders $d=dim(\textbf{X})$. The interpretation of the values is as follows: if the true value of $\mu(\tilde{t})$ is 1, an ARE of 0.17 indicates an approximate typical error of $|\hat{\mu}(\tilde{t}) - \mu(\tilde{t})| \approx 0.17$.}
\label{table_classical_methods_result}
\end{table}

Table~\ref{table_simulations_dimension_d} suggests that the results are reasonably accurate as long as $d\leq 50$. As discussed in the Appendix~\ref{Appendix_Simulations}, the reason for the bias observed in larger dimensions $d$ is that Assumption~\ref{assumption_max_domain} and (\ref{eq1}) are only asymptotic results, and with higher dimensions $d$, we require more data for the asymptotic theory for $T\mid \textbf{X}$ to take effect. It is well known that the convergence rate of the maxima of the Gaussian random sample to an extreme value distribution is very slow, whereas it is faster with Exponential or Pareto distribution \citep{Davis_converngence}. 

Analysis of Table~\ref{table_classical_methods_result} reveals that our method achieves the smallest extrapolation error. Hirano and Imbens \citep{Hirano} method utilizing a GAM outcome model showed surprisingly reasonable performance, while the IPTW method \citep{vanderwal2011ipw} exhibits poor performance due to the quadratic nature of the extrapolating curve.  Please note that our method assumes linearity in the tail. If this assumption isn't met, our method may produce inferior results. In such instances, alternative regression techniques can be utilized in step 2 of our algorithm instead of linear regression. For example, neural networks, as demonstrated in \cite{XinweiShen2023engression}, can be employed to address nonlinear behavior and enhance performance.

\section{Application: River discharge dataset}
\label{section_application2}

\begin{wrapfigure}{r}{5cm}
\includegraphics[scale=0.2]{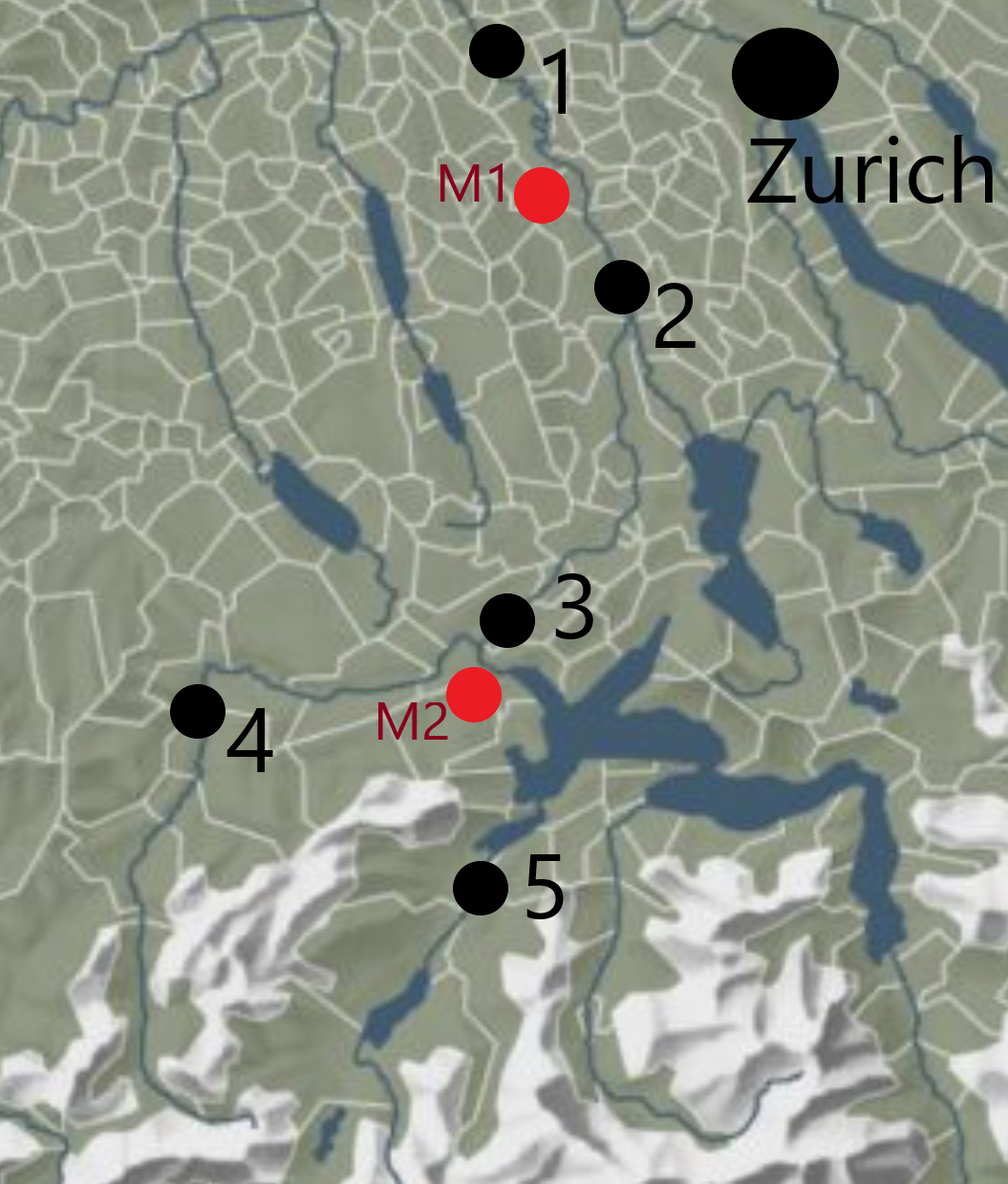}
\caption{Map of meteo-stations (red) and five river stations (black). Note that the river flow is from south to north (with springs in the mountains). }
\label{map}
\end{wrapfigure} 
Understanding the causal relationship between extreme precipitation and river discharge is crucial for effective water resource management. In this study, we examine how extreme precipitation events impact river discharge. By utilizing a comprehensive dataset spanning various hydro-logical conditions, our research seeks to provide insights into the critical nexus between extreme precipitation dynamics and extreme river discharge events. The data were collected by the Swiss Federal Office for the Environment (\url{hydrodaten.admin.ch}), but were provided by the authors of  \cite{Pasche, engelke2021sparse}, with some useful preliminary insights. We used precipitation data and other relevant measured variables from meteorological stations provided by Swiss Federal Office of Meteorology and Climatology, MeteoSwiss (\url{gate.meteoswiss.ch/idaweb}).

We exclusively examine the discharge levels of the River Reuss, situated near Zurich in Switzerland (Figure~\ref{map}). We selected this river due to the availability of excellent measurements of its discharge levels, complemented by well-documented weather conditions from nearby meteorological stations and diverse landscape. Our measurements include average daily discharges between January 1930 and December 2014 and daily precipitation in the nearby meteo-stations.  To reduce any seasonal effects due to unobserved confounders, we only consider data during June, July and August, as the more extreme observations happen during this period when mountain rivers are less likely to be frozen. 

We center our attention on addressing two distinct research questions: one characterized by a straightforward scenario where the ground truth is known, and another presenting a more intriguing challenge. 

\subsection{Known ground truth}

We demonstrate our methodology using a straightforward example where the ground truth is known. Consider a pair of river stations, such as stations 2 and 1. Let $T$ represent the water discharge at station 2, $Y$ represent the water discharge at station 1, and $\textbf{X}$ denote measurements taken at a nearby meteorological station (including precipitation, humidity, etc. The full list of confounders can be found in Appendix~\ref{Appendix_application2}). Our objective is to investigate the impact of extreme discharge levels at station 2 on the water discharge observed at station 1. In mathematical terms, we seek to ascertain $\mu(t)$ or $\mu_\textbf{x}(t)$ for large values of $t$. In this context, the ground truth is the following: $$\mu(t_1+t_2) - \mu(t_2) = \mu_\textbf{x}(t_1+t_2) - \mu_\textbf{x}(t_2) = t_1-t_2$$ for all $t_1, t_2 \geq 0$ and all $\textbf{x} \in \mathcal{X}$. This can also be explained in words as follows: if we pour $t_1$ liters of water into the river at station 2 (in causal terminology, we interpret this as an intervention $do(T = T+t_1)$), we expect the water discharge at station 1 ($Y$) to increase by exactly $t_1$. Hence, $\omega = \omega_\textbf{x}=1$. As we will see below, our methodology consistently yields this expected outcome.

We follow the methodology introduced in Section~\ref{section_algorithm} with $q=0.95$.  Detailed steps, diagnostics and preliminary data analysis (just for a pair $2\to 1$) can be found in Appendix~\ref{Appendix_application2}. The resulting estimates can be found in Table~\ref{table_pairs}. The results are very similar with a different choices of $q$ (changing $\hat{\omega}$ by not more than by $0.1$). We observe that our results align very well with the ground truth ($\omega = 1$).  However, there is a slight bias evident in the relationships between the pairs $5\to 3$ and $4\to 3$: this can be attributed to distinct geographical features. Notably, a lake Vierwaldstättersee lies between the pair $5$ and $3$, which diminishes the influence of $5$ on $3$. Additionally, a 3238m Titlis mountain is situated between pair $4$ and $3$, amplifying the effect of $4$ on $3$ due to the melting glacier ice, acting as an unmeasured confounding factor. Our methodology relies on Assumptions~\ref{assumption_max_domain}, \ref{assumption_unconfoundness_tail}, \ref{assumption_positivity_tail}, and \ref{assumption_linearity_of_tail_1}, along with some continuity assumptions and the SUTVA assumption discussed in Section~\ref{section_problem_statement}. Assumptions~\ref{assumption_max_domain} and \ref{assumption_positivity_tail} are minor and are used frequently when dealing with these types of data. Assumption~\ref{assumption_unconfoundness_tail} is a common and challenging aspect of every causal inference methodology. While our assumption is weaker than the classical unconfoundness assumption (requiring no hidden confounder in the tail), complete rejection of the possibility of its violation is unattainable. However, we believe that the meteo-station between a pair of stations can capture the most significant confounders (with the exception when the lake or mountains are present in between the river-stations). Finally, Assumption~\ref{assumption_linearity_of_tail_1} is a strong assumption that allows us to extrapolate observed values into the extremal region. However, this assumption (or at least some similar model assumptions) are necessary. In this case, linear assumption is valid, since the underlying ground truth is known. 

\begin{table}[h]
\begin{tabular}{|c|c|c|c|c|}
\hline
Truth: ${\omega} = 1$ & Stations $2\to 1$& Stations  $3\to 2$& Stations  $4\to 3$& Stations $5\to 3$\\ \hline
$\hat{\omega}$        & $1.03\pm 0.05$& $1.17 \pm 0.24$& $1.21\pm 0.19$& $0.78\pm 0.41$\\ \hline
\end{tabular}
\caption{Estimates $\hat{\omega}$ between each pairs of the stations. }
\label{table_pairs}
\end{table}

\subsection{Effect of precipitation on river discharge}

We employ our methodology to address a more complex inquiry where the ground truth is not known. Let's consider water discharge at station 3 ($Y$) and let $T$ denote the precipitation measured in meteo-station M2. Our focus lies in understanding the impact of extreme precipitation events ($T$) on the water discharge ($Y$). As mentioned in the introduction, on 6.6.2002, we recorded a historical maximum precipitation level of $T_{\text{max}} = 111 \frac{mm}{m^2}$, coinciding with the scenario when the river nearly breached its banks. Our inquiry centers on the question: how would the river discharge $Y$ alter if $T$ were to reach $120 \frac{mm}{m^2}$? In mathematical terminology, we are interested in estimating $\mu(120) - \mu(111)$ or possibly $\mu_{\textbf{x}^\star}(120) - \mu_{\textbf{x}^\star}(111)$ where ${\textbf{x}^\star}$ are other covariates corresponding to that event. Addressing this question is challenging as we lack data within this extreme regime, necessitating reliance on extrapolation. This task is especially challenging, since we anticipate that the effect of precipitation on river discharge may vary between the body of the distribution and its tail, since the ground absorbs a significant portion of the rainfall during a light rain. 

We follow the methodology introduced in Section~\ref{section_algorithm}. 
A straightforward approach would be to define $T$ as precipitation and $Y$ the water discharge on the same day, while choosing appropriate confounders $\textbf{X}$ from some measurements at M2. Then, we can use classical approach for estimating $\mu(t)$ in the body and our approach to estimate it in the tail. However, some problematic issues arise in this application:
\begin{itemize}
    \item \textbf{Time issue:} $T_{monday}\to Y_{monday}$ but also  $T_{monday}\to Y_{tuesday}$ since it takes time for the rain water to reach the river and rain tends to be more frequent around midnight. In fact, correlation (and extreme correlation coefficient as well, see Figure~\ref{ccf}) is much higher for a pair ($T_{monday}, Y_{tuesday}$) than for  ($T_{monday}, Y_{monday}$). The extreme storm on 6.6.2002 corresponded to extremely high river discharge on 7.6.2002 (where $Y$ was about five times larger than on 6.6.2002).  Hence, our interest lies in the effect $T_{monday}\to Y_{tuesday}$ (that is, we consider $t_i$ as precipitation on day $i$ while $y_i$ is the discharge on day $i+1$). Additionally, the presence of time introduces an auto-correlation issue. This can be handled by taking for example weekly maxima or discarding consecutive observations within a certain time frame to reduce the auto-correlation effect. Alternatively, applying techniques like time series decomposition, differencing, or using autoregressive models can also mitigate the issue of auto-correlation in the data analysis process. We leave the data unchanged since the temporal dependence is primarily local, spanning only a few days, and does not introduce a substantial bias.
    \item \textbf{Variable selection issue:} choosing appropriate confounders $\textbf{X}$ that act as confounders of $Y$ and $T$. It is not clear which variables can be safely considered as confounders: if a variable $X$ lie on a path $T\to X\to Y$, adjusting for $X$ would lead to so-called path-cancelling causal effect \citep{direct_indirect_effects}.  Here, we are interested in so-called total causal effect, so we need to be cautious of which covariates to adjust for. However, not adjusting for a common cause leads to a bias. Moreover, there is often a feedback loop: $X_i\leftrightarrow Precipitation$ for $X_i$ for example humidity or temperature. However, some of the variables can be safely considered as common causes: for example temperature on Sunday (the day before measuring precipitation). There is a huge amount of literature for such a variable selection, and we do not aim to comment on this research area: we only provide a full list of chosen confounders in Appendix~\ref{Appendix_application2}. 
\end{itemize}

We estimate two values: $\hat{\omega}$ which is the tail quantity defined as difference between $\mu(t+1) - \mu(t)$ for large $t$: in our case, how would $Y$ change if it was raining by $1\frac{mm}{m^2}$ more on 6.6.2002? Next, we also estimate  $\hat{\beta} = \mu(t+1) - \mu(t)$ corresponding to the body of the distribution (see Appendix~\ref{application_river_computation_of_beta} for details on its computation). The resulting estimates can be found in Table~\ref{table_stations} and visualisation of the $\mu(t)$ can be found in Figure~\ref{estimation_in_the_body_river_data}. We observe that the effect of $T$ on $Y$ is larger in the extreme region than in the body of the distribution by a factor of $\frac{3.04}{2.4}\approx 1.25$. 

\begin{table}[]
\begin{tabular}{|c|c|c|c|c|c|}
\hline
Truth unknown& Station 1& Station 2& Station 3& Station 4&Station 5\\ \hline
$\hat{\beta}$& $2.4\pm 0.1$& $2.28\pm 0.1$& $1.44\pm 0.02$& $0.89\pm 0.02$&$0.38\pm 0.01$\\ \hline 
 $\hat{\omega}$        & $3.04\pm 0.95$& $2.61\pm 0.67$& $1.62\pm 0.35$& $0.99\pm 0.32$&$0.36\pm 0.13$*\\ \hline
\end{tabular}
\caption{Estimates $\hat{\beta}$ and $\hat{\omega}$ represent the estimation of the effect of $T$ on $Y$ in the body and in the tail, respectively.  $\hat{\beta}$ is computed using standard regression while  $\hat{\omega}$ is the tail counterpart computed using steps  introduced in Section~\ref{section_algorithm}. *Note that Station 5 is in different altitude and relatively far from meteo-station M2 with a lake in between: hence, there is a bias due to data collection problems.  }
\label{table_stations}
\end{table}

As for the answer to our question 'how would the river discharge $Y$ alter in station 3 if $T$ were to reach $120 \frac{mm}{m^2}$ on 6.6.2002', our results suggest that the water discharge would be larger by about $9\times 1.62 = 14.5\frac{m^3}{s}$ (note that median of $Y$ is $11.2$ and $95\%$ quantile of $Y$ is $51.2$). Would this result in the river overflowing its banks? We cannot definitively say, as we lack the necessary data regarding the volume and contours of the river banks. Moreover, $Y$ represent the daily average of the water discharge while in order to answer this question, daily maximum is a better suited variable for answering this question. Nonetheless, this advances us towards a more accurate understanding of the effects and impacts of extreme precipitation events and potentially enhancing statistical inference for hydroelectric power stations located along this river.

\begin{figure}[t]
\centering
\includegraphics[scale=0.6]{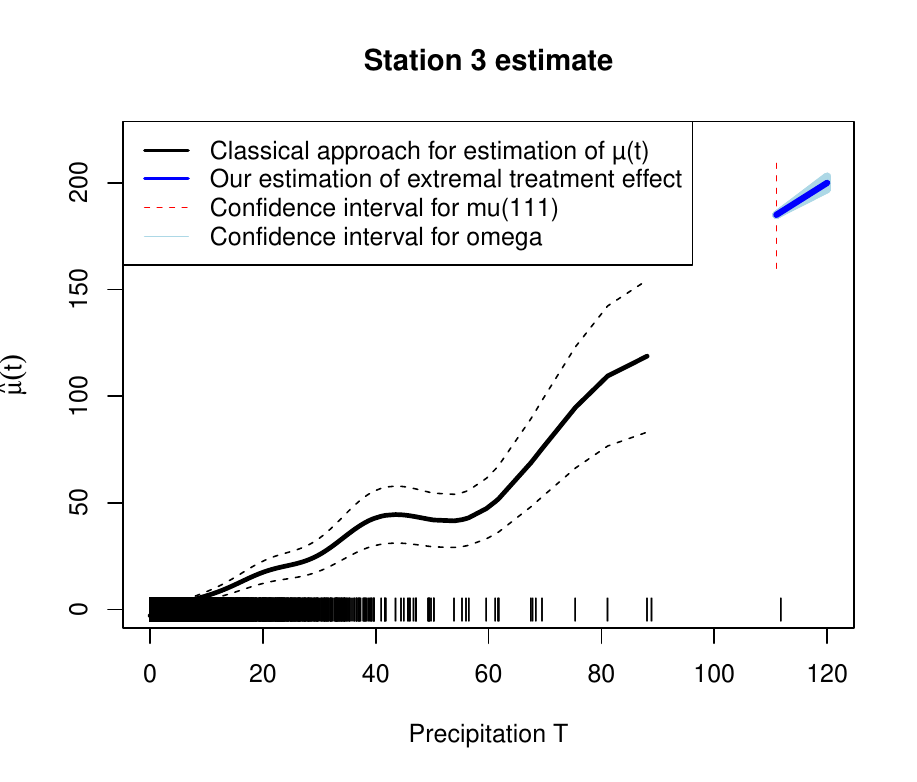}
\caption{The estimation of $\mu(t)$ using the doubly robust estimator introduced in \cite{Kennedy_2016, kennedy2019} cut at the second largest observation, together with estimation of $\mu(111), \omega$ and their $95\%$ confidence intervals.   }
\label{estimation_in_the_body_river_data}
\end{figure}

\section{Conclusion and future work}

Analyzing the impact of extreme levels of a treatment variable (exposure) is essential for comprehending its effects on diverse systems and populations. In this paper, we introduced a novel framework aimed at estimating the causal effect of extreme treatment values. Leveraging insights from extreme value theory, we enhanced the estimation of the extreme treatment effect. Our framework can handle a substantial number of confounders. Nonetheless, our methodology relies on extrapolation, presenting inherent challenges even in the absence of confounding variables, where the bias stemming from a model misspecification is impossible to quantify. Our framework holds promise for initial assessments of the impact of extreme environmental events, such as the effects of severe storms or droughts on economic damages. Future work may explore the application of our extreme value theory approach to address time-varying effects, a prevalent issue in environmental research.

\section*{Conflict of interest and data availability}
The code is available in the \href{https://github.com/jurobodik/Extreme_treatment_effect.git}{online repository} or on request from the author, alongside with the data regarding Appendix~\ref{section_application}. While the data related to the application discussed in Section~\ref{section_application2} are not publicly available, they can be accessed through \url{hydrodaten.admin.ch} and \url{gate.meteoswiss.ch/idaweb} after registration or by requesting the used data from the authors of \cite{Pasche}. 

The authors declare that they have no known competing financial interests or personal relationships that could have appeared to influence the work reported in this paper.

\section*{Acknowledgements}
This study was supported by the Swiss National Science Foundation. The author would like to thank Prof. Valérie Chavez for her mentorship and Prof. Mats Stensrud for his insightful discussions.

\appendix
\newpage
\section{Appendix: Application 2 - concrete compressive strength}
\label{section_application}

\subsection{Main analysis}

In this section, we delve into a dataset \citep{Concrete_data} focused on concrete compressive strength  \footnote{We express our gratitude to Adam Kovac for suggesting and discussing this dataset.}.  

Concrete serves as a fundamental material in civil engineering, and understanding its compressive strength (denoted as $Y$ and measured in MPa) is paramount for ensuring structural integrity \citep{Concrete}. Concrete comprises ingredients such as cement ($X_1$), fly ash ($X_2$), water ($X_3$), superplasticizer ($X_4$) and blast furnace slag ($T$) (amongst some other additions). The units of $T, X_1, X_2, X_3, X_4$ are in kilograms in a $m^3$ mixture. The concrete compressive strength ($Y$) exhibits a highly nonlinear relationship with these ingredients and the elapsed time. Our focus is on exploring the effect of blast furnace slag ($T$) on compressive strength ($Y$). It is well-established that increasing the quantity of $T$ can enhance $Y$, yet an excessive amount of $T$ may lead to a decrease in $Y$. 

In our dataset $X_1, X_2, X_3, X_4$ may affect the decision of how much $T$ was used (engineers often decide about the quantity of $T$ based on the looks of the mixture of other ingredients). Our dataset contains $n=1030$ instances of observational data $\{\textbf{x}_i, t_i, y_i\}_{i=1}^n$ where $\textbf{x}_i = (x_{1,i}, \dots, x_{4,i})^\top$.  The range $(\min_iy_i, \max_iy_i) = (2.3, 82.5)$ and $(\min_it_i, \max_it_i) = (0, 359)$.

Suppose we fit a linear model $\mathbb{E}Y = \beta_0 + \beta_TT+\beta_1X_1+ \beta_2X_2+ \beta_3X_3+ \beta_4X_4$; then, a least square estimation of the coefficient ${\beta}_T$ leads to $\hat{\beta}_T = 0.08 \pm 0.006$. This can be (wrongly) interpreted as 'adding one additional kg of $T$ in $m^3$ mixture increases $Y$ by $0.08MPa$'. We expect different behaviour for small and large values of $T$, we expect strong (nonlinear) interactions between the covariates and, more importantly, this result is derived from the body of the distribution, while we are interested in values of $T$ above the observed ones. 

Our objective is to quantify the effect of an extreme amount of blast furnace slag $T$ on $Y$. Specifically, we answer the following questions: 
\begin{enumerate}
    \item Given a concrete mixed with $T=359$ and $\textbf{X}=\textbf{x}$ for some specific value of $\textbf{x}$, if we intervene and change $T$ to $T=400$, what effect on concrete compressive strength can we expect? Using the potential outcome notation, the quantity of interest is $\mu_{\textbf{x}}(400) - \mu_{\textbf{x}}(359)$. Note that $max_{i=1, \dots, n}t_i=359$ (we don't observe the blast furnace slag larger than $359$, and there is no observation in the interval $(220, 359)$) and hence, we have zero data in such an extreme region. We aim to answer this question for a choice $\textbf{x} = \textbf{x}^\star$ where $x_1^\star=239, x_2^\star= 0, x_3^\star=185, x_4^\star=0$ (the observation corresponding to $T_i=359$).    
    \item How would an extreme increase in $T$ change $Y$ for an 'average' concrete (On a population level, i.e. integrating over the covariates)? Using the potential outcome notation, the quantity of interest is  $ \mu(400) - \mu(359)$.
\end{enumerate}

We follow the methodology introduced in Section~\ref{section_algorithm} with $q=0.9$.  Detailed steps, diagnostics and preliminary data analysis can be found in Section~\ref{Appendix_app_2}. The resulting estimates are as follows: 
\begin{align*}
     \hat{\mu}_{\textbf{x}^\star}(400) -  \hat{\mu}_{\textbf{x}^\star}(359)   &= -4.1\pm 3.0, &  \hat{\mu}(400) -  \hat{\mu}(359)  &= -4.5\pm 2.6, \\
     \hat{\omega}_{\textbf{x}^\star}  &= -0.1\pm 0.07, & \hat{\omega} &= -0.11\pm 0.06. 
\end{align*}

The results are similar with a different choices of $q$ (see Table~\ref{Table_application1}). In summary, the results suggest that for a mixture of concrete with covariates $\textbf{X} = \textbf{x}^\star$ and $T=359$, intervening on $T$ and changing it to $T=400$ would decrease the concrete compressive strength by about $4.1MPa$. On the population level, increasing $T$ from $359$ to $400$ would lead to decrease in the concrete compressive strength by about $4.5MPa$. The $95\%$ confidence intervals suggest that this estimate can be inaccurate by about $3MPa$; however, one must be cautious about the interpretation of the confidence intervals, since they are in general unreliable when extrapolating, see Remark~\ref{remark_error}.

Figure~\ref{Kennedy+me} graphically shows the estimation of $\hat{\mu}(t)$ in the body using the method introduced in \cite{Kennedy_2016, kennedy2019}, as well as our estimation of $\hat{\mu}(t)$ for extreme values.

In the Appendix~\ref{Appendix_app_4}, we discuss the assumptions made regarding our results. In brevity, our methodology uses Assumptions~\ref{assumption_max_domain}, \ref{assumption_unconfoundness_tail}, \ref{assumption_positivity_tail}, and  \ref{assumption_linearity_of_tail_1} (amongst some continuity assumptions and SUTVA assumption discussed in Section~\ref{section_problem_statement}).  While we argue that assumptions~\ref{assumption_max_domain} and \ref{assumption_positivity_tail} are minor, we can not disregard the possibility of a hidden confounder (assumption~\ref{assumption_unconfoundness_tail}). The validity of assumption~\ref{assumption_unconfoundness_tail} has to be further argued by an expert knowledge. Finally, the strongest assumption is assumption~\ref{assumption_linearity_of_tail_1}, as its violation leads to the most significant bias. However, this assumption (or a similar assumption using different model) is necessary when extrapolating and it is hypothetically testable by measuring values with $T\approx 400$. Appendix~\ref{Appendix_app_3} also discuss the differences for the range of choices of $q$.

\begin{figure}[]
\centering
\includegraphics[scale=0.5]{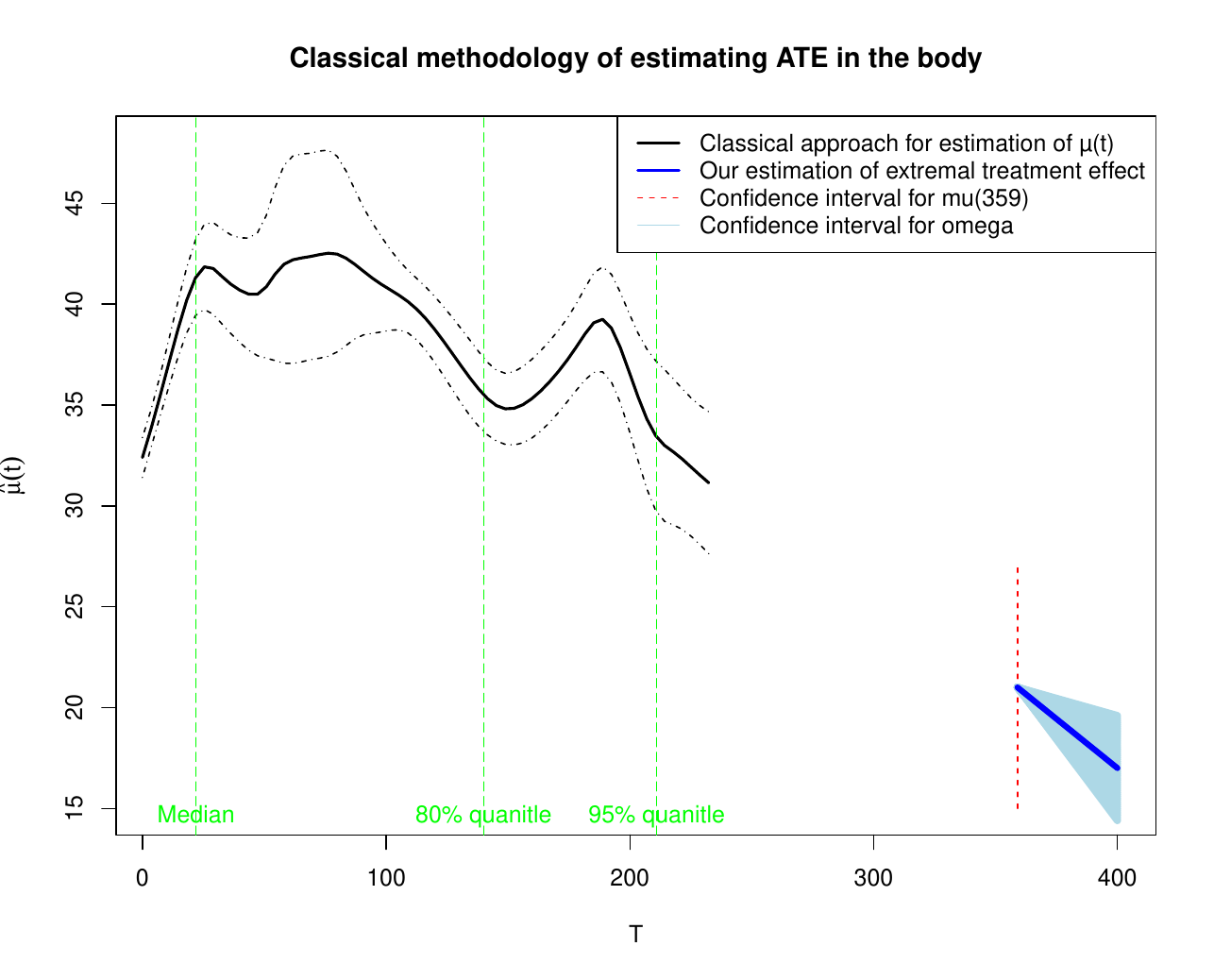}
\caption{\textbf{Black:} The estimation of $\mu(t)$ using the doubly robust estimator introduced in \cite{Kennedy_2016, kennedy2019}. \textbf{Green}: Quantiles of $T$. \textbf{Blue:} Our estimation of $\mu(t)$ for values $t=359, 400$ for $q=0.9$, together with the $95\%$ confidence intervals for the slope. \textbf{Red:} $95\%$ confidence interval for $\mu(359)$.}
\label{Kennedy+me}
\end{figure}

\subsection{Detailed computations of the estimates}
\label{Appendix_app_2}
Some data visualisation can be found in Figures~\ref{Plot22} and \ref{Plot23}.  In the following, we provide detailed descriptions of the steps undertaken in the application for a specific choice  $q=0.9$. First, we estimate $\tau(\textbf{x})$ using a classical quantile regression \citep{koenker_2005}. We observe that all covariates are highly significant and the diagnostic plots do not show any significant problems (except the fact that for many observations is $T_i=0$): we illustrate the estimation on Figure~\ref{Plot33}, where points above the $90\%$ threshold (points in the set $S$) are marked. 

In the next step, we routinely estimate $\theta(\textbf{x})$ \footnote{Using $\texttt{evgam}$ function in $\texttt{evgam}$ package \citep{EVGAM} using the following code: \texttt{evgam( list($T_{e}\sim s(X1_{e})+s(X2_{e})+s(X3_{e})+s(X4_{e}), \sim 1), data =data.frame(T_{e}, X_{e})$, family="gpd"  )} where $T_{e}$ are datapoints in $S$ (above the estimated $90\%$ threshold). } . More precisely, we assume fixed $\xi(\textbf{x}) = \xi\in\mathbb{R}$ and only estimate $\sigma(\textbf{x})$ as a smooth function of the covariates; Plot~\ref{Plot44} shows the estimated values of $\sigma(\textbf{x})$ on a log-scale. 

Finally, following the expression  $\mathbb{E}[Y\mid T=t, \textbf{X}=\textbf{x}] = \alpha[\hat{\theta}(\textbf{x})] + \beta[\hat{\theta}(\textbf{x})]t$, we estimate $\alpha, \beta$ from the data-points in $S$ using $\texttt{gam}$ function \citep{Wood}. Under assumptions~\ref{assumption_max_domain}, \ref{assumption_unconfoundness_tail}, \ref{assumption_positivity_tail}, and  \ref{assumption_linearity_of_tail_1} , we obtain $\hat{\mu}_{\textbf{x}^\star}(400) = \hat{\alpha}[\hat{\theta}(\textbf{x})] + \hat{\beta}[\hat{\theta}(\textbf{x})]400$, and in effect, we return $$\hat{\omega}_{\textbf{x}^\star} = \hat{\mu}_{\textbf{x}^\star}(400) - \hat{\mu}_{\textbf{x}^\star}(359)= \hat{\beta}[\hat{\theta}(\textbf{x}^\star)](400-359) = -4.1.$$ 
Regarding the second question (population level), we simply take the average  $\hat{\mu}(t):=\frac{1}{n}\sum_{i=1}^n  \hat{\alpha}[\hat{\theta}(\textbf{x}_i)]+ \hat{\beta}[\hat{\theta}(\textbf{x}_i)]t$ and compute 
$$\hat{\omega} = \hat{\mu}(400) - \hat{\mu}(359)= \frac{1}{n}\sum_{i=1}^n \hat{\beta}[\hat{\theta}(\textbf{x}_i)](400-359) = -4.5.$$Regarding the confidence intervals, we resample the data using the code: $$\texttt{resampled\_data = sample\_n(data, size = length(y), replace = TRUE)}.$$Then, we follow the same steps as above and estimate the coefficients (from the resampled dataset). We repeat this procedure 500 times. Finally, we take the $95\%$ quantile out of all computed resampled coefficients. For example, $\hat{\omega}_{\textbf{x}^\star} = -6.1 \pm 3.2$ represents the fact that the $95\%$ quantile was $-6.1 + 3.2 = -2.9$, and hence, only $5\%$ of values were larger than $-2.9$.

\subsection{Discussion about the results regarding different threshold q}
\label{Appendix_app_3}

\begin{table}[]
\centering
\begin{tabular}{|c|c|c|} \hline 

$q=0.85$                 & $q=0.9$                  & $q=0.95$                 \\ \hline  
 $\hat{\omega}_{\textbf{x}^\star} = -6.1 \pm 3.2$& $\hat{\omega}_{\textbf{x}^\star} = -4.1\pm 3.0$&$\hat{\omega}_{\textbf{x}^\star} = - 2.8 \pm 2.8$\\ \hline 
$\hat{\omega} = -5.3 \pm 4.0$& $\hat{\omega} = -4.5\pm 2.6$& $\hat{\omega} = - 3.3 \pm 3.1$\\ \hline
\end{tabular}
\caption{Estimates of $\omega_{\textbf{x}^\star}:= \mu_{\textbf{x}^\star}(400) - \mu_{\textbf{x}^\star}(359)$  for different thresholds $q$, together with the corresponding $95\%$ confidence intervals.}
\label{Table_application1}
\end{table}
Table~\ref{Table_application1} shows the results for different choices of $q$. Even though they yield distinct estimates for $\hat{\omega}$, the confidence intervals overlap, and the values $\hat{\omega} \in (-6.4, -1.9)
$ are encompassed by all of them. This suggests some stability in the choice of $q$.

The selection of $q$ reflects the bias-variance tradeoff; as we increase $q$, our inference relies on values closer and closer to $T=400$ (datapoints with small and intermediate $T_i$ can bias our estimation since in this region, increasing $T_i$ can increase $Y$). However, increasing $q$ also means disregarding more and more datapoints, and our estimate will have less power and larger variance.

The challenge of choosing $q$ is a common problem in extreme value theory, and the rule of thumb is to select $q$ as large as possible while maintaining an adequate number of datapoints above the $1-q$ quantile to ensure reasonably good inference.

\subsection{Discussion about the assumptions}
\label{Appendix_app_4}

Our methodology relies on Assumptions~\ref{assumption_max_domain}, \ref{assumption_unconfoundness_tail}, \ref{assumption_positivity_tail}, and \ref{assumption_linearity_of_tail_1}, along with some continuity assumptions and the SUTVA assumption discussed in Section~\ref{section_problem_statement}. Below, we provide a detailed discussion of each assumption.
\begin{enumerate}
    \item Assumptions~\ref{assumption_max_domain} and \ref{assumption_positivity_tail} are considered minor. As mentioned in Section~\ref{section_problem_statement}, Assumption~\ref{assumption_max_domain} is satisfied for most common distributions, and similar model assumptions are imposed in almost all applications utilizing extreme value theory. Assumption~\ref{assumption_positivity_tail} appears to be satisfied, as there is no specific range of values in the support of $T$ that has zero probability of occurring.
    \item  Assumption~\ref{assumption_unconfoundness_tail} is a common and challenging aspect of every causal inference methodology. While our assumption is weaker than the classical unconfoundness assumption (requiring no hidden confounder in the tail), complete rejection of the possibility of its violation is unattainable. A potential hidden confounder could be the 'quality of ingredients.' If the quality is low, engineers might tend to use excessive amounts of $T$ in the mixture, potentially leading to spurious dependence between large $T$ and low $Y$. However, in this case, it seems plausible that this hidden dependence due to low ingredient quality does not introduce a substantial bias. An expert knowledge is required to ensure the validity of this assumption. 
    \item Assumption~\ref{assumption_linearity_of_tail_1} is a strong assumption that allows us to extrapolate observed values into the extremal region. However, this assumption (or at least some similar model assumptions) are necessary; estimating $\mu(400)$ from observed values is not feasible otherwise. In essence, Assumption~\ref{assumption_linearity_of_tail_1} asserts that the relationship between $T$ and $Y$ (given other confounders) is linear in the unobserved region below $T=400$. Since there is no other reason to believe that this relationship has any particular form, a linear assumption seems to be the most suitable choice. Although this assumption is strong, it is hypothetically possible to test by measuring values with $T\approx 400$.
\end{enumerate}

\begin{figure}[H]
\centering
\includegraphics[scale=0.5]{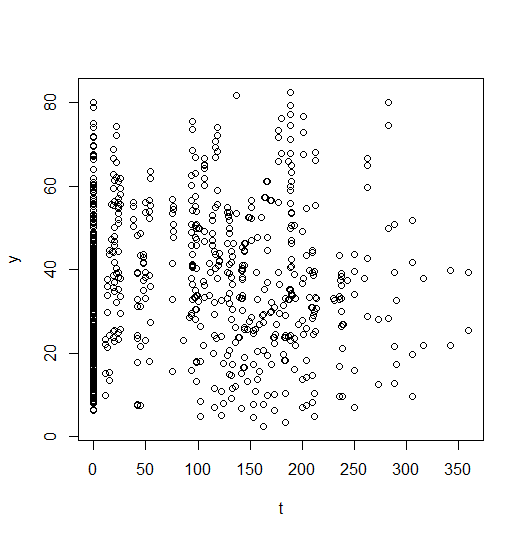}

\caption{The figure illustrate the dependence among $T$ and $Y$. Note that the correlation between $T$ and $Y$ is $0.14\pm 0.07$.  }
\label{Plot22}
\end{figure}

\begin{figure}[H]
\centering
\includegraphics[scale=0.5]{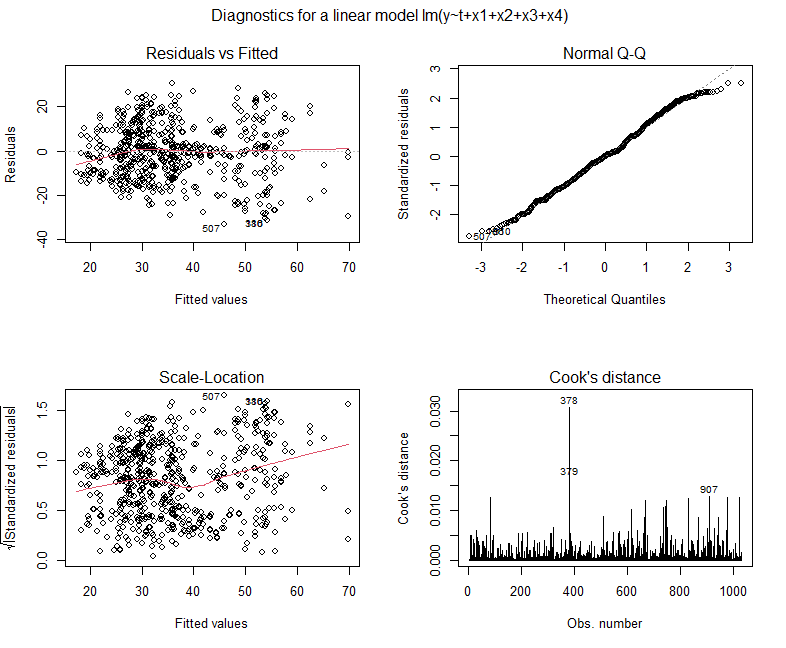}
\caption{Diagnostics of a linear model fitted into the original data.  }
\label{Plot23}
\end{figure}

\begin{figure}[H]
\centering
\includegraphics[scale=0.5]{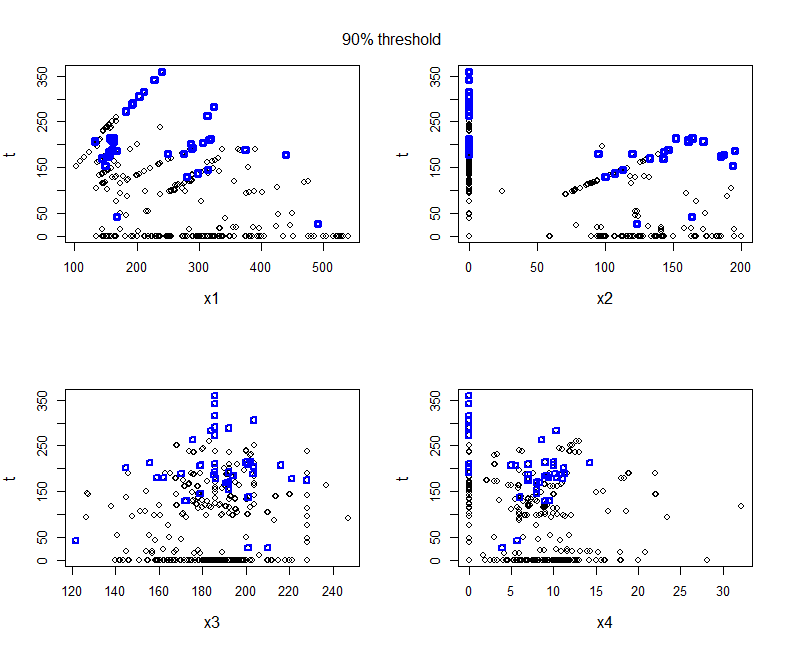}
\caption{ Visualisation of the estimation of $\tau(\textbf{x})$- estimated using classical quantile regression. Blue points characterise the observations above this threshold (points in the set $S$).   }
\label{Plot33}
\end{figure}

\begin{figure}[H]
\centering
\includegraphics[scale=0.5]{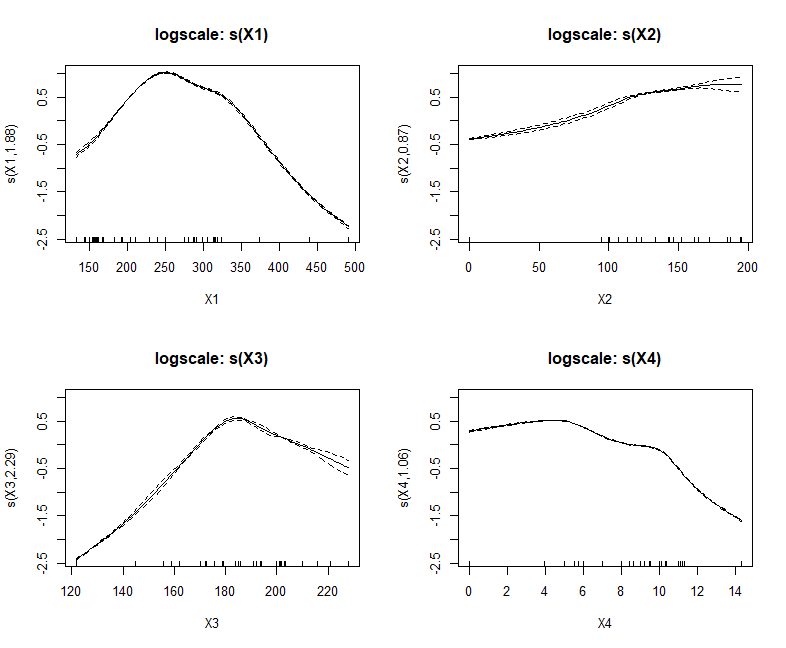}
\caption{ Estimation of scale in the tail-model.    }
\label{Plot44}
\end{figure}

\section{Appendix: Simulations}
\label{Appendix_Simulations}
In this section, we create various simulations setups to assess the performance of our methodology. 
\begin{wrapfigure}{r}{5.5cm}
\includegraphics[width=5.5cm]{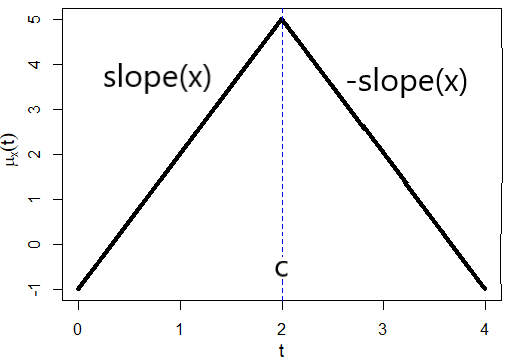}
\caption{Function $\mu_x(t)$ with parameters $c=2$ and $slope(x) =3$. }\label{Plot_slope}
\end{wrapfigure} 
\begin{itemize}
    \item Section~\ref{Section_simulations_high_dimension_confounder} provides insight into how our method scales with the dimension of the confounders $dim(\textbf{X})$.
     \item Section~\ref{section_sim1} extends the simple example presented in Section~\ref{Simulations_simple_example}, evaluating performance across different dependence structures (various copulas), sample sizes, and a range of causal effects.
    \item Section~\ref{Section_simulations_hidden_confounder} addresses a scenario involving a hidden confounder affecting both $T$ and $Y$.
    \item Section~\ref{section_simulations_extremal_region} focuses on variations in the function $\mu(t)$ and assesses the extent to which our method can extrapolate $\mu(t)$ into the 'extreme' region.
\end{itemize}

In some of the simulations, we use the following function: 
\begin{equation}\label{eq875}
    \mu_x(t) = \begin{cases}
      &  5-slope(x)(t-c)\,\,\,\,\,\,\,\,\,\,\,\,\,\,\,\,\,\,\text{ for }t\geq c,\\ 
     & 5+slope(x)(t-c)\,\,\,\,\,\,\,\,\,\,\,\,\,\,\,\,\,\,\text{ for }t<c,
    \end{cases}
\end{equation}
where typically $slope(x)=|x|$ and $c\in\mathbb{R}$ is a hyper-parameter. Graphical visualisation of the function $\mu_x$ for $x=
3$ can be found in Figure~\ref{Plot_slope}. In other words, $\mu_x$ grows with slope $x$ until $c$, and then declines with slope $x$.

\subsection{Simulations with a high dimensional $\textbf{X}$}
\label{Section_simulations_high_dimension_confounder}
In this simulations we consider $\textbf{X} = (X_1, \dots, X_d)$ where the dimension $d$ is potentially large. Consider the following data-generating process:
\begin{itemize}
    \item Let $a_1, \dots, a_d\overset{iid}{\sim} \mathcal{N}(1,1)$ and $b_1, \dots, b_d\overset{iid}{\sim} \mathcal{N}(-1,1)$ be fixed numbers at the beginning of the simulations. 
    \item Consider $\textbf{X}$ being centered Gaussian vector with $cor(X_i, X_j)=0.1$ for all $i\neq j$ and $var(X_i)=1$.
    \item Let $T = \sum_{i=1}^d a_iX_i + \varepsilon_T$, where $\varepsilon_T$ is distributed according to either $\mathcal{N}(0,10)$, $Exp(\frac{1}{10})$ or $Pareto(1,1)$. 
    \item Let $Y = \mu_{x}(T) +\sum_{i=1}^d b_iX_i + \varepsilon_Y $, where $\mu_x(t)$ is defined in (\ref{eq875}) with hyper-parameters $c=slope(x)=1$ and where $\varepsilon_Y\sim\mathcal{N}(0,1)$.
\end{itemize}

This data generating process leads to $\mu(t+1) - \mu(t) = -slope(x) = -1$ for $t\geq c$ and $ \mu(t+1) - \mu(t) = +1$ for $t<c$. Consequently, our primary interest lies in estimating $\omega = \mu(t+1) - \mu(t) =-1$ for $t\geq c$.

Note that a simple linear regression $Y\sim T + X_1 + \dots + X_d$ leads to a biased estimation of the effect of $T$, since the effect is different for large and for small values of $T$. However, simply discarding values where $T_i<1$ leads to a selection bias. 

We generate data as specified with a sample size of $n=5000$. Setting the threshold at $\tau = 0.95$, we employ the methodology outlined in Section~\ref{section_algorithm} to estimate $\omega$. Specifically, we utilize linear parametrization of the parameters in the estimation procedure. This process is repeated $100$ times. The mean of the estimates $\hat{\omega}$ together with $95\%$ quantile for various values of $d$ and distributions of the noise variables can be found in Table~\ref{table_simulations_dimension_d}.

Table~\ref{table_simulations_dimension_d} illustrates that with a sample size of $n=5000$, the results are reasonably accurate as long as $d\leq 50$. The reason for the bias observed in larger dimensions $d$ is that Assumption~\ref{assumption_max_domain} and (\ref{eq1}) are only asymptotic results, and with higher dimensions $d$, we require more data for the asymptotic theory for $T\mid \textbf{X}$ to take effect. It is well known that the convergence rate of the maxima of the Gaussian random sample to an extreme value distribution is very slow, whereas it is faster with Exponential or Pareto distribution \citep{Davis_converngence}. With a large dimension $d$, we also observe a more pronounced effect of the estimation error accumulated in the first step on the second step of the algorithm.

\subsection{Comparison with classical methods}\label{section_simulations_comaprison_classical_methods}
 
We evaluate our extrapolation method by comparing it with several state-of-the-art techniques from existing literature. Specifically, we assess the performance of four methods: the doubly robust estimation method introduced by Kennedy et al. \citep{Kennedy_2016, kennedy2019}, the additive spline estimator proposed by Bia et al. \citep{bia2014stata}, the approach suggested by Hirano and Imbens \citep{Hirano} employing a GAM outcome model (taken from \cite{causaldrf}), and the inverse probability of treatment weighting (IPTW) estimator by VanderWal et al. \citep{vanderwal2011ipw}.

Our analysis employs the same simulation setup described in Section~\ref{Section_simulations_high_dimension_confounder}, utilizing exponentially distributed noise variables (other distributions yield similar results). After generating $(\textbf{x}_i, t_i, y_i)_{i=1}^n$, we estimate $\mu(\tilde{t})$, where $\tilde{t} = \max_{i=1, \dots, n}(t_i)+10$, using all aforementioned methods. Subsequently, we compute the absolute relative error (ARE) defined as $$ARE = \bigg| \frac{\hat{\mu}(\tilde{t}) - \mu(\tilde{t})}{\mu(\tilde{t})}\bigg|.$$ This procedure is repeated 100 times, and the average of the obtained ARE values is presented in Table~\ref{table_classical_methods_result}.

Analysis of Table~\ref{table_classical_methods_result} reveals that our method achieves the smallest extrapolation error. Conversely, the IPTW method \citep{vanderwal2011ipw} exhibits poor performance due to the quadratic nature of the extrapolating curve. Kennedy et al.'s method \citep{Kennedy_2016, kennedy2019} typically produces a constant extrapolation curve. Meanwhile, both Bia et al.'s \citep{bia2014stata} and HI \citep{Hirano} methods are highly sensitive to extreme values in the dataset, leading to significant variability of the estimates.

 Please note that our method assumes linearity in the tail. If this assumption isn't met, our method may produce inferior results. In such instances, alternative regression techniques can be utilized instead of linear regression. For example, neural networks, as demonstrated in \cite{XinweiShen2023engression}, can be employed to address nonlinear behavior and enhance performance.

\subsection{Dependence, sample size and the causal effect}\label{section_sim1}
In the following, we conduct simulations based on a model with covariates $\textbf{X} = (X_1, X_2, X_3)$ that function as a common cause of both $T$ and $Y$. The details of the simulation are as follows:

\begin{itemize}
    \item $\textbf{X}$ is generated with standard Gaussian margins and a Gumbel copula with a parameter $\alpha$ (where $\alpha$ represents the degree of dependence \citep{kolesarova2018}; $\alpha=1$ corresponds to independence, and $\alpha\to\infty$ corresponds to full dependence; see Plot~\ref{Plot2} for an illustration with $\alpha = 2$).
    \item $T$ is generated in such a way that the marginal distribution of $T$ follows an exponential distribution with a scale parameter of 1, and the dependence structure between $\textbf{X}$ and $T$ follows a Gumbel copula with parameter $\alpha$.
    \item The response variable $Y$ is generated as follows:
    \begin{equation}
     Y = \begin{cases}
      & \frac{1}{2}\omega T+f(\textbf{X})+\varepsilon,\,\,\,\,\,\,\,\,\,\,\,\,\,\,\,\,\,\,\,\,\,\,\,\, \text{when }X_1>0, \,\,T>1,\\ 
     & \frac{3}{2}\omega T+f(\textbf{X})+\varepsilon,\,\,\,\,\,\,\,\,\,\,\,\,\,\,\,\,\,\,\,\,\,\,\,\, \text{when }X_1\leq0,\,\, T>1\\ 
     & -10T + 15 + f(\textbf{X})+\varepsilon,\,\,\,\,\,\,\,\,\,\, \text{when }T\leq 1,
    \end{cases}
    \end{equation}
    where $f$ is a randomly generated smooth function\footnote{To randomly generate a d-dimensional function, we use the concept of the Perlin noise generator \citep{PerlinNoise}. For more details, refer to the supplementary package. Readers can conceptualize this as a function ranging from quadratic to linear.}, $\varepsilon\sim N(0,1)$, and $\omega$ is a hyper-parameter that we vary in our simulations. Plot~\ref{Plot2} shows one realization of such a dataset. 
\end{itemize}

Please note that $\mu(t) = \omega t + \mathbb{E}f(\textbf{X})$ for any $t>1$. As such, our primary focus lies in estimating the slope $\mu(t+1) - \mu(t) = \omega$. We generate data with varying parameters, including $\omega$, $\alpha$, and the sample size $n$. Employing the method outlined in Section~\ref{section_algorithm}, we estimate $\omega$ across a spectrum of data-generating processes. For $n=1000$, we set the threshold at $\tau = 0.9$, and for $n>1000$, we use $\tau = 0.95$. This process is repeated 100 times, and the mean and $0.95\%$ quantile are presented in Table~\ref{table_SIM1}. The numbers in the brackets represent the mean of the estimated bootstrap confidence intervals. Ideally, these intervals should align with the $0.95\%$ quantile. 

The findings indicate that the methodology performs as anticipated in this simulation study: augmenting the sample size enhances the estimation, whereas elevating $\alpha$ (heightening the influence of the covariates) degrades the accuracy of the estimation. We observe that the bootstrap confidence intervals align relatively well with the actual $95\%$ quantiles.

\begin{figure}[h]
\centering
\includegraphics[scale=0.6]{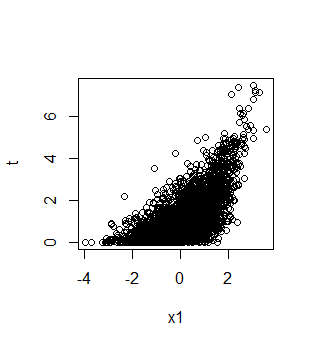}
\includegraphics[scale=0.45]{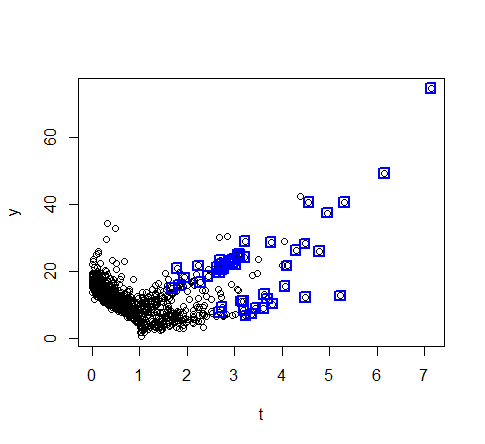}
\caption{The figures illustrate the dependence among $X_1$, $T$, and $Y$, generated based on the simulations outlined in Section~\ref{section_sim1} with a dependence parameter of $\alpha=2$ and $\omega=5$. Points falling within the set $S$ are identified by a blue square.}
\label{Plot2}
\end{figure}

\begin{table}[]
\begin{tabular}{cl|c|c|c|}
                                                  &           & $\omega=0$                             & $\omega=1$& $\omega=10$                            \\ \hline
\multicolumn{1}{c|}{\multirow{3}{*}{$\alpha = 1$}} & $n=1000$  & $\hat{\omega} = -0.05 \pm 0.33 (\pm 0.45)$& $\hat{\omega} = 0.94 \pm 0.60 (\pm 0.55)$& $\hat{\omega} = 9.86 \pm 2.99 (\pm 3.01)$\\
\multicolumn{1}{c|}{}                             & $n=5000$  & $\hat{\omega} = -0.01 \pm 0.28 (\pm 0.25)$& $\hat{\omega} = 0.98 \pm 0.29 (\pm 0.39)$& $\hat{\omega} = 10.09 \pm 1.91 (\pm 2.04)$\\
\multicolumn{1}{c|}{}                             & $n=10000$ & $\hat{\omega} = 0.00 \pm 0.18 (\pm 0.16)$& $\hat{\omega} = 0.99 \pm 0.21 (\pm 0.21)$& $\hat{\omega} = 9.95 \pm 1.62 (\pm 1.59)$\\ \hline
\multicolumn{1}{c|}{\multirow{3}{*}{$\alpha = 1.5$}} & $n=1000$  & $\hat{\omega} = 0.33 \pm 0.92 (\pm 0.96)$& $\hat{\omega} = 1.37 \pm 0.98 (\pm 1.13)$& $\hat{\omega} =  10.95 \pm 3.77 (\pm 3.38)$\\
\multicolumn{1}{c|}{}                             & $n=5000$  & $\hat{\omega} = 0.24 \pm 0.52 (\pm 0.68)$& $\hat{\omega} = 0.97 \pm 0.29 (\pm 0.25)$& $\hat{\omega} = 10.90 \pm 2.22 (\pm 2.43)$\\
\multicolumn{1}{c|}{}                             & $n=10000$ & $\hat{\omega} = 0.13 \pm 0.28 (\pm 0.49)$& $\hat{\omega} = 1.20 \pm 0.46 (\pm 0.55)$& $\hat{\omega} = 10.72 \pm 1.42 (\pm 1.76)$\\ \hline
\multicolumn{1}{c|}{\multirow{3}{*}{$\alpha = 2$}} & $n=1000$  & $\hat{\omega} = -0.17 \pm 1.19 (\pm 1.34)$& $\hat{\omega} = 0.99 \pm 1.01 (\pm 1.31)$& $\hat{\omega} =  11.14 \pm 3.32 (\pm 3.59)$\\
\multicolumn{1}{c|}{}                             & $n=5000$  & $\hat{\omega} = 0.03 \pm 0.66 (\pm 0.85)$& $\hat{\omega} = 1.05 \pm 1.01 (\pm 0.99)$& $\hat{\omega} = 11.15 \pm 2.91 (\pm 2.58)$\\
\multicolumn{1}{c|}{}                             & $n=10000$ & $\hat{\omega} = -0.09 \pm 0.50 (\pm 0.61)$& $\hat{\omega} = 0.96 \pm 0.59 (\pm 0.66)$& $\hat{\omega} = 10.70 \pm 2.02 (\pm 1.83)$\\ \hline
\end{tabular}
\caption{Resulting estimates of parameter $\omega = \mu(t+1) - \mu(t), t>1$ from Section~\ref{section_sim1}. Parameter $\alpha$ represents the dependence between $\textbf{X}, T$. The notation $\hat{\omega} = a \pm b (\pm c)$  represent the following: given  $100$ estimations of $\hat{\omega}$, $a$ is the mean, $b$ is the $95\%$ quantile and $c$ is the (average) $95\%$ quantile computed using bootstrap technique.  }
\label{table_SIM1}
\end{table}

\subsection{Simulations with a hidden confounder}
\label{Section_simulations_hidden_confounder}

Consider a similar simulations setup as in Section~\ref{Simulations_simple_example}, but with a hidden confounder. Consider an observed confounder $X = X_1 \sim \text{Bernoulli}(0.75)$ and a hidden confounder $H\sim N(1,1)$. Let $\delta\in\mathbb{R}$ and define $T = \delta H + X_1 + \varepsilon_T$, where $\varepsilon_T \sim \mathcal{N}(0,1)$. Note that $\delta$ represents the effect of a hidden confounder. Let
  \begin{equation}
     Y = \begin{cases}
      & \delta H+ \frac{2}{3}\omega T+\varepsilon,\,\,\,\,\,\,\,\,\,\,\,\,\,\,\,\,\,\,\,\,\,\,\,\, \text{when }X_1=1, \,\,T>1,\\ 
     & \delta H+ \frac{6}{3}\omega T+\varepsilon,\,\,\,\,\,\,\,\,\,\,\,\,\,\,\,\,\,\,\,\,\,\,\,\, \text{when }X_1=0,\,\, T>1,\\ 
     & \delta H + 3-2T  +\varepsilon,\,\,\,\,\,\,\,\,\,\, \text{when }T\leq 1,
    \end{cases}
    \end{equation}
where $\varepsilon \sim \mathcal{N}(0,1)$. A simple computation leads to $$\mu(t) = 0.75\frac{2}{3}\omega t + (1-0.75)\frac{6}{3}\omega t = \omega t$$ for any $t>1$, while $\mu(t) = -2t + 3$ for $t\leq 1$. Consequently, our primary interest lies in estimating $\omega$ for $t>1$. 

We generate data as specified with a sample size of $n$. Setting the threshold at $\tau = 0.9$, we employ the methodology outlined in Section~\ref{section_algorithm} to estimate $\omega$. This process is repeated $100$ times. The estimates $\hat{\omega}$ for a range of values of $\delta$ and $\omega$ and $n$ can be found in Table~\ref{Table_confounding}. 

The results in Table~\ref{Table_confounding} suggest that a hidden confounder does not bias our estimate unless its strength is very large. Indeed, Remark~\ref{remark1} suggest that Assumption~\ref{assumption_unconfoundness_tail} is still valid since the hidden confounder enters the equality in an additive way.

\begin{table}[]
\begin{tabular}{cl|c|c|c|}
                                                   &           & $\omega=0$                             & $\omega=5$                             & $\omega=10$                            \\ \hline
\multicolumn{1}{c|}{\multirow{3}{*}{$\delta= 0$}}  & $n=1000$  & $\hat{\omega} = 0.05\pm 0.22$& $\hat{\omega} = 5.03 \pm 0.31$& $\hat{\omega} = 10.02 \pm 0.4$\\
\multicolumn{1}{c|}{}                              & $n=5000$  & $\hat{\omega} = -0.01\pm 0.17$& $\hat{\omega} = 4.97 \pm 0.18$& $\hat{\omega} = 9.97\pm 0.20$\\
\multicolumn{1}{c|}{}                              & $n=10000$ & $\hat{\omega} = 0.01\pm 0.12$& $\hat{\omega} = 4.99 \pm 0.12$& $\hat{\omega} = 9.97 \pm 0.14$\\ \hline
\multicolumn{1}{c|}{\multirow{3}{*}{$\delta = 1$}} & $n=1000$  & $\hat{\omega} = 0.55 \pm 0.27$& $\hat{\omega} = 5.60 \pm 0.32$& $\hat{\omega} = 10.48\pm 0.33$\\
\multicolumn{1}{c|}{}                              & $n=5000$  & $\hat{\omega} = 0.53 \pm 0.12$& $\hat{\omega} = 5.50 \pm 0.18$& $\hat{\omega} = 10.51 \pm 0.18$\\
\multicolumn{1}{c|}{}                              & $n=10000$ & $\hat{\omega} = 0.50 \pm 0.08$& $\hat{\omega} = 5.48\pm 0.09$& $\hat{\omega} = 10.46\pm 0.16$\\ \hline
\multicolumn{1}{c|}{\multirow{3}{*}{$\delta = 5$}} & $n=1000$  & $\hat{\omega} = 0.96 \pm 0.04$& $\hat{\omega} = 5.98 \pm 0.22$& $\hat{\omega} = 10.91\pm 0.21$\\
\multicolumn{1}{c|}{}                              & $n=5000$  & $\hat{\omega} = 0.96 \pm 0.03$& $\hat{\omega} = 5.95 \pm 0.07$& $\hat{\omega} = 10.94\pm 0.16$\\
\multicolumn{1}{c|}{}                              & $n=10000$ & $\hat{\omega} = 0.96 \pm 0.025$& $\hat{\omega} = 5.94 \pm 0.04$& $\hat{\omega} = 10.92\pm 0.08$\\ \hline
\multicolumn{1}{c|}{\multirow{3}{*}{$\delta = 10$}} & $n=1000$  & $\hat{\omega} = 0.98 \pm 0.06$& $\hat{\omega} = 5.96 \pm 0.15$& $\hat{\omega} = 10.94\pm 0.30$\\
\multicolumn{1}{c|}{}                              & $n=5000$  & $\hat{\omega} = 0.99 \pm 0.025$& $\hat{\omega} = 5.98\pm 0.07$& $\hat{\omega} = 10.97\pm 0.14$\\
\multicolumn{1}{c|}{}                              & $n=10000$ & $\hat{\omega} = 0.99 \pm 0.02$& $\hat{\omega} = 5.97 \pm 0.04$& $\hat{\omega} = 10.95\pm 0.09$\\ \hline
\multicolumn{1}{c|}{\multirow{3}{*}{$\delta = 50$}} & $n=1000$  & $\hat{\omega} = 0.99 \pm 0.01$& $\hat{\omega} = 5.99 \pm 0.14$& $\hat{\omega} = 10.97\pm 0.28$\\
\multicolumn{1}{c|}{}                              & $n=5000$  & $\hat{\omega} = 0.99 \pm 0.01$& $\hat{\omega} = 5.99\pm 0.07$& $\hat{\omega} = 10.98\pm 0.15$\\
\multicolumn{1}{c|}{}                              & $n=10000$ & $\hat{\omega} = 1.0000 \pm 0.004$& $\hat{\omega} = 5.97 \pm 0.04$& $\hat{\omega} = 10.95\pm 0.08$\\ \hline
\end{tabular}
\caption{Resulting estimates of $\omega = \mu(t+1)-\mu(t), t>1$ from Section~\ref{Section_simulations_hidden_confounder}, together with $95\%$ quantile. Parameter $\delta$ represent the strength of a hidden confounder. }
\label{Table_confounding}
\end{table}

\subsection{Simulations with varying extremal region}\label{section_simulations_extremal_region}
In the following simulations, we explore variations in the function $\mu(t)$ and analyze the corresponding estimations $\hat{\mu}(t)$ for large values of $t$.

Consider the following data-generating process: 
\begin{align*}
&    X = \varepsilon_X, \quad \quad \quad \quad \quad\quad \quad  \varepsilon_X\sim N(0,1)\\&
     T = X + \varepsilon_T, \quad \quad \quad\quad \quad  \varepsilon_T\sim t_\nu \text{\footnotemark}, \\&
     Y = \mu_X(T) + \varepsilon_Y, \quad \quad \quad \varepsilon_Y\sim N(0,1),
\end{align*}
\footnotetext{Students t-distribution with $\nu$ degrees of freedom. Note that if $\nu\to\infty$ we obtain Gaussian distribution.}
where  $\mu_x(t)$ is defined in (\ref{eq875}). If $c$ is too large, we only observe the region where $\mu_x$ grows and hence, our estimation tends to be larger than the true value.

With varying $c$ and $\nu$, we estimate the parameter
\begin{align*}
    \omega = \mu(t+1) - \mu(t) = -\mathbb{E}|X| = -0.798\,\,\,\,\,\,\,\,\,\,\,\text{ for }t\geq c. 
\end{align*}

If we fit a linear model $\mathbb{E}Y = \beta_0 + \beta_TT+\beta_XX$, the estimate $\hat{\beta}_T$ tends to be positive (depending on $c$ and $\nu$, for example, if $c=\nu=5$, then $\hat{\beta}_T = 0.58\pm 0.02$). Using our methodology, we estimate $\hat{\omega}$ as in the previous simulations. The resulting numbers are presented in Table~\ref{table_slope_1}. We observe that if $c$ grows, our estimate becomes more biased as the data above the threshold still fall below $T<c$. Specifically, if $\nu=\infty$, only $0.2\%$ of data points have $T>5$, making the behavior of $\mu(t)$ above $t>c$ challenging to estimate. Note that the degrees of freedom $\nu$ correspond to the heavy-tailness of $T$; smaller $\nu$ values lead to more extreme values of $T$. Conversely, if $\nu=\infty$, $T$ follows a Gaussian distribution. Heavier tails of $T$ lead to better estimates.

\begin{table}[]
\begin{tabular}{|c|c|c|c|c|}
\hline
              True $\omega \approx -0.79$.& $c=1$& $c=2$& $c=5$& $c=10$\\ \hline
$\nu= \infty$& $\hat{\omega} = -0.75\pm 0.15$& $\hat{\omega} = -0.33\pm 0.13$& $\hat{\omega} = 0.63\pm 0.13$& $\hat{\omega} = 0.71\pm 0.38$\\ \hline
$\nu= 5$& $\hat{\omega} = -0.78\pm 0.18$& $\hat{\omega} = -0.72\pm 0.15$& $\hat{\omega} = -0.13\pm 0.15$& $\hat{\omega} = 0.6\pm 0.33$\\ \hline
$\nu= 2$& $\hat{\omega} = -0.77\pm 0.23$& $\hat{\omega} = -0.76\pm 0.22$& $\hat{\omega} = -0.7\pm 0.22$& $\hat{\omega} = -0.51\pm 0.19$\\\hline
\end{tabular}
\caption{Estimates of $\hat{\omega}$ with varying $c$ and $\nu$. Note that true $ \omega = -\mathbb{E}|X| \approx -0.79$ .}
\label{table_slope_1}
\end{table}

\section{Appendix: River data application}
\label{Appendix_application2}

\subsection{Simple illustration with known ground truth}

We used the following set of confounders:
\begin{itemize}
    \item $X_1=$Total precipitation (daily)
    \item $X_2=$Total precipitation during the previous 7 days
    \item $X_3=$Daily maximum of Air temperature 2 m above ground
    \item $X_4=$Daily maximum of Relative air humidity 2 m above ground
    \item $X_5=$Daily mean of Vapour pressure 2 m above ground
    \item $X_6=$Daily maximum of Pressure reduced to sea level
    \item $X_7=$Daily total of Reference evaporation from FAO
\end{itemize}

For pair $2\to 1$ we considered measurements from meteo-station M1 (station code MURI, AG) and for the remaining pairs we used measurements from M2 (a station with a code LUZ). However, variables $X_5, X_6, X_7$ were also measured in station M2 also for the pair $2\to 1$ since some values were missing and M2 has much longer time period of measurements. All these covariates can be safely considered as common causes of $T$ and $Y$, and no feedback loop is present. For modelling of $\theta(\textbf{X})$, we used linear parametrisation: that is, $\tau(\textbf{X}) = const +\sum_{i=1}^7\beta_{i, \tau}X_i$ and $\sigma(\textbf{X}) =  const + \sum_{i=1}^7\beta_{i, \sigma}X_i$ where the parameters $\beta_{i, \tau}$ were estimated using quantile regression (in \texttt{R} using \texttt{quantreg} package) and the parameters $\beta_{i, \sigma}$ were estimated using \texttt{evgam} package. We fixed $\xi(\textbf{X})$ constant. 

In the following, we focus on the pair $2\to 1$; for other pairs of stations, the results were similar. In the modelling of $\theta(\textbf{X})$,  $X_3$ and $X_5$ were not significant on $0.05$ level (note that for an estimation of $\omega$ we do not care which covariates were significant since the function $\theta(\textbf{X})$ is more or less unchanged and adding non-significant covariates only slightly increases variance of the estimation). Using non-parameteric GAM estimation of the parameters did not changed much the final estimation (from $\hat{\omega} = 1.03$ to $\hat{\omega} = 0.99$). Estimation of $\tau(\textbf{X})$ is also plotted in Figure~\ref{Plot3333}. Using linear model in this case does not seem to be very wrong, see plot~\ref{Plot333333} where except of the normality violation the model seem to fit quite well (in the body of the distribution).

\begin{figure}[h]
\centering
\includegraphics[scale=0.5]{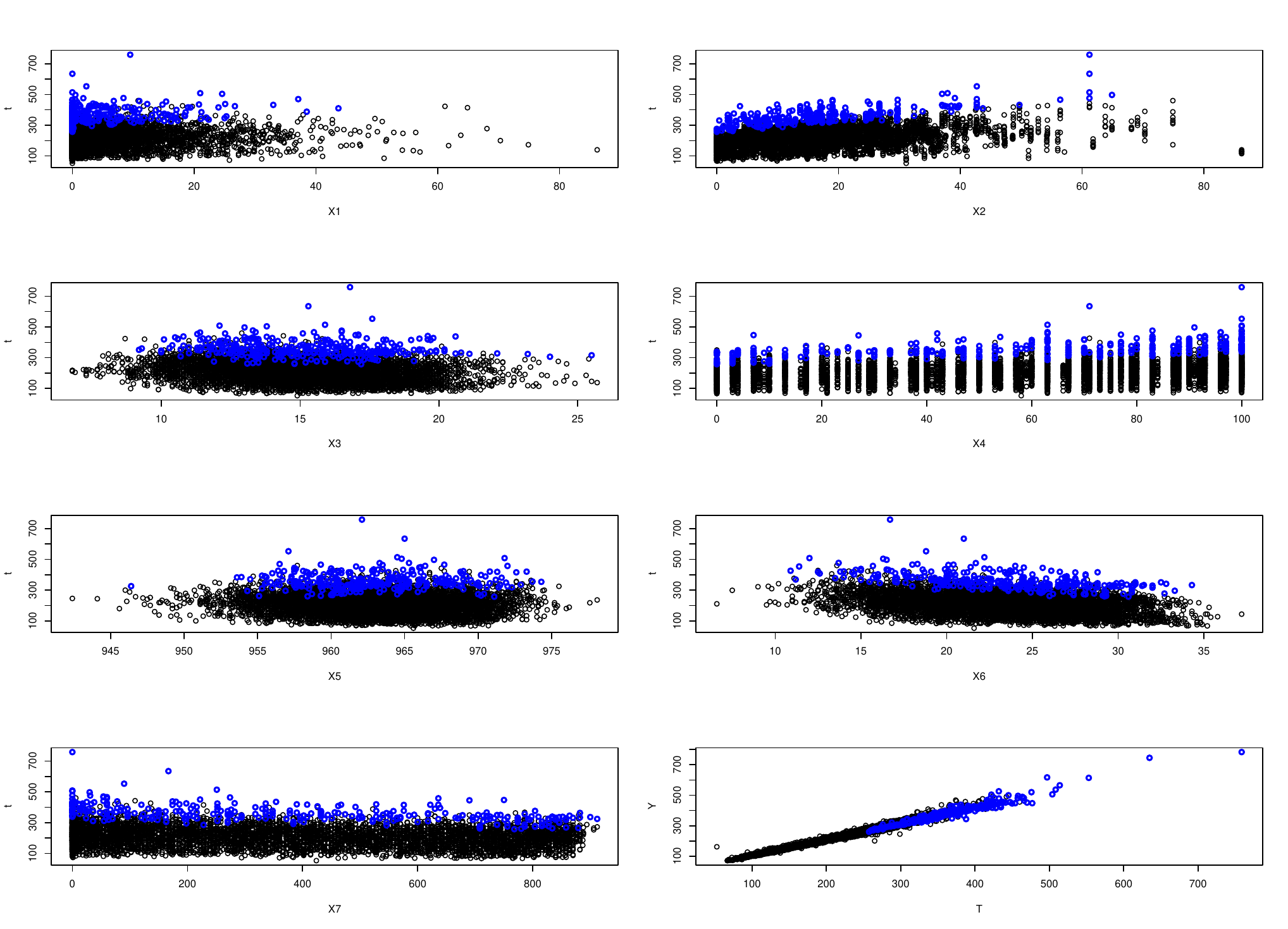}
\caption{ Visualisation of the estimation of $\tau(\textbf{x})$- estimated using classical quantile regression. Blue points characterise the observations above this threshold (points in the set $S$).   }
\label{Plot3333}
\end{figure}

\begin{figure}[h]
\centering
\includegraphics[scale=0.5]{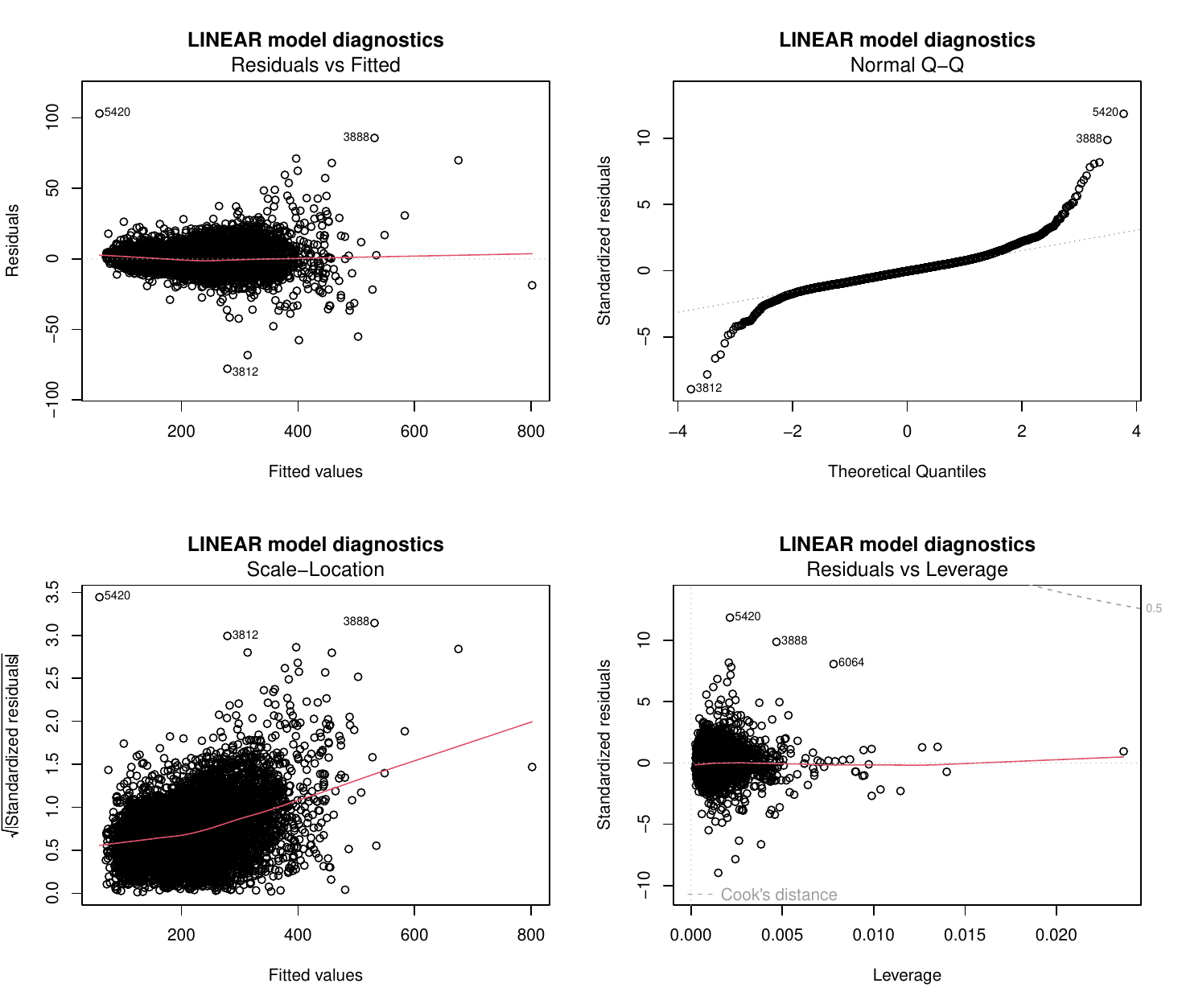}
\caption{ Diagnostics of a model $Y\sim X_1 + \dots + X_7$.}
\label{Plot333333}
\end{figure}

\subsection{Appendix: Effect of precipitation on river discharge}

Figure~\ref{Plot2222} visualize the dataset. 

\begin{figure}[h]
\centering
\includegraphics[scale=0.5]{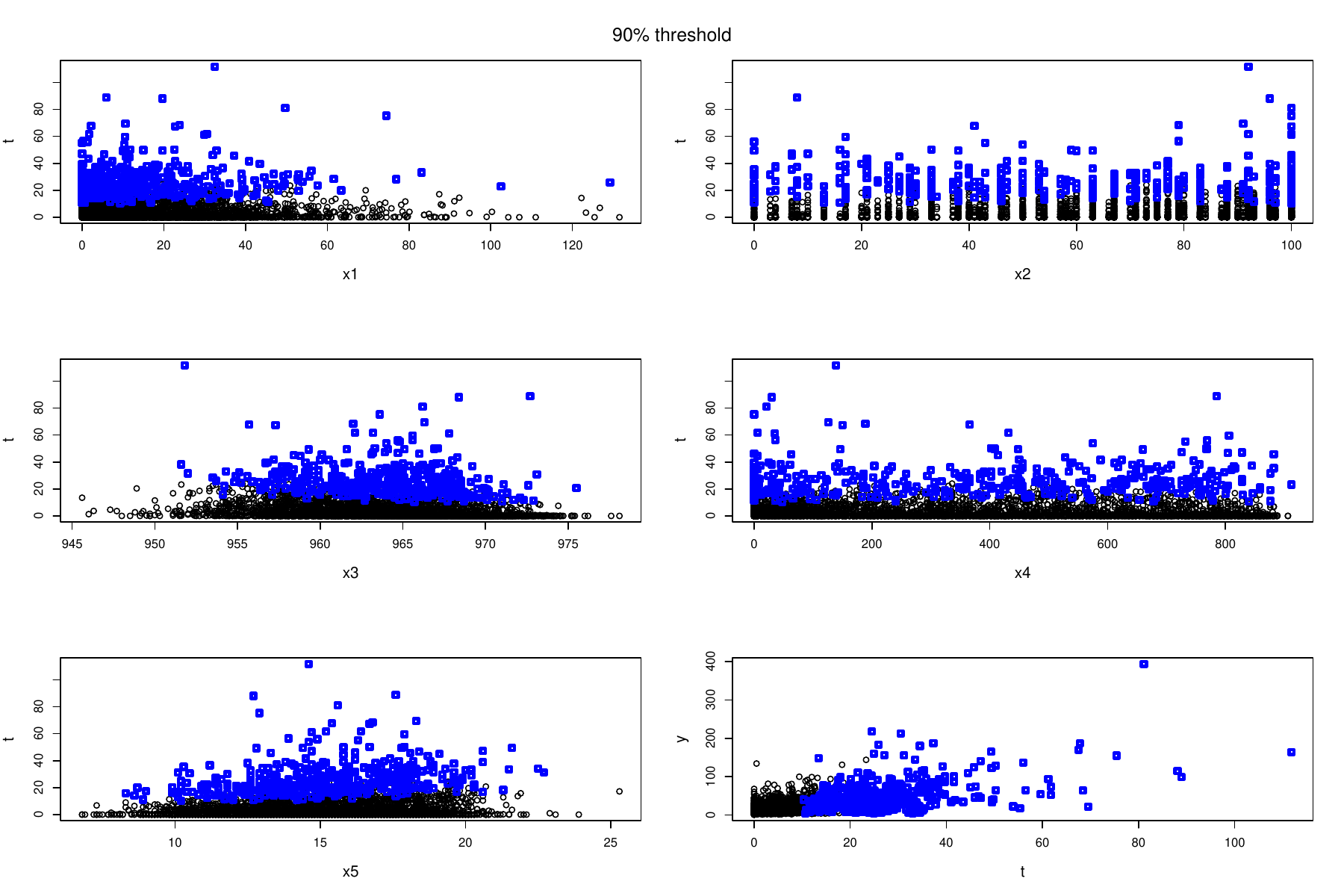}
\caption{ Visualisation of the estimation of $\tau(\textbf{x})$ for Part2 in the application- estimated using classical quantile regression. Blue points characterise the observations above this threshold (points in the set $S$).   }
\label{Plot2222}
\end{figure}

\subsubsection{Choice of variables}
We used the following set of variables:
\begin{itemize}
    \item $Y=$ River discharge on day $i+2$ 
    \item $T=$ Precipitation in the corresponding meteo-station on day $i+1$
    \item $X_1=$Total (sum) precipitation during the previous 7 days (days $i, i-1, \dots, i-6$)
    \item $X_2=$Daily maximum of Air temperature 2 m above ground on day $i$
    \item $X_3=$Daily maximum of Relative air humidity 2 m above ground on day $i$
    \item $X_4=$Daily maximum of Pressure reduced to sea level on day $i$
    \item $X_5=$Daily total of Reference evaporation from FAO on day $i$
\end{itemize}
where $i$ spans from 1.6.1930 up to 29.8.2014 (recall that we only considered the summer months). 
The choice of $Y$ and $T$ was addressed in the main text: since typically the auto-correlation peaking when $T$ represents the day prior to $Y$ (as illustrated in Figure~\ref{ccf}, as well as its extreme counterpart extremogram \citep{extremogram}). The rationale behind choosing $X_1$ is straightforward - precipitation over the preceding days emerges as a significant confounding factor affecting both $Y$ and $T$. Regarding additional variables, we opted for those deemed relevant and with reliable measurements across meteorological stations, all of which were recorded on the preceding day.

Is there a common cause between $Y$ and $T$ that remains unaccounted for? Variables $X_2$ through $X_5$, measured on day $i+1$, could serve as potential common causes for both $Y$ and $T$. For instance, a \textit{sudden} temperature change might elevate the likelihood of intense rainfall, while alterations in river discharge could stem from specific soil characteristics affected by the temperature change. Nevertheless, we contend that the majority of these variables require more than a day to manifest their effects, which we believe are largely encapsulated by our chosen variables.

\begin{figure}[h]
\centering
\includegraphics[scale=0.5]{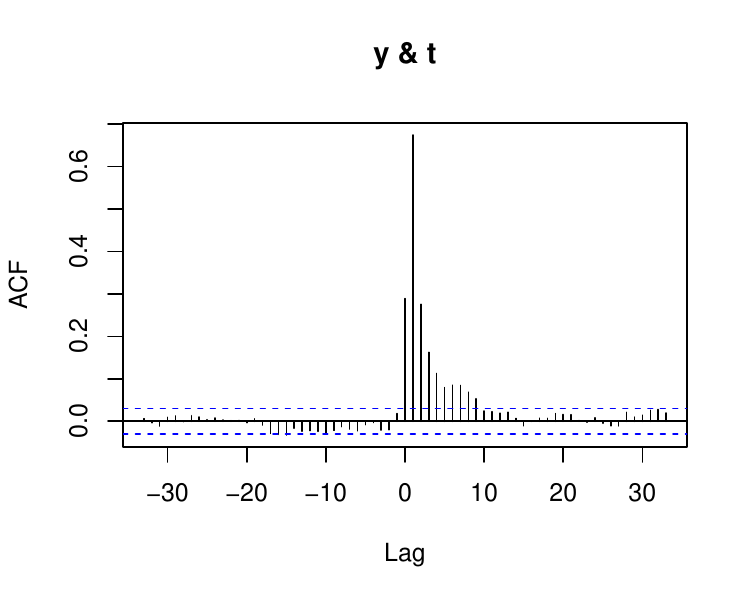}
\includegraphics[scale=0.5]{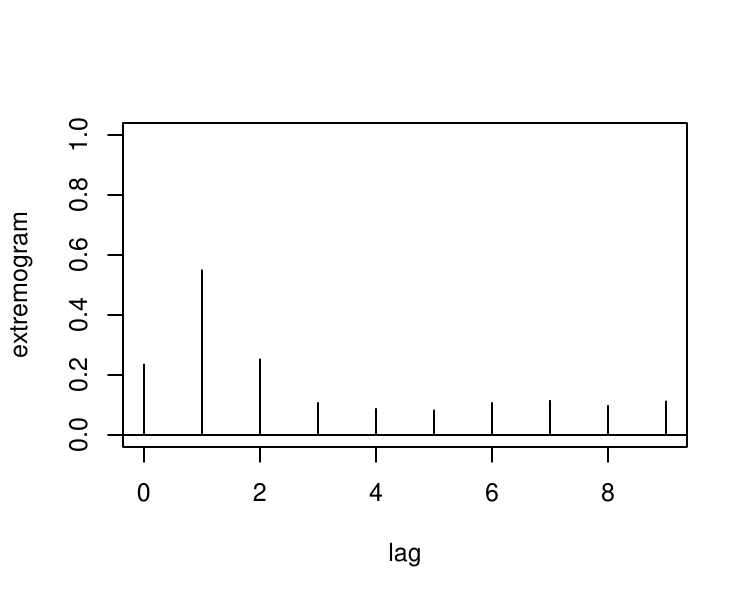}

\caption{Cross-correlation and cross-extremogram of precipitation recorded at meteostation M2 and water discharge at Station 3, both measured on the same day. }
\label{ccf}
\end{figure}

\subsubsection{Computation of $\hat{\beta}$ }
\label{application_river_computation_of_beta}
In Table \ref{table_stations}, we introduced a variable $\hat{\beta}$ that represent the effect of precipitation on the river discharge level in the body of $T$. This can be defined in several ways:
\begin{enumerate}
    \item Using the method introduced in \cite{Kennedy_2016, kennedy2019} we estimate $\hat{\mu}(t+1) - \hat{\mu}(t)$ for $ t = \mathbb{E}(T)$. 
    \item Using very straightforward approach where we model the data generating process of $Y$ using a linear structural equation model $Y = c+ \beta_T T + \beta_{\textbf{X}}\textbf{X} + \varepsilon$ and return the least square estimate of $\hat{\beta}_T$.
\end{enumerate}

Coincidentally, both approaches return a very similar value of $\hat{\beta}$ and hence, it does not matter which approach we use (values in Table~\ref{table_stations} are from using the second approach).

\section{Consistency, bootstrap and its asymptotics} \label{appendix_consistency}

In this section, we give more detailed description of the bootstrap algorithm and a more precise statement of Theorem~\ref{Theorem_consistency} together with its proof.  Theorem~\ref{theorem_consistency2} presents the consistency of $\hat{\mu}(t)$ for large $t$, while Theorem~\ref{theorem_consistency1} shows the consistency of $\hat{\omega}_{\textbf{x}^\star}$ under different assumptions. Note that, using the notation from Sections~\ref{section_algorithm} and \ref{section_simulations}, 
\begin{equation}\label{eq98798787}
    \begin{split}
 \hat{\omega}_{\textbf{x}^\star}&=\lim_{t\to\infty}\hat{\mu}_{\textbf{x}^\star}(t+1) - \hat{\mu}_{\textbf{x}^\star}(t)=\lim_{t\to\infty} \hat{\alpha}[\hat{\theta}(\textbf{x}^\star)]+ \hat{\beta}[\hat{\theta}(\textbf{x}^\star)](t+1) - \hat{\alpha}[\hat{\theta}(\textbf{x}^\star)]- \hat{\beta}[\hat{\theta}(\textbf{x}^\star)]t \\& =\hat{\beta}[\hat{\theta}(\textbf{x}^\star)].        
    \end{split}
\end{equation}

 \subsection{Bootstrap}\label{appendix_bootstrap_definition}
In what follows, we explain in details the procedure for an estimator $\hat{\zeta}_\alpha$ satisfying $$P(\omega_{\textbf{x}^\star}\leq  \hat{\zeta}_\alpha ) \geq  1-\alpha,\, \,\,\,\,\,\,\,\,\,\,\,\,\alpha\in(0,1).$$
We only focus on the upper confidence interval, the lower and both-sided intervals can be done analogously. Our approach is standard and \cite{Vaart_1998} provides a good overview. 

Let $P_n$ be the empirical distribution of the observations $\textbf{Z}_i:=(\textbf{X}_i, T_i, Y_i), i=1, \dots, n$. We draw a random sample   $(\textbf{Z}^\star_{1}, \dots \textbf{Z}^\star_{n})\overset{iid}{\sim} P_n$, and we compute the parameter $\hat{\omega}^\star$ from $(\textbf{Z}^\star_{1}, \dots \textbf{Z}^\star_{n})$ the same way as we compute $\hat{\omega}$ from $(\textbf{Z}_1, \dots, \textbf{Z}_n)$. We define $\hat{\zeta}_\alpha$ as the upper $\alpha$-quantile of $\hat{\omega}^\star$: that is, the smallest value $x = \hat{\zeta}_\alpha$ that satisfies 
$$
P\bigg(\hat{\omega}^\star - \hat{\omega}\leq x \mid P_n \bigg)\geq 1-\alpha.
$$
The notation \(P(\cdot | P_n)\) indicates that the distribution of $\hat{\omega}^\star$ must be evaluated assuming that the observations are sampled according to \(P_n\) given the original observations. In particular, in the preceding display \(\hat{\omega}\) is to be considered nonrandom.

It is almost never possible to calculate the bootstrap quantiles exactly \citep{Vaart_1998}. In practice, these estimators are approximated by a simulation procedure. A large number of independent bootstrap samples \(\textbf{Z}_1^\star, \ldots, \textbf{Z}_n^\star\) are generated according to the estimated distribution \(P_n\). Each sample gives rise to a bootstrap value $\hat{\omega}^\star$. Finally, the bootstrap quantiles are estimated by the empirical quantiles of these bootstrap values. This simulation scheme always produces an additional (random) error in the coverage probability of the resulting confidence interval. In principle, this error can be made arbitrarily small by using a sufficiently large number of bootstrap samples. Therefore, the additional error is usually ignored in the theory of the bootstrap procedure. This section follows this custom and concerns the "exact" quantiles, without taking a simulation error into account.

\subsection{Simplifying assumptions}\label{asdadsfgw}

We simplify some steps in the inference process in order to simplify the proof of the consistency. In particular, we assume the following:

\begin{enumerate}
  \renewcommand{\labelenumi}{\textbf{\Alph{enumi})}}
    \item  \textbf{(Causality justification)} Consider Assumptions  \ref{assumption_max_domain}, \ref{assumption_unconfoundness_tail}  and \ref{assumption_linearity_of_tail_1} to be valid.
    \item \textbf{(Step 2 convergence)}  $\mathbb{E}Y^2<\infty$, $\mathbb{E}||\textbf{X}||^2<\infty$ and $(\textbf{X}, T)$ satisfy Grenander conditions \footnote{This is a minor assumption assuring that the matrix of observations have a full rank with probability tending to one. See Table 4.2 in \cite{greene2008econometric}.}. 
    \item \textbf{(Step 1 convergence)} We assume that conditions R1, R2 and R3 from \cite{chernozhukov2005extremal} are satisfied. That is, $E(\textbf{XX}^\top)$ is positive semi-definite, $\textbf{X}$ has a compact support with existing and finite quantile densities $\frac{\partial F^{-1}_U(\tau|x)}{\partial \tau}$, $\frac{\partial F^{-1}_U(\tau)}{\partial \tau}$ where $U = T-\tau_{lin}^\top \textbf{X}$,  $\tau_{lin}\in\mathbb{R}^d$.
    \item \textbf{(Linearity)} Assume that functions $\theta$, $\alpha, \beta$ are linear, functions $\sigma, \xi$ are constant and that we employ linear regression for the estimation of the parameters. 
%That is, we assume that for a large $q$ holds $Q_T(q\mid \textbf{X}) = \tau(\textbf{X}) = \tau_{lin}^\top \textbf{X}$ where $Q_T(q\mid \textbf{X})$ is a conditional $q$-quantile of $T$ given $\textbf{X}$. 
\end{enumerate}

In particular, following the notation in Section~\ref{section_algorithm} and using the notation $\tau(\textbf{x}) = \tau_{lin}^\top\textbf{x}$, our algorithm is as follows: 

\begin{itemize}
    \item choose $q\in(0,1)$,
    \item \textbf{(Step 1)} estimate $\tau_{lin}\in\mathbb{R}^d$ by minimising $\hat{\tau}_{lin} \in argmin_b\sum_{i=1}^nh_q(T_i - \textbf{X}_i^\top b)$ where $h_q(x) = x (q 1_{x\geq 0} - (1-q)1_{q<0})$.  
    \item \textbf{(Step 2)} we estimate $\alpha, \beta\in\mathbb{R}$ using least squares in a model      \begin{equation}
        \label{asfe}
        \mathbb{E}[Y\mid T=t, \tau(\textbf{X})=\hat{\tau}_{lin}^\top\textbf{x}] = \alpha\hat{\tau}_{lin}^\top\textbf{x} + \beta\hat{\tau}_{lin}^\top\textbf{x}t,    \end{equation}  from the data-points in $S$ (that is, we only consider $t>\hat{\tau}_{lin}^\top\textbf{x}$). Using  $\textbf{\textsf{R}}$ language we run the following code:
     \texttt{ fit = lm(Y $\sim$ s +s:T$_S$, data = data.frame(s, T$_S$))}, where $s = \hat{\tau}_{lin}^\top\textbf{X}_S$, $T_S = \{T_i: i\in S\}$ and $\textbf{X}_S = \{\textbf{X}_i: i\in S\}$. 
    \item we output $\hat{\omega}_{\textbf{x}^\star} = \hat{\beta}\hat{\tau}_{lin}^\top\textbf{x}^\star$ (see (\ref{eq98798787})). 
\end{itemize}

\begin{remark}
    Assumption~C implies consistency of $\hat{\tau}_{lin}$ (under the assumption that $q$ is chosen as a function of the sample size $n$, denoted as $q=q_n$ satisfying $\lim_{n\to\infty}q_n=1$ and $\lim_{n\to\infty}n(1-q_n)=\infty$), see Theorem~5.1 in \cite{chernozhukov2005extremal}. This assumption can be simplified by directly assuming consistency of $\hat{\tau}_{lin}$.  
\end{remark}
\subsection{Consistency }

We present two consistency results. One concerns the consistency of $\hat{\mu}(t)$ under general non-linear assumptions but under a neglecting the GPD approximation error. The second result describes the consistency of  $\hat{\omega}_{\textbf{x}^\star} $ under linear assumptions presented in Section~\ref{asdadsfgw}.

\begin{theorem}[Consistency] \label{theorem_consistency2}
Consider Assumptions  \ref{assumption_max_domain}, \ref{assumption_unconfoundness_tail}  and \ref{assumption_linearity_of_tail_1} to be valid. 
\begin{itemize}
    \item Assume that $\theta$, $\alpha$, and $\beta$ are continuous functions, and suppose we employ consistent estimators for $\theta$, $\alpha$, and $\beta$. For instance, the Generalized Additive Model (GAM) estimator \citep{Wood} has been shown to be consistent under specific smoothness conditions.
    \item  Let $q\in (0,1)$ be chosen such that the distribution of $T\mid T>\tau_q(\textbf{x}), \textbf{X}=\textbf{x}$ follows $ GPD(\theta(\textbf{x}))$ for all $\textbf{x}\in \mathcal{X}$, where $\mathcal{X}=supp(\textbf{X}) $ is assumed to be compact. 

\end{itemize}

Under these conditions, our estimator is consistent in the sense that for all $t\in\mathcal{T}$
$$
\hat{\mu}(t) \overset{P}{\to} \tilde{\mu}(t)\,\,\,\,\,\,\,\,\,\,\,\,\,\,\,\text{as } n\to\infty,
$$
where $\tilde{\mu}$ is a function that satisfies $\tilde{\mu}(t)\sim \mu(t)$ as $t\to\tau_R$. 
\end{theorem}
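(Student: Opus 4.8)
The plan is to show $\hat\mu(t)\overset{P}{\to}\tilde\mu(t)$ for the population functional
\[
\tilde\mu(t):=\mathbb E\big[\alpha(\theta(\textbf X))\big]+\mathbb E\big[\beta(\theta(\textbf X))\big]\,t,
\]
and then to verify separately that $\tilde\mu(t)\sim\mu(t)$ as $t\to\tau_R$. Writing $g(\textbf x;t):=\alpha[\theta(\textbf x)]+\beta[\theta(\textbf x)]t$ and $\hat g(\textbf x;t):=\hat\alpha[\hat\theta(\textbf x)]+\hat\beta[\hat\theta(\textbf x)]t$, so that $\hat\mu(t)=\frac1n\sum_{i=1}^n\hat g(\textbf x_i;t)$, I would split $\hat\mu(t)-\tilde\mu(t)=(\mathrm I)+(\mathrm{II})$ with $(\mathrm I)=\frac1n\sum_i\big(\hat g(\textbf x_i;t)-g(\textbf x_i;t)\big)$ and $(\mathrm{II})=\frac1n\sum_i g(\textbf x_i;t)-\mathbb E[g(\textbf X;t)]$, and treat the two terms separately.

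For $(\mathrm{II})$: since $\theta$, $\alpha$, $\beta$ are continuous and $\mathcal X=\mathrm{supp}(\textbf X)$ is compact, $g(\cdot;t)$ is bounded on $\mathcal X$; hence $\mathbb E|g(\textbf X;t)|<\infty$, $\tilde\mu(t)$ is well defined, and the weak law of large numbers gives $(\mathrm{II})\overset{P}{\to}0$. For $(\mathrm I)$: I would bound $|(\mathrm I)|\le\sup_{\textbf x\in\mathcal X}|\hat g(\textbf x;t)-g(\textbf x;t)|$ and control the supremum through the chain
\[
\hat\alpha[\hat\theta(\textbf x)]-\alpha[\theta(\textbf x)]=\big(\hat\alpha[\hat\theta(\textbf x)]-\alpha[\hat\theta(\textbf x)]\big)+\big(\alpha[\hat\theta(\textbf x)]-\alpha[\theta(\textbf x)]\big),
\]
and likewise for $\beta$. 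The first bracket is at most $\sup_{\textbf s}|\hat\alpha(\textbf s)-\alpha(\textbf s)|$ over a fixed compact neighbourhood of $\theta(\mathcal X)$ (which eventually contains $\hat\theta(\mathcal X)$ by uniform consistency of the Step~1 estimator), and this is $o_P(1)$ by the assumed consistency of the Step~2 estimator, uniform on compacts (immediate in the linear parametrisation, part of the cited GAM consistency otherwise). The second bracket is at most $\omega_\alpha\!\big(\sup_{\textbf x\in\mathcal X}\|\hat\theta(\textbf x)-\theta(\textbf x)\|\big)$, where $\omega_\alpha$ is the modulus of continuity of $\alpha$ on a compact set; since $\sup_{\textbf x}\|\hat\theta(\textbf x)-\theta(\textbf x)\|\overset{P}{\to}0$ and $\omega_\alpha(\delta)\to0$ as $\delta\to0$, this is $o_P(1)$ too. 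Combining, $(\mathrm I)\overset{P}{\to}0$ and $\hat\mu(t)\overset{P}{\to}\tilde\mu(t)$. Here I would make explicit that ``consistency of $\hat\alpha,\hat\beta$'' is to be read for the actual two-stage construction — least squares on the selected set $S=\{i:t_i>\hat\tau(\textbf x_i)\}$ using the plug-in $\hat\theta$ — and that Assumptions~\ref{assumption_max_domain}, \ref{assumption_unconfoundness_tail} and \ref{assumption_linearity_of_tail_1}, together with the choice of $q$ with exact GPD above the $q$-quantile, are precisely what guarantee that conditioning on $T>\hat\tau(\textbf X)$ does not bias this regression, so that the limit is indeed $(\alpha,\beta)$.

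It remains to verify $\tilde\mu(t)\sim\mu(t)$ as $t\to\tau_R$. By Lemma~\ref{lemma2}, $\mu(t)\sim\mathbb E\big[\mathbb E[Y\mid T=t,\theta(\textbf X)]\big]$, and by Assumption~\ref{assumption_linearity_of_tail_1}, $\mathbb E[Y\mid T=t,\theta(\textbf X)=\textbf s]=\alpha(\textbf s)+\beta(\textbf s)t+r(\textbf s,t)$ with $r(\textbf s,t)=o(\alpha(\textbf s)+\beta(\textbf s)t)$ pointwise in $\textbf s$. Using compactness of $\mathcal X$ to upgrade this to a uniform statement, $\sup_{\textbf s\in\theta(\mathcal X)}|r(\textbf s,t)|/(1+t)\to0$, a dominated-convergence argument lets me exchange $\lim_{t\to\tau_R}$ with the expectation over $\textbf X$, giving $\mathbb E\big[\mathbb E[Y\mid T=t,\theta(\textbf X)]\big]=\tilde\mu(t)+o(1+t)\sim\tilde\mu(t)$, since $\tilde\mu(t)\asymp t$ when $\mathbb E[\beta(\theta(\textbf X))]\ne0$ and is bounded away from $0$ otherwise; chaining the two equivalences gives the claim. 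The main obstacle I anticipate is exactly this last interchange: turning the pointwise ``conditional linearity in the tail'' into a statement about the marginal $\mu(t)$ requires controlling the remainder relative to $t$ rather than merely showing it vanishes pointwise, and compactness of $\mathcal X$ plus continuity are what make the domination go through. A secondary subtlety is that the Step~2 estimator is built on the random subsample $S$ from the plug-in $\hat\theta$, so the uniform-on-compacts consistency of $\hat\theta$ and the uniform continuity of $\alpha,\beta$ are essential for combining the two estimation errors inside $(\mathrm I)$.
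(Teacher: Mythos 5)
Your proposal is correct and follows essentially the same route as the paper: you establish $\hat\mu(t)\overset{P}{\to}\tilde\mu(t)$ by combining uniform-on-compacts consistency of the plug-in estimators with the law of large numbers (your (I)/(II) split is just a reshuffling of the paper's two displays), and you obtain $\tilde\mu(t)\sim\mu(t)$ from Lemma~\ref{lemma2}, Assumption~\ref{assumption_linearity_of_tail_1} and an interchange of the tail limit with the integral over the compact support, which is exactly the content of the paper's auxiliary Lemma~\ref{lemma_o_integraloch}. Your explicit treatment of the modulus of continuity and of the uniformity of the tail remainder makes precise a step the paper states more tersely, but it is the same argument.
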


The second assumption outlined in Theorem~\ref{theorem_consistency2} is introduced to address certain technical hurdles that arise when dealing with a quantile $q$ which varies with the sample size $n$. Broadly speaking, when $q$ is not fixed, the statistical framework becomes considerably more intricate, making the task of demonstrating the consistency of a quantile regression notably challenging. Please note that while the distribution of $T\mid T>\tau_q(\textbf{x}), \textbf{X}=\textbf{x}$ converges to a Generalized Pareto Distribution ($GPD(\theta(\textbf{x}))$) in the limit for large $q$, exact validity is limited to special cases, such as when $T\mid \textbf{X}$ follows a Pareto distribution. However, by selecting $q$ sufficiently large, one can mitigate this issue, effectively reducing the disparity between the distributions of $T\mid T>\tau_q(\textbf{x}), \textbf{X}=\textbf{x}$ and $GPD(\theta(\textbf{x}))$ to insignificance. This theorem, therefore, provides valuable insight into the general consistency of the model, despite the idealized nature of the assumption.

Note that Theorem~\ref{theorem_consistency2} can be reformulated for $\mu_{\textbf{X}}$ analogously.

The subsequent theorem does not necessitate a fixed $q$; however, it presupposes linearity in the models for $T$ and $Y$. 
\begin{theorem}[Consistency] \label{theorem_consistency1}
Under Assumptions~A,B,C and D, where $q$ is chosen as a function of the sample size $n$, denoted as $q=q_n$ satisfying $\lim_{n\to\infty}q_n=1$ and $\lim_{n\to\infty}n(1-q_n)=\infty$, our estimator $\hat{\omega}_{\textbf{x}^\star}$ is consistent. That is, 
$$\hat{\omega}_{\textbf{x}^\star} - {\omega}_{\textbf{x}^\star}\overset{P}{\to} 0, \quad \text{as} \quad n\to\infty. $$
\end{theorem}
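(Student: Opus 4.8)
The plan is to reduce the statement to two separate consistency claims, one for each stage of the procedure, and then recombine them. Since $\textbf{x}^\star$ is fixed and, under the linearity Assumption~D, conditioning on $\theta(\textbf{X})$ collapses to conditioning on the scalar $\tau(\textbf{X})=\tau_{lin}^\top\textbf{X}$ (because $\sigma,\xi$ are constant), Assumption~\ref{assumption_linearity_of_tail_1} gives $\omega_{\textbf{x}^\star}=\beta[\theta(\textbf{x}^\star)]=\beta\,\tau_{lin}^\top\textbf{x}^\star$, while the algorithm outputs $\hat{\omega}_{\textbf{x}^\star}=\hat{\beta}\,\hat{\tau}_{lin}^\top\textbf{x}^\star$. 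Hence it suffices by Slutsky's theorem to prove $\hat{\tau}_{lin}\overset{P}{\to}\tau_{lin}$ and $\hat{\beta}\overset{P}{\to}\beta$, and then the conclusion $\hat{\omega}_{\textbf{x}^\star}-\omega_{\textbf{x}^\star}\overset{P}{\to}0$ follows.

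\textbf{First stage.} The consistency $\hat{\tau}_{lin}\overset{P}{\to}\tau_{lin}$ is essentially imported: under Assumption~C (conditions R1--R3 of \cite{chernozhukov2005extremal}) and the choice $q=q_n$ with $q_n\to1$, $n(1-q_n)\to\infty$, I would invoke Theorem~5.1 of \cite{chernozhukov2005extremal}. From this I would extract the two facts needed downstream: the generated regressors $s_i=\hat{\tau}_{lin}^\top\textbf{X}_i$ are uniformly close to $s_i^0=\tau_{lin}^\top\textbf{X}_i$ over the sample (using $\mathbb{E}\|\textbf{X}\|^2<\infty$ from Assumption~B), and the feasible selection set $S=\{i:T_i>\hat{\tau}_{lin}^\top\textbf{X}_i\}$ agrees with the oracle set $S^0=\{i:T_i>\tau_{lin}^\top\textbf{X}_i\}$ except on a negligible fraction of indices, which holds because $T\mid\textbf{X}$ admits a positive tail density (Assumption~\ref{assumption_positivity_tail}).

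\textbf{Second stage.} I would first analyse the oracle least-squares estimator $\tilde{\beta}$ obtained by regressing $Y$ on $(s^0,\,s^0T)$ over $i\in S^0$. Writing out the normal equations expresses $\tilde{\beta}-\beta$ as a continuous function of empirical (triangular-array) second moments of $(s_i^0,\,s_i^0T_i,\,Y_i)$ restricted to $S^0$ together with the model-approximation residual $r(t,s):=\mathbb{E}[Y\mid T=t,\tau(\textbf{X})=s,\,T>\tau(\textbf{X})]-(\alpha s+\beta st)$. Because $q_n\to1$, every retained observation has $T_i$ above a threshold tending to $\tau_R$, so Assumption~\ref{assumption_linearity_of_tail_1} forces the contribution of $r$ on $S^0$ to vanish; the Grenander conditions (Assumption~B) make the design Gram matrix eventually invertible, and $\mathbb{E}Y^2<\infty$, $\mathbb{E}\|\textbf{X}\|^2<\infty$ deliver the law of large numbers for the relevant moment averages, so $\tilde{\beta}\overset{P}{\to}\beta$. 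I would then show that replacing the oracle pair $(s_i^0,S^0)$ by the feasible $(s_i,S)$ perturbs the estimator by $o_P(1)$, again using first-stage consistency plus the moment bounds, yielding $\hat{\beta}\overset{P}{\to}\beta$ and hence the theorem.

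\textbf{Main obstacle.} The hard part is the second stage, where three effects must be handled simultaneously: the regressor $s_i$ is \emph{estimated}, the selection region $S$ is \emph{random and data-dependent}, and the effective sample size $|S|\asymp n(1-q_n)$ grows while the threshold \emph{moves with $n$}, so the stage-two OLS is a triangular-array problem in which the working linear model holds only asymptotically. Controlling the approximation residual $r$ against the slow growth of $|S|$, and certifying that the (slowly converging) first-stage error does not accumulate through the plug-in $s_i$ and the set $S$, is the technical core; the remaining steps are routine applications of Slutsky's theorem and the law of large numbers.
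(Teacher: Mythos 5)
Your overall skeleton coincides with the paper's: stage one is imported exactly as you say (Theorem~5.1 of \cite{chernozhukov2005extremal} under Assumption~C and the intermediate-order choice $q_n\to1$, $n(1-q_n)\to\infty$), stage two is an OLS consistency statement under the Grenander/moment conditions of Assumption~B (the paper isolates this as Lemma~\ref{lemma_grenander}, citing Theorem~4.4 of \cite{greene2008econometric}), the target is written as $\omega_{\textbf{x}^\star}=\beta\,\tau_{lin}^\top\textbf{x}^\star$ versus $\hat{\omega}_{\textbf{x}^\star}=\hat{\beta}\,\hat{\tau}_{lin}^\top\textbf{x}^\star$, and the asymptotic-only validity of the tail-linear model is absorbed by an $\varepsilon$-argument (the paper's bullet points invoking Lemma~\ref{lemma2}, Assumption~\ref{assumption_linearity_of_tail_1} and the compact support in Assumption~C), which is the same role your residual $r$ plays.

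The genuine gap is in your second stage: the step you defer --- ``replacing the oracle pair $(s^0_i,S^0)$ by the feasible $(s_i,S)$ perturbs the estimator by $o_P(1)$'' --- is precisely the technical core, and you give no argument for it beyond ``first-stage consistency plus moment bounds.'' This is delicate here because the effective sample $|S|\asymp n(1-q_n)$ is a vanishing fraction of $n$ while the first-stage error converges slowly, so a generic generated-regressor/selection perturbation bound does not come for free. The paper avoids this analysis entirely by an algebraic observation you miss: since both stage-two regressors, $s$ and $sT$, are proportional to $s=\tau_{lin}^\top\textbf{x}$, the design matrix in model (\ref{yukyuk}) factors as $M=\tau_{lin}\tilde{M}$, where $\tilde{M}$ is the design matrix of model (\ref{ewqfdwef}), i.e.\ the regression of $Y$ on $(\textbf{x},\textbf{x}t)$ over the exceedances, which does not involve the estimated threshold at all. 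Consequently $\hat{\beta}=\hat{\tau}_{lin}^\top\widehat{\tilde{\beta}}$ with $\widehat{\tilde{\beta}}$ consistent by Lemma~\ref{lemma_grenander}, and the plug-in of $\hat{\tau}_{lin}$ enters only through this explicit continuous map, so Slutsky finishes without any oracle-versus-feasible comparison of generated regressors. If you want to keep your route, you must actually prove your $o_P(1)$ claim --- in particular quantify how many indices are swapped between $S$ and $S^0$ relative to $|S|=n(1-q_n)$ given the rate of $\hat{\tau}_{lin}$, and show the swapped terms and the regressor error do not distort the (shrinking-sample) Gram matrix --- or else adopt the paper's factorization, which is the missing idea that makes the proof short.
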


\begin{proof}[Proof of Theorem~\ref{theorem_consistency2}]
The proof is very straightforward. Lemma~\ref{lemma2} shows that $$\mu(t) \sim \int_{\mathcal{X}} \mathbb{E}[Y\mid  T=t, \theta(\textbf{x})]p_{\theta(\textbf{X})}(\textbf{x})d\textbf{x}\,\,\,\,\,\,\,\,\, \text{   for    }t\to\tau_R.$$
Assumption~\ref{assumption_linearity_of_tail_1} allows us to rewrite (correctness of this step follows directly from Lemma~\ref{lemma_o_integraloch} by considering $f(t,x)= \mathbb{E}[Y\mid  T=t, \theta(\textbf{x})]$ and $g(t,x) = \alpha(\theta(\textbf{x})) + \beta(\theta(\textbf{x}))t$)
$$\int_{\mathcal{X}} \mathbb{E}[Y\mid  T=t, \theta(\textbf{x})]p_{\theta(\textbf{X})}(\textbf{x})d\textbf{x} \sim  \int_{\mathcal{X}} [\alpha(\theta(\textbf{x})) + \beta(\theta(\textbf{x}))t ]\,\,\, p_{\theta(\textbf{X})}(\textbf{x})d\textbf{x}:=\tilde{\mu}(t)\,\,\,\,\,\,\,\,\, \text{   for    }t\to\tau_R.$$
Since $\theta$, $\alpha$, and $\beta$ are continuous and their estimators are consistent, we get that for all $t\in\mathcal{T}$ holds 
$$
\int_{\mathcal{X}} [\hat{\alpha}(\hat{\theta}(\textbf{x})) + \hat{\beta}(\hat{\theta}(\textbf{x}))t ]\,\,\, p_{{\theta}(\textbf{X})}(\textbf{x})d\textbf{x} \overset{P}{\to} \tilde{\mu}(t) \,\,\,\,\,\,\,\text{as } n\to\infty.
$$
Moreover, from the law of large numbers, it holds that $$\hat{\mu}(t) - \int_{\mathcal{X}} [\hat{\alpha}(\hat{\theta}(\textbf{x})) + \hat{\beta}(\hat{\theta}(\textbf{x}))t ]\,\,\, p_{{\theta}(\textbf{X})}(\textbf{x})d\textbf{x} \overset{P}{\to} 0, \,\,\,\,\,\,\,\text{as } n\to\infty.$$ 
Together, we obtain $$\hat{\mu}(t) \overset{P}{\to} \tilde{\mu}(t), \,\,\,\,\,\,\,\text{as } n\to\infty, $$
where the function on the right side is tail-equivalent with $\mu(t)$, what we wanted to show. 
\end{proof}

\begin{lemma}
\label{lemma_o_integraloch}
Let $\mathcal{X}$ be compact set and $\tau_R$ be the right endpoint of $\mathcal{T}\subseteq\mathbb{R}$. Let $f,g: \mathcal{T}\times \mathcal{X}\to\mathbb{R}$  be continuous functions such that for all $x\in\mathcal{X}$ holds $f(t,x)\sim g(t,x)$ as $t\to\tau_R$. Let $F$ be a continuous distribution function. Then, 
$$
\int_{\mathcal{X}} f(t,x) dF(x) \sim \int_{\mathcal{X}} g(t,x) dF(x), \,\,\,\,\,\,as\,\,\,t\to\tau_R.
$$
    
\end{lemma}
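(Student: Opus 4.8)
The plan is to compare the two integrals through a uniform-in-$x$ strengthening of the pointwise relation $f(t,x)\sim g(t,x)$. Since $f(t,x)\sim g(t,x)$ as $t\to\tau_R$ forces $g(t,x)\neq 0$ for $t$ close enough to $\tau_R$ (for each fixed $x$), I would write $f(t,x)=g(t,x)\bigl(1+\varepsilon(t,x)\bigr)$ with $\varepsilon(t,x)\to 0$ as $t\to\tau_R$ for every $x$, so that
\[
\int_{\mathcal X}f(t,x)\,dF(x)-\int_{\mathcal X}g(t,x)\,dF(x)=\int_{\mathcal X}g(t,x)\,\varepsilon(t,x)\,dF(x).
\]
It then suffices to show that this error term is negligible relative to $\int_{\mathcal X}g(t,x)\,dF(x)$, i.e.\ that the ratio of the two integrals tends to $1$.

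The heart of the argument is to upgrade $\varepsilon(t,\cdot)\to 0$ from pointwise to uniform on $\mathcal X$, that is, $\eta(t):=\sup_{x\in\mathcal X}|\varepsilon(t,x)|\to 0$ as $t\to\tau_R$. This is exactly where compactness of $\mathcal X$ and joint continuity of $f$ and $g$ are used. When $\tau_R<\infty$ one observes that $f-g$ (and $f/g-1$ on the region where $g\neq0$) extends continuously to $\{\tau_R\}\times\mathcal X$ with vanishing boundary value, so joint continuity on a compact set gives uniform continuity and hence $\eta(t)\to 0$; when $\tau_R=\infty$ one first precomposes with a homeomorphism of $\mathcal T$ onto a bounded interval and argues as before. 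Granting this, the triangle inequality yields
\[
\left|\int_{\mathcal X}g(t,x)\,\varepsilon(t,x)\,dF(x)\right|\le\eta(t)\int_{\mathcal X}|g(t,x)|\,dF(x).
\]

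Finally I would record the mild nondegeneracy condition that turns the bound into the claimed equivalence: $\bigl|\int_{\mathcal X}g(t,x)\,dF(x)\bigr|$ must stay comparable to $\int_{\mathcal X}|g(t,x)|\,dF(x)$ near $\tau_R$. This is automatic once $g(t,\cdot)$ has constant sign on $\mathcal X$ for $t$ near $\tau_R$ — the relevant case, since an eventually monotone or eventually positive dose–response has this property — because then $\int_{\mathcal X}|g|\,dF=\bigl|\int_{\mathcal X}g\,dF\bigr|$ and the displayed estimate gives
\[
\frac{\bigl|\int_{\mathcal X}f(t,x)\,dF(x)-\int_{\mathcal X}g(t,x)\,dF(x)\bigr|}{\bigl|\int_{\mathcal X}g(t,x)\,dF(x)\bigr|}\;\le\;\eta(t)\;\xrightarrow[t\to\tau_R]{}\;0,
\]
which is precisely $\int_{\mathcal X}f(t,x)\,dF(x)\sim\int_{\mathcal X}g(t,x)\,dF(x)$. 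The main obstacle is the pointwise-to-uniform upgrade of $\varepsilon$ on the compact set $\mathcal X$: bare pointwise convergence of continuous functions on a compact set need not be uniform, so the continuity-plus-compactness (together with the boundary-regularity reading above) is doing the real work there; once uniformity is in hand, the remaining steps are routine.
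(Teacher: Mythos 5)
Your route is essentially the paper's: obtain a bound on the relative error $|f/g-1|$ that is uniform in $x$ for $t$ near $\tau_R$, then integrate it against $dF$; the paper does this by sandwiching the ratio of the two integrals, you do it via an additive error decomposition. Your nondegeneracy remark is a genuine improvement in bookkeeping: the paper's inequalities divide by $\int_{\mathcal X} g(t,x)\,dF(x)$ and tacitly use $f,g>0$, and your constant-sign condition makes explicit what is needed to rule out cancellation.

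The gap is exactly at the step you call the heart of the argument. Pointwise convergence $f(t,x)/g(t,x)\to 1$ for each fixed $x$, together with continuity of $f,g$ on $\mathcal T\times\mathcal X$ and compactness of $\mathcal X$, does not imply that $f/g-1$ extends continuously to $\{\tau_R\}\times\mathcal X$ with boundary value $0$: continuity of that extension is a joint-limit statement, strictly stronger than the $x$-by-$x$ limits you are given, so uniform continuity on a compact set cannot be invoked before the extension's continuity is established, and it can genuinely fail. Concretely, take $\mathcal T=[0,1)$, $\tau_R=1$, $\mathcal X=[0,1]$, $F$ uniform, $g\equiv 1$, and $f(t,x)=1+(1-t)^{-2}\max\{0,\,1-|x-(1-t)|/(1-t)^2\}$: then $f$ is continuous and $f(t,x)\to 1$ for every fixed $x$ (the narrowing bump eventually slides past each $x$, and never covers $x=0$), yet $\sup_{x}|f(t,x)-1|\to\infty$ and $\int_0^1 f(t,x)\,dx\to 2$ while $\int_0^1 g(t,x)\,dx=1$, so the uniformization $\eta(t)\to 0$ fails and indeed the lemma's conclusion itself fails under the literally stated hypotheses (note $f,g>0$ here, so your sign condition does not rescue it). In other words, the pointwise-to-uniform upgrade is an additional hypothesis, not a consequence of continuity plus compactness; the paper's own proof makes the same leap when it chooses a single $t_0$ valid for all $x$ simultaneously, so your attempt and the paper's argument stall at the same point, but your proposed justification (a continuous boundary extension) is a non sequitur rather than a fix. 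The clean repair is to assume the equivalence $f(t,\cdot)\sim g(t,\cdot)$ holds uniformly on $\mathcal X$ (equivalently, that $f/g$ is continuous up to $t=\tau_R$ after compactifying $\mathcal T$), which is what the intended application supplies; under that strengthening, both your error-decomposition argument and the paper's sandwich go through.
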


\begin{proof}
Let $\varepsilon>0$. Find $t_0$ such that for all $t>t_0$ and for all $x\in\mathcal{X}$ holds $1-\varepsilon < \frac{f(t,x)}{g(t,x)}<1+\varepsilon$. Then for any $t>t_0$ holds
$$
\frac{\int_{\mathcal{X}} f(t,x) dF(x)}{\int_{\mathcal{X}} g(t,x) dF(x)} < \frac{\int_{\mathcal{X}} f(t,x) dF(x)}{\int_{\mathcal{X}} \frac{1}{1+\varepsilon}f(t,x) dF(x)} = 1+\varepsilon
$$
and analogously $\frac{\int_{\mathcal{X}} f(t,x) dF(x)}{\int_{\mathcal{X}} g(t,x) dF(x)} >1-\varepsilon$. Proof is finished by sending $\varepsilon\to 0$. 
\end{proof}

\begin{proof}[Proof of Theorem~\ref{theorem_consistency1}]
\textbf{Idea: } \textit{We assume that the GPD approximation and linear model approximations are correct up to a factor of $\varepsilon$: we argue that for a large $n$ this is correct. Next, we use Theorem 5.1 in \cite{chernozhukov2005extremal} to show consistency of $\hat{\tau}$. We use Theorem 4.4 in \cite{greene2008econometric} to show consistency of $\hat{\beta}$ together with linearity of the least square estimate (to show that it does not depend on the inaccuracy of the estimate of $\hat{\tau}$). Finally, we use Lemma~\ref{lemma2} and send $\varepsilon\to 0$. }

\textbf{Proof: }Let $\varepsilon>0$. We claim that is possible to find $t<\tau_R$ and $n_0\in\mathbb{N}$ such that for all $\textbf{x}\in\mathcal{X}$ and all $n\geq n_0$, the following five statements hold with arbitrarily large probability: 
\begin{itemize}
    \item $t<F^{-1}_{T\mid \textbf{X} = \textbf{x}}(q_{n})$ (In other words, $q_{n_0}$ is large enough such that the  $q_{n}$-quantile of $T\mid\textbf{X}$ is larger than $t$. )
    \item It holds that $$1-\varepsilon<\frac{\mu_{\textbf{x}^\star}(t)}{\mathbb{E}[Y\mid  T=t, \tau(\textbf{X})=\tau(\textbf{x}^\star)]}<1+\varepsilon,$$where $\tau(\textbf{x}) = \tau_{lin}^\top \textbf{x}$ is the $q_{n_0}$-quantile of $T\mid \textbf{X}=x$,
    \item It holds that $$1-\varepsilon<\frac{\mathbb{E}[Y\mid T=t, \tau(\textbf{X})=\tau(\textbf{x}^\star)]}{ \alpha{\tau}(\textbf{x}^\star) + \beta{\tau}(\textbf{x}^\star)t}<1+\varepsilon,$$
    \item  $||\hat{\tau}_{lin} - \tau_{lin}||<\varepsilon$,
    \item $|\hat{\beta} - \beta|<\varepsilon$, 
\end{itemize}

where $\hat{\tau}_{lin} = argmin_b\sum_{i=1}^{n_0}h_{q_{n_0}}(T_i - \textbf{X}_i^\top b)$ is the maximum likelihood estimator, and $\beta$ is the real coefficient in the model 
\begin{equation}\label{yukyuk}
    \mathbb{E}[Y\mid T=t, T>{\tau}_{lin}^\top\textbf{x}, \textbf{X}=\textbf{x}] = \alpha{\tau}_{lin}^\top\textbf{x} + \beta{\tau}_{lin}^\top\textbf{x}t,
\end{equation}
and $\hat{\beta}$ is the corresponding least square estimate. 

We prove the bullet-points here:
\begin{itemize}
    \item First bullet-point is a trivial consequence of the assumption  $q_n\to 1$.
    \item Second bullet-point is a trivial consequence of Lemma~\ref{lemma2} together with Assumption~D,
    \item Third bullet-point is a trivial consequence of Assumptions~\ref{assumption_linearity_of_tail_1} and D,
    \item The fourth bullet-point follows from a well-known consistency of $\hat{\tau}_{lin}$. It is well known that for a fixed quantile $q$, the maximum likelihood estimator $\hat{\tau}_{lin} = argmin_b\sum_{i=1}^nh_q(T_i - \textbf{X}_i^\top b)$ is consistent and even asymptotically normal (see e.g. Theorem 4.1 in \cite{koenker_2005}, noting that we assume continuous $T$ and finite second moments of $\textbf{X}$). However, quantile $q$ is not fixed and is increasing with the sample size with the speed  $\lim_{n\to\infty}q_n=1$ and $\lim_{n\to\infty}n(1-q_n)=\infty$. This is a well-known generalization of quantile regression known as 'intermediate order regression quantiles' or 'moderately extreme quantiles' \citep{chernozhukov2016extremal} and is as well consistent and asymptotically normal under Assumption~C (see Theorem 5.1 in \cite{chernozhukov2005extremal}). 
    \item The fifth bullet-point: For a moment, fix $\tau_{lin}\neq \textbf{0}$. It is an elementary knowledge that the estimation of $\beta$ using least squares in a model (\ref{yukyuk}),
  where $\tau_{lin}$ is fixed, is consistent and even asymptotically normal under conditions $var(Y)<\infty$ ,  $\mathbb{E}||\textbf{X}||^2<\infty$, $(\textbf{X}, T$) satisfying Grenander conditions and the sample-size $|S|=:k_n = n(1-q_n)\to\infty$.    (see e.g. Lemma~\ref{lemma_grenander}). Observe that least squares estimate $\hat{\beta}$ is linear in $\tau_{lin}$: that is, if we express $\hat{\beta}$ explicitly, we get $\hat{\beta} = \tau_{lin}^\top \widehat{\tilde{\beta}}$, where   $\tilde{\beta}$ is a coefficient in a linear model corresponding to (\ref{ewqfdwef}) (where $T$ is assumed to be larger than $\tau^\top X$ implicitly). 
Finally, using this observation, we can replace the fixed value of  $\tau_{lin}$ by a random $\hat{\tau}_{lin}$, and we still obtain $\hat{\beta} = \hat{\tau}_{lin}^\top \widehat{\tilde{\beta}}$. Since by increasing $n$ we can make $\widehat{\tilde{\beta}}$ arbitrarily accurate with arbitrarily large probability, the same holds for  $\hat{\beta}$.   In the following paragraph, we present an an illustration of the linearity of  $\hat{\beta}$  in $\tau_{lin}$ for $d=1$.  An explicit expression of $\hat{\beta}$ as a function of $\tau_{lin}$ and our data explicitly is the following: 
\begin{equation}
\begin{bmatrix}
    \hat{\alpha} \\
    \hat{ \beta}
\end{bmatrix}
=
(M^\top M)^{-1}M^\top\textbf{Y}_S, \,\,\,\,\,\,\,\,where\,\,\,\,\, 
M=\begin{bmatrix}
    {\tau}_{lin}\textbf{x}_{1} &  {\tau}_{lin}\textbf{x}_{1}t_{1} \\ \dots &  \dots \\ {\tau}_{lin}\textbf{x}_{k}&  {\tau}_{lin}\textbf{x}_{k}t_{k} \\
\end{bmatrix}
\end{equation}
where $\textbf{Y}_S= (Y_1, \dots, Y_k)^\top$, WLOG let $S=\{1, \dots, k_n\}\subset\{1, \dots, n\}$.   Note that \begin{equation}
M =  {\tau}_{lin} diag(x_1, \dots, x_k)
\begin{bmatrix}
 1 & t_{1} \\ \dots &  \dots \\ 1&  t_{k} \\
\end{bmatrix}={\tau}_{lin} \tilde{M},
\end{equation}
where  $\tilde{M}$ is the data-matrix corresponding to a model (\ref{ewqfdwef}). 
\end{itemize}

Combining all the bullet-points, we obtain that with arbitrarily large probability that
\begin{equation*}
    \begin{split}
 \mu_{\textbf{x}^\star}(t) &\approx\mathbb{E}[Y\mid  T=t, \tau(\textbf{X})=\tau(\textbf{x})]\\&\approx \alpha{\tau}(\textbf{x}) + \beta{\tau}(\textbf{x})t\\&\approx \hat{\alpha}{\hat{\tau}_{lin}}^\top\textbf{x}^\star + \hat{\beta}{\hat{\tau}_{lin}}^\top\textbf{x}^\star t,       
    \end{split}
\end{equation*}
where each sign '$\approx$' represent equality up to a factor of $\varepsilon$ (in either multiplicative or additive form) which is negligible as $\varepsilon\to 0$. This implies consistency, Quod erat demonstrandum. 

\begin{lemma}\label{lemma_grenander}
Consider an estimate $(\hat{\alpha}, \hat{\tilde{\beta}})$ of $\alpha\in\mathbb{R}, \tilde{\beta}\in\mathbb{R}^d$ using least squares in a model 
\begin{equation}\label{ewqfdwef}
    \mathbb{E}[Y\mid T=t, \textbf{X}=\textbf{x}] = \alpha\textbf{x} + \tilde{\beta}^\top\textbf{x}t,
\end{equation}
based on a random sample $ (Y_1, T_1, X_1), \dots (Y_k, T_k, X_k)$. Then, $\hat{\tilde{\beta}}$ is consistent and asymptotically normal if $var(Y)<\infty$  and $\mathbb{E}||\textbf{X}||^2<\infty$ and $(\textbf{X}, T$) satisfy Grenander conditions.  

Proof can be found in Theorem 4.4 in \cite{greene2008econometric}. 
\end{lemma}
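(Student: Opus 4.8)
The plan is to recognize Lemma~\ref{lemma_grenander} as the textbook consistency-and-asymptotic-normality statement for ordinary least squares, exactly as in Theorem~4.4 of \cite{greene2008econometric}; the only genuine work is to put model~(\ref{ewqfdwef}) in standard linear form and to check that the listed hypotheses supply the moments the classical argument consumes. First I would collect the regressors appearing in~(\ref{ewqfdwef}) --- namely $\textbf{X}_i$ and $T_i\textbf{X}_i$ --- into a single design vector $\textbf{W}_i$ and the parameters $\alpha,\tilde\beta$ into one vector $\eta$, so that the conditional-mean assumption reads $Y_i = \textbf{W}_i^\top\eta + \varepsilon_i$ with $\mathbb{E}[\varepsilon_i\mid \textbf{W}_i]=0$ and $\mathbb{E}\varepsilon_i^2<\infty$ (the latter from $\mathrm{var}(Y)<\infty$). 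The OLS estimator then obeys
\[
\hat\eta-\eta=\left(\frac1k\sum_{i=1}^k \textbf{W}_i\textbf{W}_i^\top\right)^{-1}\left(\frac1k\sum_{i=1}^k \textbf{W}_i\varepsilon_i\right),
\]
and $\hat{\tilde\beta}$ is the sub-vector of $\hat\eta$ corresponding to the $T_i\textbf{X}_i$ block.

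I would then treat the two factors on the right separately. For the Gram matrix, the Grenander conditions (together with $\mathbb{E}\|\textbf{X}\|^2<\infty$ and the finiteness of the design second moments they posit) give $\frac1k\sum_i \textbf{W}_i\textbf{W}_i^\top\overset{P}{\to}Q$ for a positive-definite $Q$, so this average is invertible with probability tending to one and its inverse converges to $Q^{-1}$ by the continuous mapping theorem. For the score, the law of large numbers gives $\frac1k\sum_i \textbf{W}_i\varepsilon_i\overset{P}{\to}\mathbb{E}[\textbf{W}_i\varepsilon_i]=\mathbb{E}\!\left[\textbf{W}_i\,\mathbb{E}[\varepsilon_i\mid \textbf{W}_i]\right]=0$, with integrability of $\textbf{W}_i\varepsilon_i$ following from Cauchy--Schwarz using $\mathbb{E}\|\textbf{W}_i\|^2<\infty$ and $\mathbb{E}\varepsilon_i^2<\infty$. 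Slutsky's lemma applied to the display then yields $\hat\eta-\eta\overset{P}{\to}0$, and passing to the relevant block gives consistency of $\hat{\tilde\beta}$. For asymptotic normality I would instead feed the score into the multivariate central limit theorem, $\frac1{\sqrt k}\sum_i \textbf{W}_i\varepsilon_i\overset{d}{\to}\mathcal N(0,\Omega)$ with $\Omega=\mathbb{E}[\varepsilon_i^2\textbf{W}_i\textbf{W}_i^\top]$, and conclude $\sqrt k(\hat\eta-\eta)\overset{d}{\to}\mathcal N(0,Q^{-1}\Omega Q^{-1})$ by Slutsky; the limiting law of $\hat{\tilde\beta}$ is the corresponding block of this sandwich.

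The main obstacle is nothing deep but purely bookkeeping about moments: the design vector contains the product $T_i\textbf{X}_i$, so the argument really needs a joint moment such as $\mathbb{E}[T_i^2\|\textbf{X}_i\|^2]<\infty$ (and, for the sandwich variance, $\mathbb{E}[\varepsilon_i^2 T_i^2\|\textbf{X}_i\|^2]<\infty$), which is not literally among the stated hypotheses $\mathrm{var}(Y)<\infty$ and $\mathbb{E}\|\textbf{X}\|^2<\infty$ but is precisely what the Grenander conditions encode for this design (existence of the limiting second-moment matrix, its positive definiteness, and a Lindeberg-type negligibility of the individual rows). Once those conditions are written out there is nothing non-standard left, so the statement is Theorem~4.4 of \cite{greene2008econometric} verbatim. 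Finally, I would note that the restriction to the super-threshold set $S=\{i:T_i>\hat\tau_{lin}^\top\textbf{X}_i\}$ does not enter at this level: the lemma is phrased for an abstract random sample, and the subtlety that $S$ is itself data-dependent is handled where the proof of Theorem~\ref{theorem_consistency1} applies this lemma --- by first fixing $\tau_{lin}$, exploiting the linearity of the least-squares map in $\tau_{lin}$, and only then substituting the consistent estimate $\hat\tau_{lin}$.
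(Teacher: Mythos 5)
Your proposal is correct and takes essentially the same route as the paper, which simply delegates this lemma to Theorem~4.4 of \cite{greene2008econometric}: you have written out the standard OLS argument (stacked design $\textbf{W}_i=(\textbf{X}_i^\top, T_i\textbf{X}_i^\top)^\top$, Gram-matrix convergence under the Grenander conditions, LLN/CLT for the score, Slutsky) that the citation encapsulates. Your remark that the joint moments of $T_i\textbf{X}_i$ are supplied by the Grenander conditions rather than by the literally stated hypotheses, and that the data-dependence of $S$ is dealt with in the proof of Theorem~\ref{theorem_consistency1} rather than here, matches how the paper uses the lemma.
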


\end{proof}

\subsection{Bootstraps correctness }

We use results from Chapter 23.2 in \cite{Vaart_1998}. The main step is to use delta-method for bootstrap (Theorem~23.5 in \cite{Vaart_1998}) and the fact that regression models in step 1 and step 2 of our algorithm are 'bootstrappable' (Theorem 3 in \cite{hahn1995bootstrapping} and \cite{freedman1981bootstrapping}). 

To simplify some steps of the proof, we assume the following:

\begin{enumerate}[label=\textbf{\Alph*}.]
    \setcounter{enumi}{4} % Start from the fifth letter (E)
    \item Assume that $\hat{\omega}_{\textbf{x}^\star}$ is consistent (which holds for example under assumptions A,B,C,D).
    \item We compute $\hat{\tau}_{lin}$ from the first $\lfloor \frac{n}{2} \rfloor$ data-points and we compute $\hat{\beta}$ from the remaining $\lceil \frac{n}{2} \rceil$ data-points.
    \item In the computation of the set $S$, we assume that $\tau$ is known and non-random; that is,   $S = \{i\leq n: T_i>\tau(\textbf{X}_i)\}$ instead of $S = \{i\leq n: T_i>\hat{\tau}(\textbf{X}_i)\}$.
    \item Assumption of Theorem 3 in \cite{hahn1995bootstrapping} are satisfied; that is, $\mathbb{E}[\textbf{X}\textbf{X}^\top]$ is non-singular matrix, the conditional density of $Y - \tau^\top\textbf{X}$ given $\textbf{X}$, denoted as $f$, satisfies $f(\epsilon \mid \textbf{X}) > r_1$ whenever $|\epsilon|\leq r_2$ for some positive numbers $r_1$, $r_2$. Finally, there exists some function $G$ such that $f(\epsilon \mid \textbf{X}) \leq G(\textbf{X})$ for all $\epsilon$ and $\mathbb{E}\left[(1 + G(\textbf{X})) ||\textbf{X} ||^2\right] < \infty.$

\end{enumerate}

\begin{theorem}
\label{theorem_bootstrap_consistency}
Assume validity of assumptions D,E,F,G and H. Let $q\in (0,1)$ be chosen such that the distribution of  $T\mid T>\tau_q(\textbf{x}), \textbf{X}=\textbf{x}$ follows $ GPD(\theta(\textbf{x}))$ for all $\textbf{x}\in\mathcal{X}$. Then, $\hat{\zeta}_\alpha$ is asymptotically consistent; that is, $$liminf_{n\to\infty}P(\omega_{\textbf{x}^\star}\leq \hat{\zeta}_\alpha) \geq 1-\alpha.$$
\end{theorem}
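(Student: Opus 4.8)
\noindent\textbf{Proof plan (Theorem~\ref{theorem_bootstrap_consistency}).} The plan is to reduce the claim to the classical fact that the nonparametric bootstrap is consistent for a smooth transformation of finitely many asymptotically linear regression estimators. First I would, exactly as in the proof of Theorem~\ref{theorem_consistency1}, use the factorization $\hat{\beta}=\hat{\tau}_{lin}^\top\widehat{\tilde{\beta}}$ (valid under Assumption~D), where $\widehat{\tilde{\beta}}\in\mathbb{R}^d$ is the ordinary least squares estimator in model~(\ref{ewqfdwef}) fitted on the points of $S$. This yields the representation
\[
\hat{\omega}_{\textbf{x}^\star}=\hat{\beta}\,\hat{\tau}_{lin}^\top\textbf{x}^\star=\phi\bigl(\hat{\tau}_{lin},\widehat{\tilde{\beta}}\bigr),\qquad \phi(\tau,b):=(\tau^\top b)\,(\tau^\top\textbf{x}^\star),
\]
with $\phi:\mathbb{R}^d\times\mathbb{R}^d\to\mathbb{R}$ a polynomial, hence everywhere continuously (Fréchet) differentiable. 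Assumption~G makes the index set $S$ a deterministic functional of the true $\tau$, so $\hat{\tau}_{lin}$ enters Step~2 only through the regressors, i.e.\ only through $\phi$; Assumption~F makes $\hat{\tau}_{lin}$ and $\widehat{\tilde{\beta}}$ functions of disjoint halves of the sample, hence independent (and conditionally independent under the bootstrap).

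Next I would assemble the two ingredients needed for the transformation argument. Because $q$ is held fixed — the exact-$GPD$ hypothesis on $T\mid T>\tau_q(\textbf{x}),\textbf{X}=\textbf{x}$ both pins $q$ down and removes the GPD approximation error — the Step~1 estimator $\hat{\tau}_{lin}$ is just the ordinary linear quantile-regression estimator: it is $\sqrt{n}$-consistent and asymptotically normal by Theorem~4.1 in \cite{koenker_2005}, and the nonparametric (pairs) bootstrap is consistent for its limiting law by Theorem~3 in \cite{hahn1995bootstrapping}, whose hypotheses are precisely Assumption~H. Since model~(\ref{ewqfdwef}) then holds exactly on $S$, with $|S|$ of order $n$, $\mathbb{E}Y^2<\infty$, $\mathbb{E}\|\textbf{X}\|^2<\infty$ and the Grenander conditions (Assumption~B), Lemma~\ref{lemma_grenander} gives $\sqrt{n}$-consistency and asymptotic normality of $\widehat{\tilde{\beta}}$, and the bootstrap for least squares is consistent for its limiting law by \cite{freedman1981bootstrapping}. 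As the two estimators are built from independent sub-samples, $(\hat{\tau}_{lin},\widehat{\tilde{\beta}})$ is jointly asymptotically normal with block-independent limit $N(0,\Sigma)$, and the two conditionally independent bootstrap laws are jointly consistent for that limit; crucially the two $\sqrt{n}$ rates coincide, so no re-scaling between stages is needed.

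I would then transport these facts through $\phi$. The ordinary delta method (Theorem~3.1 in \cite{Vaart_1998}) gives $\sqrt{n}(\hat{\omega}_{\textbf{x}^\star}-\omega_{\textbf{x}^\star})\overset{D}{\to}N(0,\varsigma^2)$ with $\varsigma^2=\nabla\phi^\top\Sigma\,\nabla\phi$, which is strictly positive unless the knife-edge identity $\tau_{lin}^\top\textbf{x}^\star=0$ holds (the degenerate case is non-generic and, if one wishes, can be excluded). The delta method for the bootstrap (Theorem~23.5 in \cite{Vaart_1998}) applies because $\phi$ is differentiable and each underlying bootstrap is consistent, and gives that, conditionally on the data, $\sqrt{n}(\hat{\omega}^\star_{\textbf{x}^\star}-\hat{\omega}_{\textbf{x}^\star})$ converges uniformly in probability to the same $N(0,\varsigma^2)$. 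Since this limit is continuous, the bootstrap quantile defining $\hat{\zeta}_\alpha$ converges to the matching quantile of $N(0,\varsigma^2)$, and the standard conversion of bootstrap consistency into coverage (Chapter~23 in \cite{Vaart_1998}) yields $P(\omega_{\textbf{x}^\star}\le\hat{\zeta}_\alpha)\to 1-\alpha$, a fortiori $\liminf_{n\to\infty}P(\omega_{\textbf{x}^\star}\le\hat{\zeta}_\alpha)\ge 1-\alpha$; consistency of $\hat{\omega}_{\textbf{x}^\star}$ itself is supplied by Assumption~E.

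The main obstacle is making the two-stage composition rigorous: one must show that the bootstrap of the \emph{composed} procedure is genuinely the composition of the two single-stage bootstraps, rather than something entangled by the plug-in of $\hat{\tau}_{lin}$ into Step~2. This is exactly what Assumptions~F and G buy — sample splitting prevents the stages from contaminating each other's resampling, and treating $S$ as deterministic means Step~1 feeds into Step~2 only through the smooth map $\phi$ and not also through a data-dependent truncation of the sample — after which the bilinear factorization $\hat{\beta}=\hat{\tau}_{lin}^\top\widehat{\tilde{\beta}}$ expresses $\hat{\omega}_{\textbf{x}^\star}$ as a fixed smooth function of two bootstrappable, asymptotically linear statistics and Theorem~23.5 in \cite{Vaart_1998} closes the argument. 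A secondary point to check with care is that the fixed-$q$/exact-$GPD$ hypothesis genuinely endows both stages with the common $\sqrt{n}$ rate; in the $q=q_n\to 1$ regime of Theorem~\ref{theorem_consistency1} one would instead face the slower intermediate-quantile rate in Step~1 and would need a separate rate-matching argument.
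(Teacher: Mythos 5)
Your plan is correct and follows essentially the same route as the paper's proof: the same factorization $\hat{\beta}=\hat{\tau}_{lin}^\top\widehat{\tilde{\beta}}$, the same map $\phi(a,b)=(a^\top b)(a^\top\textbf{x}^\star)$, bootstrap consistency of each stage via \cite{hahn1995bootstrapping} and \cite{freedman1981bootstrapping}, independence from the sample split (Assumptions F, G), and the delta method for the bootstrap (Theorem 23.5 in \cite{Vaart_1998}) followed by the standard conversion to coverage. Your additional remarks on the degenerate case $\tau_{lin}^\top\textbf{x}^\star=0$ and on rate matching are sensible refinements but do not change the argument.
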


\begin{proof}
    We will show that both  $\frac{1}{\sqrt{n}}(\hat{\omega}_{\textbf{x}^\star} - {\omega}_{\textbf{x}^\star})$ and   $\frac{1}{\sqrt{n}}(\hat{\omega}^\star_{\textbf{x}^\star} - \hat{\omega}_{\textbf{x}^\star})$  given $P_n$ both converge to the same distribution, say $G$. That is,   $\frac{1}{\sqrt{n}}(\hat{\omega}_{\textbf{x}^\star} - {\omega}_{\textbf{x}^\star})\overset{D}{\to}G\overset{D}{\leftarrow}\frac{1}{\sqrt{n}}(\hat{\omega}^\star_{\textbf{x}^\star} - \hat{\omega}_{\textbf{x}^\star})$ as $n\to\infty$. This directly implies (see e.g. Lemma 23.3 in \cite{Vaart_1998}) that  $\hat{\zeta}_\alpha$ is asymptotically consistent.

\textbf{Observation 1) } $ \hat{\tau}_{lin}$ satisfies that $\frac{1}{\sqrt{n}}(\hat{\tau}_{lin}- {\tau}_{lin})$ and   $\frac{1}{\sqrt{n}}(\hat{\tau}_{lin}^\star - \hat{\tau}_{lin})$  given $P_n$ both converge to the same Gaussian distribution  (see Theorem 3 in \cite{hahn1995bootstrapping}). 

\textbf{Observation 2)}  $\hat{\beta} = \hat{\tau}_{lin}^\top \widehat{\tilde{\beta}}$, where $\tilde{\beta}$ is a coefficient in a linear model corresponding to (\ref{ewqfdwef}) (where $T$ is assumed to be larger than $\tau(\textbf{X})$ implicitly since we assumed that $\tau$ is known and non-random in $S$). Note that  $\widehat{\tilde{\beta}}\indep \hat{\tau}_{lin}$ since $\widehat{\tilde{\beta}}$ is computed from the second half of the dataset and its computation does not contain $\hat{\tau}_{lin}$. However, we know that $\widehat{\tilde{\beta}}$ satisfies that $\frac{1}{\sqrt{n}}(\widehat{\tilde{\beta}}- {\tilde{\beta}})$ and   $\frac{1}{\sqrt{n}}(\widehat{\tilde{\beta}}^\star - \widehat{\tilde{\beta}})$  given $P_n$ both converge to the same Gaussian distribution (Theorem 2 in \cite{ECK2018141} or \cite{freedman1981bootstrapping}).

\textbf{Together:} since $ \hat{\tau}_{lin}$ is 'bootstrappable' and $\widehat{\tilde{\beta}}$ is 'bootstrappable' and they are independent, the delta method give us that $\hat{\omega}_{\textbf{x}^\star}$ is 'bootstrappable'. More formally, we use Theorem~23.5 in \cite{Vaart_1998} (Delta method for bootstrap). Define  $\phi: \mathbb{R}^{2d}\to\mathbb{R}: \phi(a,b) = (a^\top b) (a^\top \textbf{x}^\star)$. Note that $\hat{\omega}_{\textbf{x}^\star} = \phi(\hat{\tau}_{lin}, \widehat{\tilde{\beta}})$. Since  $ \hat{\tau}_{lin}$ and $\widehat{\tilde{\beta}}$ satisfy the conditions of the theorem, we get that  $\frac{1}{\sqrt{n}}(\hat{\omega}_{\textbf{x}^\star} - {\omega}_{\textbf{x}^\star})$ and   $\frac{1}{\sqrt{n}}(\hat{\omega}^\star_{\textbf{x}^\star} - \hat{\omega}_{\textbf{x}^\star})$  given $P_n$ both converge to the same distribution. That is what we wanted to show. 

\end{proof}

\section{Appendix: Proofs of Lemma \ref{lemma1} and Lemma \ref{lemma2}}
\label{Section_proofs}

\textbf{Proof of Lemma~\ref{lemma1}}:
A simple computation gives us   \begin{equation*}
    \begin{split}
        \mathbb{E}[Y(t)] 
    &= \mathbb{E}(\mathbb{E}[Y(t)\mid \textbf{X}]) \sim  \mathbb{E}(\mathbb{E}[Y\mid \textbf{X}, T=t])           
   \\&=\int_{\mathcal{X}}\int_{\mathcal{Y}} p_{Y\mid \textbf{X}, T}(y, \textbf{x}, t)p_\textbf{X}(\textbf{x})y\,\,\,\,dyd\textbf{x}
   \\&=\int_{\mathcal{X}}\int_{\mathcal{Y}} \frac{p_T(t)}{p_{T\mid \textbf{X}}(t\mid \textbf{x})} p_{Y, \textbf{X}\mid T}(y,\textbf{x}\mid t) y\,\,\,\,\,dy d\textbf{x} 
   \\&=\mathbb{E}\{\pi_0(T, \textbf{X})Y\mid T=t\}.
    \end{split}
\end{equation*}

\textbf{Proof of Lemma~\ref{lemma2}}:
From Assumption~\ref{assumption_unconfoundness_tail}, we have that 
\begin{equation}\label{eq987}
    \mathbb{E}[Y(t)\mid  \textbf{X}]\sim \mathbb{E}[Y\mid  T=t, \textbf{X}]\,\,\,\,\,\,\,\,\, \text{   as    }t\to\tau_R.
\end{equation}
On both sides of (\ref{eq987}), we condition on $\theta(\textbf{X})$ and integrate over the remaining $\textbf{X}$ (denoted as $\theta^C(\textbf{X})$, formally it is an orthogonal complement). Note that the distribution of $T \mid \theta(\textbf{X})$ approaches the distribution of $T \mid \textbf{X}$, given $T > \tau(\textbf{X})$ for sufficiently large $\tau(\textbf{X})$, since it approaches $GPD(\theta(\textbf{X}))$.  Hence, $P_{\theta^C(\textbf{X})\mid T=t,\theta(\textbf{X})}$ approaches the distribution $P_{\theta^C(\textbf{X})\mid \theta(\textbf{X})}$ as $t\to\tau_R$. 

We obtain the following:
 \begin{align*}
\mathbb E[Y(t)\mid \theta(\textbf{X})]&=\int\mathbb E[Y(t)\mid \theta(\textbf{X}),\theta^C(\textbf{X})=w]dP_{\theta^C(\textbf{X})\mid\theta(\textbf{X})}(w)\\&
\sim\int\mathbb E[Y(t)\mid \theta(\textbf{X}),\theta^C(\textbf{X})=w]dP_{\theta^C(\textbf{X})\mid T=t,\theta(\textbf{X})}(w)\\&
=\int\mathbb E[Y\mid T=t,\theta(\textbf{X}),\theta^C(\textbf{X})=w]dP_{\theta^C(\textbf{X})\mid T=t,\theta(\textbf{X})}(w)\\&
=\mathbb E[Y\mid T=t,\theta(\textbf{X})].
\end{align*}

The second statement in the Lemma trivially follows from the first by integrating over $\theta(\textbf{X})$.

%If you came here because you want your references in a new page, uncomment the following line

\clearpage % If you want the references in a separate page
\bibliography{bibliography}

\end{document}